\newcommand{\RN}[1]{%
  \textup{\uppercase\expandafter{\romannumeral#1}}%
}
\providecommand{\bz}{\bold{z}}
\newtheorem{claim}{Claim}
\newtheorem{proposition}{Proposition}
\newtheorem{theorem}{Theorem}
\newtheorem{corollary}{Corollary}
\newtheorem{lemma}{Lemma}
\newtheorem{assumption}{Assumption}
\newtheorem{observation}{Observation}
\newtheorem{example}{Example}
\theoremstyle{definition}
\newtheorem{remark}{Remark}
\newtheorem{definition}{Definition}
\DeclareMathOperator*{\argmax}{argmax}
\DeclareMathOperator*{\argmin}{argmin}
\DeclareMathOperator*{\supp}{supp}
\DeclareMathOperator*{\marg}{marg}
\title{Games of Incomplete Information\\ Played By Statisticians} 
\author{Annie Liang\thanks{Department of Economics, University of Pennsylvania. Email: anliang@upenn.edu. I am especially grateful to Drew Fudenberg for his guidance on this paper. This paper also benefitted from useful comments and suggestions by Jetlir Duraj, Siddharth George, Ben Golub, Jerry Green, Philippe Jehiel, Scott Kominers, David Laibson, Jonathan Libgober, Erik Madsen, Stephen Morris, Sendhil Mullainathan, Mariann Ollar, Harry Pei, Andrei Shleifer, Dov Samet, Tomasz Strzalecki, Satoru Takahashi, Anton Tsoy, and Muhamet Yildiz.}}
\begin{document}
%
%
%
%
%
%
%






\pagenumbering{arabic}
\maketitle


\linespread{1.2}

\begin{abstract}


Players are statistical learners who learn about payoffs from data. They may interpret the same data differently, but have common knowledge of a class of learning procedures. I propose a metric for the analyst's ``confidence" in a strategic prediction, based on the probability that the prediction is consistent with the realized data. The main results characterize the analyst's confidence in a given prediction as the quantity of data grows large, and provide bounds for small datasets. The approach generates new predictions, e.g. that speculative trade is more likely given high-dimensional data, and that coordination is less likely given noisy data.


\end{abstract}


\linespread{1.4}

\section{Introduction}

Predictions of play in incomplete information games depend crucially on the beliefs of the agents, but we rarely know what those beliefs are. A standard approach to modeling beliefs assumes that players share a common prior belief over states of the world, and form posterior beliefs using Bayesian updating. Under this approach, posterior beliefs that are commonly known must be identical \citep{disagree}, and repeated communication of beliefs eventually leads to agreement \citep{geanakoplos}. These implications conflict not only with considerable empirical evidence of public and persistent disagreement, but also with the more basic experience that individuals interpret the same information in different ways.\footnote{For example, in financial markets, individuals publicly disagree in their interpretations of earnings announcements \citep{kandel}, valuations of financial assets \citep{complexasset}, and forecasts for inflation \citep{inflation}.} 

This paper relaxes the assumption of a common prior by supposing that players are statistical learners: they form beliefs about payoff-relevant parameters based on data, but potentially disagree on how to interpret that data. I define a \emph{learning rule} to be any function  that maps data (a sequence of signals) into a belief distribution over payoff-relevant parameters (a first-order belief). Players have common knowledge of some set of reasonable learning rules---for example, these learning rules may correspond to Bayesian updating from a set of prior beliefs, or they may be maps from data to beliefs based on frequentist estimates for the unknown parameter. The special case of a singleton Bayesian learning rule returns the common prior assumption, but in general, the set of learning rules will produce different beliefs from the same data, which I interpret as the set of \emph{plausible beliefs}. I impose a key restriction to structure the approach: for any realization of the data, each player's own belief about the parameter is a plausible belief; they assign probability 1 to all other players holding plausible beliefs; and so forth. Since the set of plausible beliefs is endogenous to the (random) data, so too are the strategic predictions that are consistent with this belief restriction.

The main contribution of the paper is a proposed metric for the analyst's ``confidence" in a strategic prediction in this game. Specifically, consider the prediction that a given action is rationalizable. I quantify the analyst's confidence in this prediction via a confidence set: The upper bound of the confidence set is the probability that the action is rationalizable given \emph{some}  belief satisfying the belief restriction, and the lower bound is the probability that the prediction holds for \emph{all} beliefs satisfying the restriction. Thus, if both of these probabilities are equal to one, the analyst has maximal certainty that the action is rationalizable, and if they are both zero, he has maximal certainty that it is not.  In the intermediate cases, there is uncertainty about whether the action is rationalizable, and the confidence sets present a way to quantify the extent of that uncertainty.

The main results in this paper characterize various properties of these confidence sets, beginning with their asymptotic behaviors as the quantity of data grows large. I first show that if sets of learning rules are too large, then the confidence set may fail to be continuous at infinite data: that is, even if an action is \emph{strictly} rationalizable in the limiting infinite-data game, the analyst's confidence sets may be very different from $\{1\}$ for arbitrarily large quantities of data. Roughly, this is because the rate of convergence under different learning rules cannot be uniformly bounded, so it is always possible that some learning rule produces a belief that is very different from the others. If, however, the set of learning rules satisfy a uniform convergence property that I describe, then the following statements hold: If an action is strictly rationalizable at the limit, then the analyst's confidence set must converge to $\{1\}$ as the quantity of data gets large, and if an action is \emph{not} rationalizable in the limit, the analyst's confidence set converges to $\{0\}$. (The intermediate case, in which actions are rationalizable but not strictly rationalizable, is more subtle---see Section \ref{sec:Asymptotic} for further detail.)  

Next, I consider the setting of small sample sizes, and bound the extent to which the analyst's confidence set differs from its asymptotic limit. These bounds depend on properties of the learning environment---specifically, the quantity of data, and how fast the different learning rules jointly recover the payoff-relevant parameter---as well as on a cardinal measure for how strict the solution is at the limit. I apply these bounds to characterize confidence sets for example games and sets of learning rules. They allow us to obtain specific, quantitative, statements about confidence away from the limit of infinite data.

In some cases it is possible to go beyond these bounds, and to fully characterize the confidence set. Two such examples given in Section \ref{sec:example}, where I consider a trade game and a coordination game. I show that the proposed approach generates novel comparative statics in these games: Speculative trade is predicted to be more plausible when agents learn from higher-dimensional data, and coordination is predicted to be less likely when agents observe noisy data. These predictions---which hold even though data is public and common---are difficult to produce under assumption of a common prior. 

\subsection{Related Literature}
  
This paper builds on a literature on the role of the common prior assumption in economic theory. (See \citet{cpa} for a survey of key conceptual points.) Here I focus on an argument that even if learning does produce common priors in the long run, this does not imply that we should see common priors given a finite quantity of data, especially if that data is complex and hard to interpret. Rather than taking the limiting common prior as one that is already reached, I ask what predictions we can make while data is still being accumulated. The confidence sets introduced in this paper provide a quantitative account of whether predictions implied by a (limiting) common prior also hold for small data sets.\footnote{Other reasons that the common prior is tenuous include that the data itself may lead to incomplete learning if it is endogenously acquired, and that convergence of individual beliefs need not imply convergence in beliefs about beliefs \citep{commonlearning,yildiz}.}

This paper  also contributes to a literature on the robustness of strategic predictions to the specification of player beliefs \citep{emailgame,strategic,WY,faingold,criticaltypes} and equilibrium selection in incomplete information games \citep{globalgames,kajiimorris}. At a technical level, the actions that are rationalizable under the belief restriction that I impose are $\Delta$-rationalizable strategies of \citet{battigalli}, where the set $\Delta$ of first-order belief restrictions is endogenous to a learning process.\footnote{Such belief restrictions have also been usefully applied in the work of \citet{OllarPenta}, among others.} The permitted types converge in the \emph{uniform-weak topology}, as proposed and characterized in \citet*{faingold} and \citet*{faingold2}, and I use results about this topology to prove several of the main results.

Conceptually, the goals of the present paper differ from the previous literature in several respects: First, my focus here is not on equilibrium selection---choosing one equilibrium from a set of many---but rather on providing a metric for confidence in a given prediction. Second, in contrast to the many binary or ``qualitative" notions of robustness that have been proposed, 
this paper delivers a quantitative metric. Third, while the literature has primarily considered robustness to \emph{perturbations} of beliefs, I am interested here also in predictions that we may make for beliefs that are far from the limiting beliefs.   To discipline these beliefs, I endogenize the type space using a statistical learning foundation for belief formation. This aspect of the paper---combining learning foundations with game theoretic implications---connects to papers such as  \citet{learningbayesian}, \citet{esponda}, and \citet{stewart}, among others.\footnote{\citet{BrandenburgerFriedenbergKeisler} and \citet{BattigalliPrestipino} also motivate small type structures as emerging from learning,  although they do not explicitly model a dynamic learning process.} The modeling of agents as ``statisticians" or ``machine learners" relates to a growing literature in decision theory \citep{gilboa3,gilboa4,statisticians,pai} and game theory \citep{analogies,spiegler,competingmodels,CherrySalant,Jackson}. Of these papers, \citet{stewart} and \cite{CherrySalant} are closest: \citet{stewart} characterizes the limiting equilibria of a sequence of games in which players infer payoffs from related games, and \citet{CherrySalant} models players as statisticians who form beliefs about the action distribution  based on statistical inference from a sample of observed players. My goal here differs in that it is to provide a metric of robustness rather than a new solution concept.

 \section{Examples} \label{sec:example}
In this section, I use the proposed approach to revisit some classic examples\textemdash  a two-player coordination game and a two-player trading game. In each of these games, the assumption that players share a common prior has strong implications for strategic play. I show how we can relax this assumption by endogenizing disagreement based on two new primitives: a data-generating process and set of learning rules. This approach allows us to relate confidence in a strategic prediction to primitives of the learning environment.

\subsection{Trade} \label{sec:ExampleTrade}
A Seller owns a good of unknown value $v \in \{0,1\}$. He can either \emph{enter} a market at cost $c$, or \emph{exit} and keep the good. Entering leads to a simultaneous interaction with a Buyer, where the Seller chooses whether to sell the good at a (pre-set) posted price $p$, and the Buyer chooses whether to purchase the good at that price. The game and its payoffs are described in Figure \ref{fig:game} below.

\begin{figure}[h]
\centering
\includegraphics[scale=0.7]{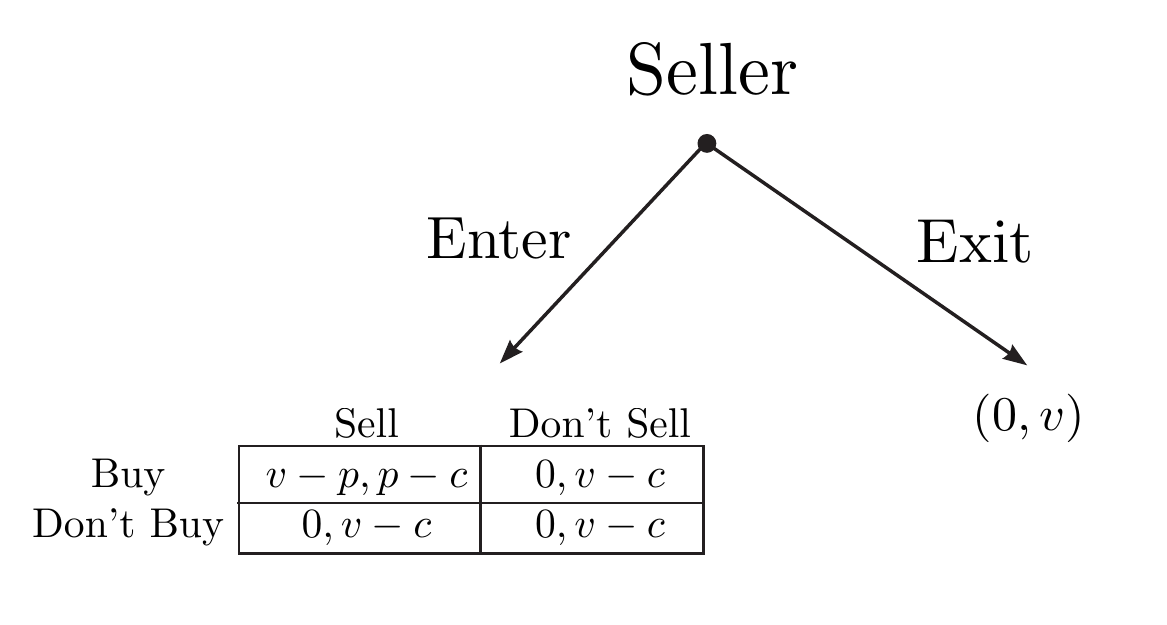}
\caption{Description of Game} \label{fig:game}
\end{figure}
Suppose that the cost $c$ and price $p$ satisfy $0<c<p<1$, so the Seller prefers to sell at the low value and prefers to keep the good at the high value. If players share a common prior about $v$, then entering is not rationalizable for the Seller in this game, so trade will not occur (similar to the no-trade theorem of \citet{notrade}).\footnote{If trade does not occur subsequently, then the Seller receives $v-c$ from entering but $v>v-c$ from exiting. Thus,  entering can be rationalized only if trade subsequently occurs. But trade can occur only if the Buyer  believes that  $\mathbb{E}(v)\geq p$ while the Seller believes that $\mathbb{E}(v) \leq p$, implying $\mathbb{E}(v)=p$ under their shared belief. The Seller can improve on his expected payoff of $p-c$ by choosing to exit. } 

I suppose instead that player form beliefs based on a common data set of past goods and their valuations, but draw inferences from this data in different ways. Each good in the data is described by $m$ observable attributes with values  normalized to lie in the interval $[-1,1]$. Typical attributes are denoted $x\in X:=[-1,1]^m$, and the value of a good is a deterministic function $f: X \rightarrow \{0,1\}$ of its attributes. The public data $\bz_n=\{(x_i,f(x_i))\}_{i=1}^n$ is a sequence of $n$ goods with attributes $x_i$ drawn from a uniform distribution on $X$,  and the values of these goods.

Players have common knowledge that $f$ belongs to a certain family of functions $\mathcal{F}$, which we can think of as the relevant set of models.  For simplicity, let $\mathcal{F}$ be the set of \emph{rectangular classification rules}, i.e. functions $f_R(x) = \mathbbm{1}(x\in R)$ indexed to hyper-rectangles $R$ in $[-1,1]^m$.\footnote{For example, whether all attributes fall into an ``acceptable" range, as judged by a downstream buyer.} The attributes of the Seller's good are known to be the zero-vector, denoted $x^S$, so the agent's beliefs over $\mathcal{F}$ determine their beliefs about $v=f(x^S)$, the value of the Seller's good.\footnote{In more detail, the state space is $\Omega = \mathcal{F} \times (X \times \{0,1\})^\infty$ and the payoff-relevant parameter is $f(x^S)$. Agents have a prior belief $q$ over $\Omega$, where $\marg_\mathcal{F} q$ has support on the set of models $R$. Conditional on the true model being $f$, the data-generating process over $ (X \times \{0,1\})^\infty$ is $q( (x,y) \mid f ) = g(x) \cdot \mathbbm{1}(y=f(x))$ where $g$ is the uniform density on $X$.}

Agents may have different prior beliefs $\pi \in \Delta(\mathcal{F})$ over the set of models. Fixing a prior $\pi$, the posterior belief $\pi(f \mid \bz_n)$ given data $\bz_n$ is a re-normalization of the prior over all rules consistent with the observed data (see Figure \ref{fig:rectangle}), and the posterior probability assigned to the Seller's good having a high value is $\pi(\{f: f(x^S) = 1\} \mid \bz_n)$. Define $\mathscr{B}(\bz_n) \subseteq \Delta(\{0,1\})$ to be the set of all posterior beliefs about $v$ that are consistent with Bayesian updating from some prior $\pi \in \Delta(\mathcal{F})$ and the data $\bz_n$.

\begin{figure}[h]
\centering
\includegraphics[scale=0.5]{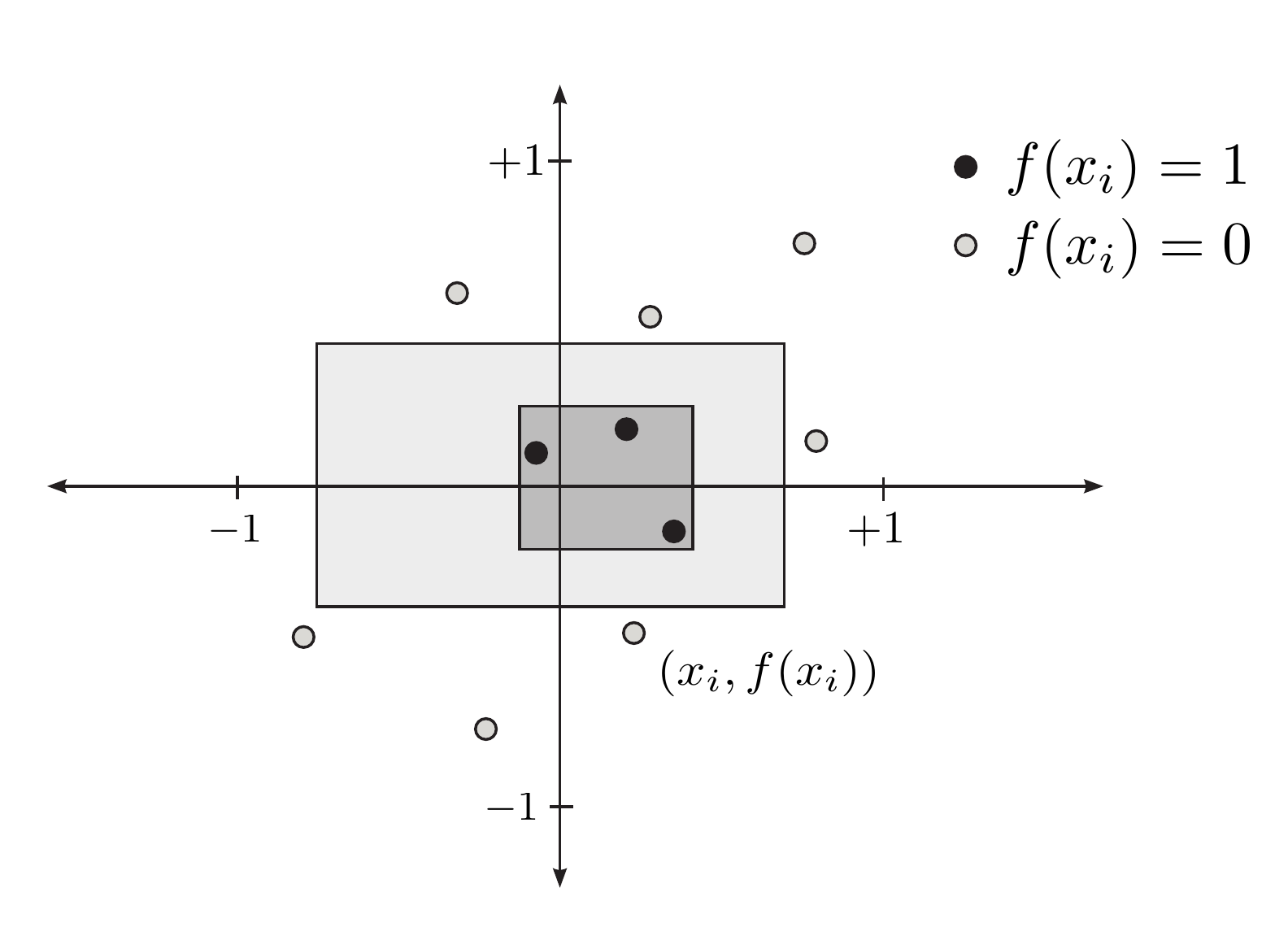}
\caption{\footnotesize{The circles represent the observed data. Each good is described by a vector in $[-1,1]\times[-1,1]$. The circle is black if its valuation is 1 and gray if its valuation is low. A rule is consistent with the data if it correctly predicts the valuation for each observation. Two rectangular classification rules are depicted: each predicts `1' for goods in the shaded region and `0' for goods outside. Both are consistent with the observed data.}} \label{fig:rectangle}
\end{figure}

Without making detailed assumptions about the priors that players hold, or the beliefs that players have about the priors of other players, I place the following restriction.
  
\begin{assumption}[Restriction on Beliefs] \label{ass:Example} For every realized $\bz_n$, players have common certainty in the event that all players have a first-order belief in 
$\mathscr{B}(\bz_n)$.\footnote{See Section \ref{subsec:restrictBelief} for the precise definition of common certainty in such an event.} \end{assumption}

\noindent That is, players are assumed to have beliefs consistent with Bayesian updating from some prior on $\mathcal{F}$; they assign probability 1 to the other player having a posterior belief in this set, and so on.

The analyst's (ex-ante) \emph{confidence} in predicting that entering is rationalizable for the Seller is defined as follows: For every number of observations $n$, let $\overline{p}^n$ be the measure of data sets $\bz_n$ such that entering is rationalizable for \emph{some} belief satisfying Assumption \ref{ass:Example}, and let  $\underline{p}^n$ be the measure of data sets $\bz_n$ such that entering is rationalizable for \emph{all} beliefs satisfying Assumption \ref{ass:Example}. Then the set $[\underline{p}^n, \overline{p}^n]$ represents a ``confidence set" that describes how certain the analyst should be in predicting that the entering is rationalizable, when agents observe $n$ common data points. The extreme case of $\overline{p}^n=\underline{p}^n=1$ implies full confidence that entering is rationalizable, while $\overline{p}^n=\underline{p}^n=0$ implies full confidence that it is not.

\begin{claim} The probability $\underline{p}^n=0$ for every $n\in \mathbb{Z}_+$. Additionally:
\begin{itemize}
\item[(a)] Fixing any number of attributes $m\in \mathbb{Z}_+$, the probability $\overline{p}^n \rightarrow 0$ as  $n \rightarrow \infty$.
\item[(b)] Fixing any number of data observations $n\in \mathbb{Z}_+$, the probability $\overline{p}^n$ is  increasing in the number of attributes $m$, and $\overline{p}^n \rightarrow 1$ as $m \rightarrow \infty$. 
\end{itemize} \label{claim:Trade}
\end{claim}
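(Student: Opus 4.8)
The plan is to reduce the rationalizability question to a combinatorial event about the data set and then estimate its probability. First, I would pin down $\mathscr{B}(\bz_n)$: identifying a belief in $\Delta(\{0,1\})$ with the weight it puts on $v=1$, the posterior of \emph{any} prior is supported on the rules consistent with $\bz_n$, so $\mathscr{B}(\bz_n)$ is a single point when the data determine $f(x^S)$ and equals the whole interval $[0,1]$ otherwise (put point masses, and mixtures thereof, on consistent rules with $f(x^S)=1$ and on consistent rules with $f(x^S)=0$). Next, using $0<c<p<1$ and the logic of the no-trade footnote, I would show that entering is rationalizable for some belief satisfying Assumption~\ref{ass:Example} if and only if $\mathscr{B}(\bz_n)=[0,1]$: in that case one can take a two-type structure --- a Seller who is certain $v=0$ and believes the Buyer is certain $v=1$ (hence will buy), and a Buyer who believes the Seller is exactly that type --- in which entering-then-offering is a best reply for the Seller ($p-c>0$) and buying is a best reply for the Buyer ($p<1$); when $\mathscr{B}(\bz_n)$ is a singleton, exiting strictly dominates every entering plan for the unique Seller type. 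The same dichotomy yields $\underline p^n=0$ for every $n$: if $\mathscr{B}(\bz_n)=[0,1]$ then entering is not rationalizable for the Seller type that is certain $v=1$ (exit secures payoff $1$), and if $\mathscr{B}(\bz_n)$ is a singleton entering is rationalizable for no type, so ``entering is rationalizable for \emph{all} admissible beliefs'' never happens.

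Now write $P_n$ and $N_n$ for the positive and negative attribute vectors in $\bz_n$ and $R^*\supseteq P_n$ for the true rectangle. A consistent rule that omits $x^S=\mathbf 0$ exists iff $\mathbf 0\notin\operatorname{bbox}(P_n)$ (every consistent rectangle contains $\operatorname{bbox}(P_n)$, and $\operatorname{bbox}(P_n)$ is itself consistent because it lies inside $R^*$), and a consistent rule containing $\mathbf 0$ exists iff $\operatorname{bbox}(P_n\cup\{\mathbf 0\})\cap N_n=\emptyset$. Hence $\overline p^n$ is exactly the probability, over the $n$ i.i.d.\ uniform attribute draws, that $\mathbf 0\notin\operatorname{bbox}(P_n)$ \emph{and} $\operatorname{bbox}(P_n\cup\{\mathbf 0\})\cap N_n=\emptyset$, and the remaining work is coverage estimates. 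For part~(a), fix $m$ and a true $R^*$ with $\mathbf 0\notin\partial R^*$ --- the relevant (generic) case, since if $\mathbf 0$ lies on $\partial R^*$ the data essentially never determine $v$ and the statement genuinely fails; with attribute thresholds drawn from a continuous law this is a measure-zero possibility. If $\mathbf 0\in\operatorname{int}(R^*)$, each slab $R^*\cap\{x:x_j\ge 0\}$ and $R^*\cap\{x:x_j\le 0\}$ has positive measure, so with probability tending to $1$ geometrically the positive points straddle every coordinate hyperplane and $\mathbf 0\in\operatorname{bbox}(P_n)$. If $\mathbf 0\notin\overline{R^*}$, then $\operatorname{bbox}(R^*\cup\{\mathbf 0\})$ strictly contains $R^*$, so there is a fixed positive-measure box sitting inside $\operatorname{bbox}(R^*\cup\{\mathbf 0\})$ but outside $R^*$; once $\operatorname{bbox}(P_n)$ covers a fixed sub-box of $R^*$ and some (necessarily negative) point lands in that box, the second condition fails. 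Either way $\overline p^n\to 0$.

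For part~(b), fix $n$ and read ``adding an attribute'' as appending a uniform, payoff-irrelevant coordinate, so $R^*$ acquires a factor $[-1,1]$ and the labels are unchanged; this couples the $m$- and $(m+1)$-attribute environments on a common probability space. Under the coupling both defining events can only grow --- a coordinate that separates $\mathbf 0$ from $P_n$ in $m$ dimensions still does so in $m+1$, and a negative point lying in the enlarged box projects to a negative point lying in the smaller box --- so $\overline p^n$ is nondecreasing in $m$. For the limit, condition on which of the $\le n$ points are positive: the extra coordinates of the positive points are then i.i.d.\ uniform across coordinates, so the chance that \emph{no} extra coordinate puts all positive points on one side of $0$ is at most $(1-2^{1-n})^{m-1}\to 0$; a short conditional computation also shows the expected side-length of $\operatorname{bbox}(P_n\cup\{\mathbf 0\})$ along each extra coordinate is strictly below $2$, so the chance that any of the $\le n$ negative points lies in that box decays geometrically in $m$. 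Hence $\overline p^n\to 1$, with the closed form $\overline p^n=1-(1-2^{1-n})^m$ in the special case $R^*=[-1,1]^m$.

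The substantive part is the reduction in the first paragraph --- getting the rationalizability characterization exactly right, in particular that a Buyer who is ``wrong'' (believes $v=1$) may nonetheless rationally buy against a Seller he believes will offer, precisely because the absence of a common prior does not force players' beliefs to be mutually consistent --- together with choosing the coupling in part~(b) so that both the monotonicity and the $m\to\infty$ limit drop out; the coverage/concentration estimates themselves are routine. One should also flag the genericity caveat on $R^*$ noted above (or, equivalently, average over a prior on true models with a density).
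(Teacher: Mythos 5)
Your proposal is correct and reaches the paper's Lemma \ref{lemm:conditionRectangle} by essentially the same reduction (entering is weakly rationalizable iff the data admit consistent rules making conflicting predictions at $x^S$, with the same two-type witness and the same point-mass-prior argument for $\underline{p}^n=0$), but the combinatorial half of your argument is genuinely different from, and in one respect more careful than, the paper's. The paper reduces the existence of a consistent rule predicting $0$ at the origin to a coordinate-by-coordinate condition on the marginals of \emph{all} observations (whether some dimension $k$ lacks an observation with $x^k\in[-\underline{r}_k,0)$ or lacks one with $x^k\in(0,\overline{r}_k]$) and multiplies across dimensions to get the closed form behind Corollary \ref{lemm:pnRectangle}. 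Your bounding-box characterization --- a consistent rule excludes $\mathbf 0$ iff $\mathbf 0\notin\operatorname{bbox}(P_n)$, since every consistent rectangle contains $\operatorname{bbox}(P_n)$ and $\operatorname{bbox}(P_n)$ is itself consistent --- is the exact condition, and it shows that the paper's per-coordinate condition is sufficient but not necessary once $m\geq 2$: an observation can straddle the hyperplane $\{x^k=0\}$ within the slab $[-\underline{r}_k,\overline{r}_k]$ while being a \emph{negative} example (excluded from $R^*$ by another coordinate), in which case it imposes no constraint through coordinate $k$; e.g.\ with $R^*=[-0.5,0.5]^2$ and data $(0.3,0.3),(0.2,-0.3)$ labeled $1$ and $(-0.3,0.9)$ labeled $0$, the rectangle $[0.1,0.4]\times[-0.4,0.4]$ is consistent and excludes the origin even though the paper's straddling condition holds in both dimensions. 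So the paper's closed form is a lower bound on $\overline{p}^n$ for $m\geq2$ (they coincide at $m=1$), and your direct argument that $P(\mathbf 0\in\operatorname{bbox}(P_n))\to 1$ is what actually delivers part (a); you also cover the $f(x^S)=0$ case and correctly flag the boundary caveat $\mathbf 0\in\partial R^*$, neither of which the paper spells out. For part (b) you trade the paper's read-off-the-formula monotonicity for a coupling in which added attributes are payoff-irrelevant; this is clean but note it answers a slightly different comparative static than the paper's Figure \ref{fig:Trade}, which holds the per-coordinate width fixed as $m$ grows (both are defensible readings of the informal statement, and your geometric bound $(1-2^{1-n})^{m-1}\to0$ gives the $m\to\infty$ limit either way). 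The only loose end is the familiar one about Bayes updating from priors assigning zero mass to all consistent rules, which the paper also waves away by fiat.
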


That is, for every quantity of data $n$, the probability that entering is rationalizable for all beliefs satisfying Assumption \ref{ass:Example} is zero. But the probability that entering is rationalizable for some belief satisfying Assumption \ref{ass:Example} varies depending on $n$ and $m$. Part (a) says that as the number of observations $n$ grows large, the probability $\overline{p}^n$ vanishes to zero, implying that the confidence set converges to a degenerate interval at zero. The infinite-data limit thus returns the prediction of ``no trade" consistent with the common prior assumption. But if the quantity of data is finite, and the number of attributes is large, then $\overline{p}^n$ can be substantially greater than zero. Indeed, Part (b) of the claim says that this probability $\overline{p}^n$ can be made arbitrarily close to 1 by increasing the dimensionality of the learning problem via choice of large $m$. This reflects that in a high-dimensional learning problem, many classification rules are likely to be consistent with the data, including some that yield conflicting predictions. Thus, ``rational" disagreement is possible and even likely.

An exact characterization of $[\underline{p}^n,\overline{p}^n]$ is given in Lemma \ref{lemm:pnRectangle} in the appendix. Using this lemma, I plot in Figure \ref{fig:Trade} the behavior of these confidence sets for different values of $m$, assuming the true function to be $\mathbbm{1}(x \in [-0.1,0.1]^m)$. 
Although trade is not predicted in the limiting game, it is a plausible outcome if the number of data observations is small and the number of attributes is large. As one specific example: If there are 10 attributes and players observe only 20 goods, then the confidence set $[\underline{p}^n,\overline{p}^n]=[0,0.99]$. That is, with near certainty, entering will be rationalizable for the Seller given the realized data for some belief satisfying Assumption \ref{ass:Example}. While I chose a simple set of learning rules for the purpose of obtaining exact expressions for the confidence set, in practice confidence sets like those depicted in Figure \ref{fig:Trade} can be simulated for more complex kinds of learning procedures.

\begin{figure}[h] 
\centering
\includegraphics[scale=0.40]{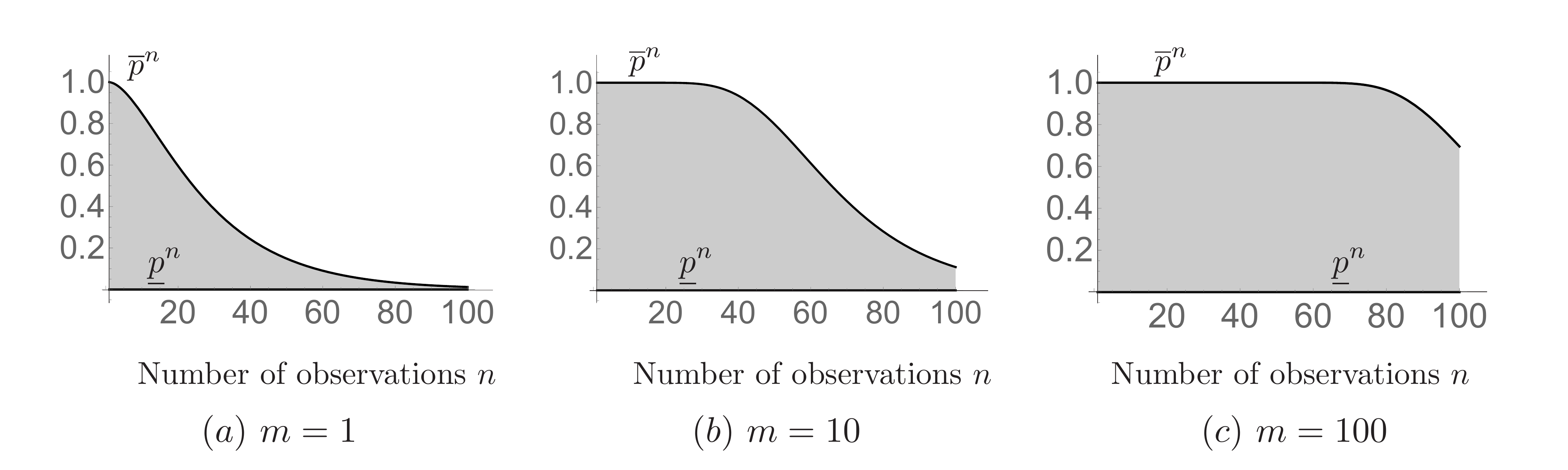}
\caption{The shaded area depicts confidence sets $[\underline{p}^n,\overline{p}^n]$ for the rationalizability of entering given $n$ common observations.} \label{fig:Trade}
\end{figure}

\subsection{Coordination} \label{sec:ExampleCoord}

I now turn to a second classic game, where (in contrast to the previous example) the prediction of interest holds in the infinite-data limit. I characterize the robustness of that prediction when players have beliefs based on small quantities of data.

A contagious disease spreads across a population at an unknown speed. Two states are connected by travel, and their governors choose between implementing a \emph{strong} or a \emph{weak} lockdown policy in their states to slow the spread of the disease. Implementation of the strong lockdown policy entails a large economic cost, but if the states coordinate on this policy, then the disease will be suppressed with certainty.

The two governors form beliefs about the growth rate of the disease based on a public data set $\{(t, y_t)\}_{t=1}^n$, which consists of the number of reported cases of the disease, $y_t$, on days $t=1, 2, \dots, n$. The number of reported cases grows exponentially according to 
\[\log y_t = \beta t  + \varepsilon_t\]
where the noise term $\varepsilon_t$ has a normal distribution with known parameters $\mu=0$ and $\sigma^2>0$. The constant $\beta$ is not known. Payoffs are given by the following matrix:
\[\begin{array}{ccc}
& \emph{Strong} & \emph{Weak} \\
\emph{Strong} & -1,-1 & -1-\beta, -\beta \\
\emph{Weak} & -\beta, -1-\beta & -\beta,-\beta
\end{array}\]
The economic cost of the strong lockdown is normalized to 1, and the cost of letting the disease progress without a strong lockdown is given by its growth rate $\beta$. A weak lockdown is strictly dominant if $\beta<1$, but coordination on the strong lockdown is the Pareto-dominant Nash equilibrium if $\beta>1$. 

Since the data is public, the assumption of a common prior over the parameter $\beta$  restricts  agents to identical posterior beliefs. This rules out, for example, the possibility that a governor chooses the weak policy, not because he infers from the data that the disease is low-risk, but because he believes that the other governor may make such an inference.

In practice, agents use statistical procedures to infer $\beta$ from the data. I thus relax the assumption of a common prior in the following way. Define $\hat{\beta}(\bold{z}_n)$ to be the ordinary least-squares estimate of $\beta$ from the data $\bold{z}_n:=\{(t, \log y_t)\}_{t=1}^n$, and let $\phi_n$ be the constant such that $\mathscr{C}(\bold{z}_n)=\{\beta: \vert \beta - \hat{\beta}(\bz_n) \vert \leq \phi_n\}$ is a $(1-\alpha)$-confidence interval for $\beta$, where $\alpha\in (0,1)$ is fixed. Suppose the governors have common certainty in the event that both of their first-order beliefs have support on $\mathscr{C}(\bz_n)$; that is, each governor's own belief about the value of $\beta$ has support in this confidence interval; they assign probability 1 to the other agent having such a belief; and so forth. This allows players to disagree given the data, but requires the size of that disagreement to be confined within the confidence interval. (The assumption that players assign common \emph{certainty} is not crucial. Similar results go through if the players have common $p$-belief \citep{monderersamet} in the set for large $p$, so long as we bound the parameter space for $\beta$. See Section \ref{sec:extensions}.) 

\begin{claim} \label{claim:Coord} Suppose that the actual growth rate is fast $(\beta>1$), so that the strong lockdown is rationalizable given complete information of the payoffs. Then, for every $\sigma^2>0$, both $\overline{p}^n$ and $\underline{p}^n$ are increasing in $n$, while for every $n$, both $\overline{p}^n$ and $\underline{p}^n$ are decreasing in $\sigma^2$.
\end{claim}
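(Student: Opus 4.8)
The plan is to reduce the claim to the sampling distribution of the OLS estimator $\hat\beta(\bz_n)$, by first pinning down exactly when \emph{Strong} is rationalizable under the belief restriction (common certainty that both governors' first-order beliefs are supported on $\mathscr{C}(\bz_n)$). In the $2\times2$ game, if a governor holds a (possibly correlated) belief $\mu$ over the pair $(\beta,a_{-i})$, the expected payoff of \emph{Strong} minus that of \emph{Weak} equals $\mathbb{E}_\mu[\mathbbm{1}(a_{-i}=\emph{Strong})\,\beta]-1$; in particular \emph{Weak} is a strict best response whenever the governor assigns probability one to the opponent playing \emph{Weak}, so \emph{Weak} always survives iterated deletion, and by symmetry \emph{Strong} survives for both governors iff it is a best response to ``the opponent plays \emph{Strong}'' for some first-order belief on $\mathscr{C}(\bz_n)$, i.e. iff $\sup\mathscr{C}(\bz_n)=\hat\beta(\bz_n)+\phi_n\ge1$. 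Fixing a governor's own first-order belief $b\in\Delta(\mathscr{C}(\bz_n))$, the largest attainable value of $\mathbb{E}[\mathbbm{1}(a_{-i}=\emph{Strong})\,\beta]$ over joint beliefs with $\beta$-marginal $b$ and opponent action supported on its rationalizable set is $\mathbb{E}_b[\max\{\beta,0\}]$ (correlate so that $a_{-i}=\emph{Strong}$ exactly on $\{\beta>0\}$, which is available precisely when $\sup\mathscr{C}(\bz_n)\ge1$). This yields the key lemma: \emph{Strong} is rationalizable for a governor with first-order belief $b$ iff $\mathbb{E}_b[\max\{\beta,0\}]\ge1$.

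Optimizing $\mathbb{E}_b[\max\{\beta,0\}]$ over $b\in\Delta(\mathscr{C}(\bz_n))$ (the extreme values are attained at point masses at the endpoints of the interval) then gives the closed forms
\[
\overline{p}^n=\mathbb{P}\!\left(\hat\beta(\bz_n)+\phi_n\ge1\right),\qquad \underline{p}^n=\mathbb{P}\!\left(\hat\beta(\bz_n)-\phi_n\ge1\right),
\]
where the probability is over the noise $(\varepsilon_t)_{t=1}^n$. Substituting the sampling distribution: since the regression $\log y_t=\beta t+\varepsilon_t$ has no intercept, ordinary least squares gives $\hat\beta(\bz_n)=\beta+\big(\sum_{t=1}^n t\varepsilon_t\big)\big/Q_n$ with $Q_n:=\sum_{t=1}^n t^2$, so $\hat\beta(\bz_n)\sim N\!\left(\beta,\ \sigma^2/Q_n\right)$; and because $\sigma^2$ is known, the $(1-\alpha)$-confidence interval has half-width $\phi_n=z\,\sigma/\sqrt{Q_n}$ with $z:=\Phi^{-1}(1-\alpha/2)$ a fixed constant. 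Standardizing turns the two probabilities into
\[
\overline{p}^n=\Phi\!\left(\frac{(\beta-1)\sqrt{Q_n}}{\sigma}+z\right),\qquad \underline{p}^n=\Phi\!\left(\frac{(\beta-1)\sqrt{Q_n}}{\sigma}-z\right).
\]

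The comparative statics are then immediate: when $\beta>1$ the factor $\beta-1$ is strictly positive, $Q_n=\sum_{t\le n}t^2$ is strictly increasing in $n$, and $1/\sigma=(\sigma^2)^{-1/2}$ is strictly decreasing in $\sigma^2$, so the argument of $\Phi$ in both expressions is increasing in $n$ and decreasing in $\sigma^2$; monotonicity of $\Phi$ finishes the proof (and one checks $\overline{p}^n,\underline{p}^n\to1$ as $n\to\infty$, consistent with the stated convergence to $\{1\}$ for strictly rationalizable actions). I expect the only real work to be the first step --- formalizing the ``for some / for all belief'' quantifiers as quantifiers over the governor's own first-order belief in $\Delta(\mathscr{C}(\bz_n))$, and checking that allowing correlation between the belief about $\beta$ and the belief about the opponent's action does not enlarge the rationalizability criterion beyond $\mathbb{E}_b[\max\{\beta,0\}]\ge1$; everything downstream is a routine Gaussian computation.
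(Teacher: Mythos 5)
Your proposal follows essentially the same route as the paper: reduce rationalizability of \emph{Strong} to a condition on the confidence interval $\mathscr{C}(\bz_n)$ relative to the threshold $\beta=1$ (the paper's intermediate lemma states that \emph{Strong} is rationalizable for \emph{all} permitted types iff $\mathscr{C}(\bz_n)\subseteq[1,\infty)$ and for \emph{some} permitted type iff $\mathscr{C}(\bz_n)\cap[1,\infty)\neq\emptyset$), then express $\underline{p}^n$ and $\overline{p}^n$ through the Gaussian sampling distribution of $\hat\beta$ and read off monotonicity in $n$ and $\sigma^2$. Your closed forms agree with the paper's up to the identity $1-\Phi(z_\alpha-x)=\Phi(x-z_\alpha)$ and a different but equally monotone choice of OLS variance (you use the no-intercept formula $\sigma^2/\sum_t t^2$, the paper uses $12\sigma^2/(n^2-1)$); neither difference affects the comparative statics.

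One caveat. Your ``key lemma''\textemdash that \emph{Strong} is rationalizable for a governor if and only if his \emph{first-order} belief $b$ satisfies $\mathbb{E}_b[\max\{\beta,0\}]\ge 1$\textemdash is false as stated, because interim correlated rationalizability depends on the entire belief hierarchy, not just on $b$. For instance, when $\mathscr{C}(\bz_n)$ straddles $1$ (exactly the regime where $\underline{p}^n<\overline{p}^n$), consider a type whose own belief is a point mass at $\beta=2\in\mathscr{C}(\bz_n)$ but who is certain, at all orders, that the opponent is certain that $\beta=0.9\in\mathscr{C}(\bz_n)$: \emph{Strong} is strictly dominated for every such opponent type, so the conjecture that the opponent plays \emph{Strong} is unavailable and \emph{Strong} is not rationalizable for this type despite $\mathbb{E}_b[\max\{\beta,0\}]=2$. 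Your argument nonetheless delivers the correct endpoints because you only need (i) the ``only if'' direction, which is valid since $\mathbb{E}_b[\max\{\beta,0\}]-1$ upper-bounds the payoff advantage of \emph{Strong} under any admissible conjecture, and (ii) the ``if'' direction for the extremal types with common certainty in an endpoint of $\mathscr{C}(\bz_n)$, for which the family in which both governors play only \emph{Strong} is closed under best reply. The paper avoids the issue by stating the characterization directly at the level of the whole permitted type set $T_i^{\mathscr{C}}$ rather than belief-by-belief; if you keep your formulation, the quantifiers in the reduction must range over types, not over first-order beliefs.
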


 That is, the analyst gains confidence in predicting that the strong lockdown is rationalizable as the reporting noise $\sigma^2$ decreases, and the number of observations $n$ increases.\footnote{If instead $\beta<1$, then the reverse statements hold; that is, the probabilities $\overline{p}^n$ and $\underline{p}^n$ are decreasing in $n$ and increasing in $\sigma^2$.} Lemma \ref{lemma:exCoord} in the appendix explicitly characterizes $\underline{p}^n$ and $\overline{p}^n$. Using the expressions in this lemma, I plot in Figure \ref{fig:Coord} the behavior of these confidence sets for different levels of reporting noise $\sigma$.

\begin{figure}[h]
\centering
\includegraphics[scale=0.39]{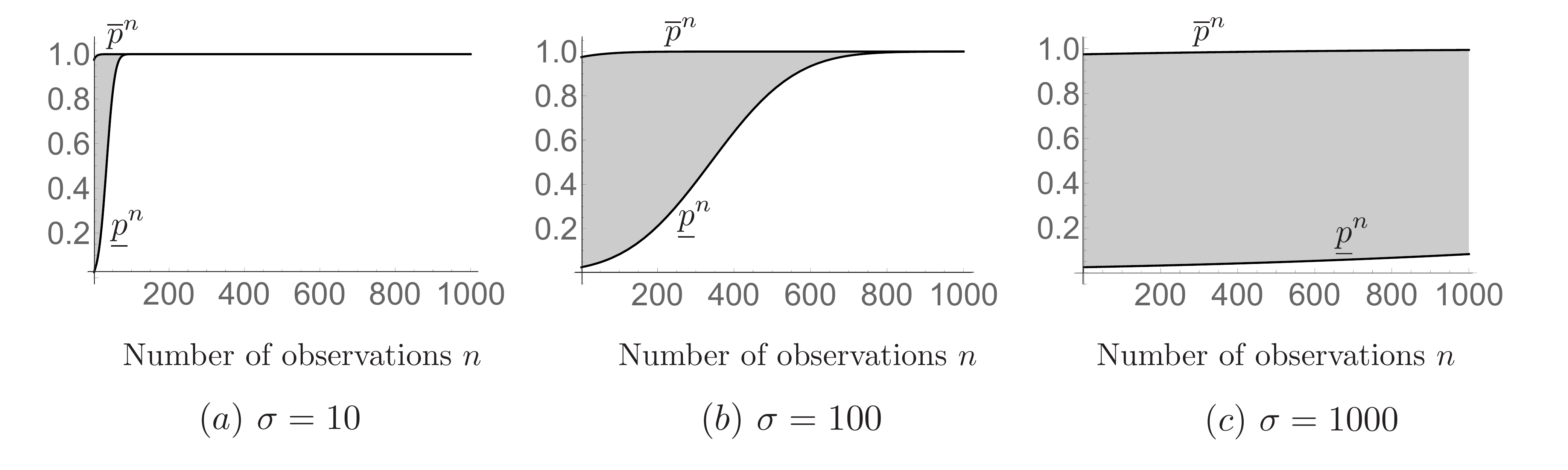}
\caption{The shaded area depicts confidence sets $[\underline{p}^n,\overline{p}^n]$ for the rationalizability of the strong lockdown given $n$ common observations, and allowing the reporting noise $\sigma$ to vary. In all panels, $\beta=2$ and $\alpha=0.05$.} \label{fig:Coord}
\end{figure}

As the number of observations $n$ grows, both $\underline{p}^n$ and $\overline{p}^n$ increase as well. If the number of observations is large relative to the reporting noise, then the analyst should have high confidence that the strong lockdown policy is rationalizable---for example, if $\sigma=10$ and $n=100$, then the confidence set $[0.99,1]$ is nearly degenerate at certainty. On the other hand, if the reporting noise is large and the number of observations is relatively small, then the strong lockdown is likely to be rationalizable for some permitted types, but \emph{not} for all of them---for example, if $\sigma=n=100$, then the confidence set is $[0.08,0.99]$, suggesting substantial ambiguity regarding whether the strong lockdown is a good prediction of play.

Subsequently, I generalize the approach described in these two examples. 

\section{Approach} \label{sec:Setting}

\subsection{Preliminaries} \label{sec:preliminaries}

\paragraph{Basic Game.}

There is a finite set $\mathcal{I}$ of players and a finite set of actions $A_i$ for each player $i$. The set of action profiles is $A=\prod_{i \in \mathcal{I}} A_i$, and the set of possible games is identified with $U:=\mathbb{R}^{|\mathcal{I}| \times |A|}$. Agents have beliefs over a set of payoff-relevant parameters $\Theta$, which is a compact and convex subset of finite-dimensional Euclidean space. It is possible to take $\Theta$ to be a subset of $U$, so that each $\theta$ is itself a game, or to define beliefs over a lower-dimensional set of payoff-relevant parameters as in Section \ref{sec:example}. In either case, the parameters in $\Theta$ are assumed to be related to payoffs by a bounded and Lipschitz continuous embedding $g: \Theta \rightarrow U$ (assuming the sup-norm on both spaces).\footnote{The map $g$ can be interpreted as capturing the known information about the structure of payoffs.}

\paragraph{Beliefs.} For each player $i$, let $X^0_i = \Theta$, $X_i^1 = X_i^0 \times \prod_{j\neq i} \Delta(X_j^0)$, $\dots$, 
$X_i^n = X_i^{n-1} \times \prod_{j \neq i} \Delta(X_j^{n-1})$, etc., 
so that each $X_i^k$ is the set of possible $k$-th order beliefs for player $i$. Define $T_i^0 = \prod_{n=0}^\infty \Delta(X^i_n)$. An element $(t^1_i, t^2_i, \dots) \in T_i^0$ is a \emph{hierarchy of beliefs} over $\Theta$ (describing the player's uncertainty over $\Theta$, his uncertainty over his opponents' uncertainty over $\Theta$, and so forth), and referred to simply as a \emph{belief} or \emph{type}.\footnote{Types are sometimes modeled as encompassing all uncertainty in the game. In the present paper, types describe players' structural uncertainty over payoffs, but not their strategic uncertainty over opponent actions.} There is a subset of types $T^*_i$ (that satisfy the property of \emph{coherency}\footnote{$\marg_{X_{n-2}} t_i^n = t_i^{n-1}$, so that $(t_i^1, t_i^2, \dots)$ is a consistent stochastic process.} and common knowledge of coherency) and a function $\kappa^*_i: T^*_i \rightarrow \Delta \left(\Theta \times T^*_{-i} \right)$ such that $\kappa_i^*(t_i)$ preserves the beliefs in $t_i$; that is, $\marg_{X_{n-1}}\kappa^*_i(t_i) = t^n_i$ for every $n$ \citep{mertens,typespace}.\footnote{The notation $T^*_{-i}$ is used throughout the paper to denote the set of profiles of opponent types, $\prod_{j \neq i} T^*_j$.} The tuple $(T^*_i, \kappa^*_i)_{i \in \mathcal{I}}$ is the \emph{universal type space}. Subsequently I will develop smaller type spaces $(T_i,\kappa_i)_{i \in \mathcal{I}}$ where each $T_i\subseteq T_i^*$ and $\kappa_i: T_i \rightarrow \kappa_i^*(T_i)$ is the restriction of $\kappa_i^*$ to $T_i$.

\subsection{Restriction on Beliefs} \label{subsec:restrictBelief}

The proposed approach endogenizes the type space based on two new primitives: a \emph{data-generating process}, and a \emph{set of rules} for how to extrapolate beliefs from realized data.

Formally, let $(Z_t)_{t\in \mathbb{Z}_+}$ be a stochastic process where the random variables $Z_t$ take value in a common set $\mathcal{Z}$, and the typical sample path is denoted $\bold{z}=(z_1, z_2, \dots)$. The \emph{data-generating process} is a measure $P$ over the set $\mathcal{Z}^\infty$ of all (infinite) sample paths. Let $P^n$ denote the induced measure on the first $n$ variables.
 A \emph{data set} $\bold{z}_n$ of size $n$ is the restriction of $\bold{z}$ to its first $n$ coordinates, and $\mathcal{Z}^n$ is the set of all length-$n$ data sets. I use $Z^n=(Z_1, \dots, Z_n)$ to denote the random initial sequence of length $n$.\footnote{Throughout, symbols such as $Z_t$ and $Z^n$ denote random variables, whereas lowercase symbols such as $\bz_n$ are particular, constant values.}
 
A \emph{learning rule} is defined to be any map from data sets into first-order beliefs:
\[\mu: \bigcup_{n=1}^\infty \mathcal{Z}^n \rightarrow \Delta(\Theta).\]
The special case of Bayesian learning rule is identified with a distribution on $\Theta$ along with a family of distributions $(P_\theta)_{\theta \in \Theta}$, where each $P_\theta \in \Delta(\mathcal{Z}^\infty)$ is the data-generating distribution given parameter $\theta$.  The realized data $\bz_n$ determines a posterior distribution on $\Theta \times \mathcal{Z}^\infty$, and the learning rule maps this data into the marginal posterior distribution on $\Theta$.\footnote{In general, making the mapping deterministic may require choosing a conditional probability if there are multiple ones consistent with Bayes' rule; here and elsewhere in the paper, implicitly assume that the updating rule specifies such a choice when Bayesian rules are mentioned.}  Other learning rules may map the data $\bold{z}_n$ to a degenerate belief at a sample statistic (such as the empirical average) or to a distribution over various point-estimates for $\theta$. 

Players have common knowledge of a set $\mathcal{M}$ of learning rules.\footnote{Note that this set is common across players.} The set of \emph{plausible beliefs} given this set $\mathcal{M}$ and realized data set $\bz_n$ is defined 
\begin{equation}
\label{def:plausibleTheta}
\mathscr{B}(\bz_n) = \{\mu(\bz_n) \,: \,  \mu \in \mathcal{M}\} \subseteq \Delta(\Theta).
\end{equation}
Throughout this paper, I restrict consideration to sets of learning rules $\mathcal{M}$ that satisfy the following condition:
\begin{assumption}[Common Limiting Belief] \label{ass:limitbelief} There is a limiting belief $\mu^\infty$ such that
\[\lim_{n\rightarrow \infty} d_P(\mu(Z^n), \mu^\infty)  \rightarrow 0 \quad \mbox{$P$-a.s.} \quad \forall \mu \in \mathcal{M}\]
where $d_P$ is the Prokhorov metric on $\Theta$.\footnote{For any $\nu,\nu' \in \Delta(\Theta)$, the Prokhorov distance between these measures is $d_P(\nu,\nu') = \inf \{ \epsilon>0 \, : \, \nu(A) \leq \nu'(A^\epsilon) + \epsilon \mbox{ for all Borel-measurable } A \subseteq \Theta\}$, where $A^\epsilon$ denotes the $\epsilon$-neighborhood of $A$ in the sup-norm.} 
\end{assumption}
\noindent This assumption requires that all learning rules in $\mathcal{M}$ return the same limiting belief $\mu^\infty$ as the quantity of data $n$ grows large.\footnote{This limiting belief $\mu^\infty$ can be interpreted as a common prior, following what \citet{cpa} calls the ``frequentist justification" for assumption of a common prior.} It is not critical that all differences in beliefs are removed in the limit (see Section \ref{sec:extensions}). But maintaining Assumption \ref{ass:limitbelief} in the main text allows us to explore more precisely the scope for disagreement in an environment in which learning is feasible, but not immediate. While learning has not ceased, the available information may permit multiple different interpretations, producing differences in beliefs.

 I use the sets of plausible beliefs $\mathscr{B}(\bz_n)$  to impose a restriction on hierarchies of beliefs, as in \citet{battigalli}. Specifically, players are assumed to have \emph{common certainty} in the event that all players have first-order beliefs in $\mathscr{B}(\bz_n)$---that is, they have a first-order belief in this set, believe with probability 1 that all other players have a first-order belief in this set, and so forth \citep{monderersamet}.\footnote{As I discuss in Section \ref{sec:extensions}, it is not critical that players have common \emph{certainty} in this event, and the restriction can be relaxed to common $p$-belief for large $p$.} 

Formally, for any set $\mathscr{B} \subseteq \Delta(\Theta)$, and for any player $i$, define 
$$B^{1,1}_i(\mathscr{B}) := \left\{t_i \in T^*_i \, : \,  \textstyle \marg_{\Theta} \kappa^*_i(t_i) \in \mathscr{B} \right\}$$
to be the set of player $i$ types whose marginal beliefs over $\Theta$ belong to the set $\mathscr{B}$. For each $k>1$, and again for each player $i$, recursively define\footnote{
For example, $B^{2,1}_i(\mathscr{B})$ is the set of player $i$ types that assign probability 1 to all other players assigning probability 1 to $\mathscr{B}$.}
\[B^{k,1}_i(\mathscr{B})= \left\{t_i \in T^*_i \, : \, \kappa^*_i(t_i)\left(\Theta \times \prod_{j \neq i} B_j^{k-1,1}(\mathscr{B})\right) = 1\right\}.\] Then finally,
$T_i^{\mathscr{B}} =  \bigcap_{k \geq 1} B_i^{k,1}(\mathscr{B})$ 
is the set of player $i$ types that have common certainty in the event that all players' first-order beliefs belong to $\mathscr{B}$. 

\begin{definition} \label{def:typespace} 
For every $\bz_n$, the induced type space is 
$\left(T^{\mathscr{B}(\bz_n)}_i, \kappa^{\mathscr{B}(\bz_n)}_i\right)_{i \in \mathcal{I}}$, where $\kappa^{\mathscr{B}(\bz_n)}_i: T_i^{\mathscr{B}(\bz_n)} \rightarrow \kappa_i^*\left(T_i^{\mathscr{B}(\bz_n)}\right)$ is the restriction of $\kappa_i^*$ to $T^{\mathscr{B}(\bz_n)}_i$.\footnote{It is straightforward to show that the type sets $T_i^{\mathscr{B}(\bz_n)}$ are \emph{belief-closed}; that is,  $\kappa^{\mathscr{B}(\bz_n)}_i(t_i)\left(\Theta \times T_{-i}^{\mathscr{B}(\bz_n)}\right)=1$ for every $t_i \in T_i^{\mathscr{B}(\bz_n)}$.
} The type $t_i$ is \emph{permitted} for player $i$ if $t_i \in T^{\mathscr{B}(\bz_n)}_i$.
\end{definition}

This type space includes all type profiles where each player $i$ has common certainty in the event that all players have first-order beliefs in $\mathscr{B}(\bz_n)$. Note that the type space permits common knowledge disagreement\textemdash  that is, player $i$ can believe with probability 1 that (all believe with probability 1 that...) players hold different first-order beliefs. Such types are precluded under the common prior assumption not only in the present setting of common data, but also if we were to allow for private and different information \citep{disagree}. The induced type spaces in Definition \ref{def:typespace} thus corresponds to a relaxation of the common prior assumption, where the permitted extent of disagreement is governed by the set of learning rules $\mathcal{M}$. In the special case in which $\mathcal{M}$ consists of a singleton Bayesian rule, then we return the common prior assumption.

In this approach, restrictions are placed \emph{only} on the beliefs that players hold on the exogenous parameter space $\Theta$, and not on how they came to form those beliefs.  For example, each of the following is consistent with the restriction in Definition \ref{def:typespace}:
\begin{itemize}
\item (Common Knowledge) Each player $i$ is associated with a player-specific learning rule $\mu_i \in \mathcal{M}$. It is common knowledge that given any data set $\bz_n$, each player $i$'s first-order belief is $\mu_i(\bz_n)$. 
\item (Randomization) Each player $i$ randomizes over learning rules in $\mathcal{M}$ using a player-specific distribution $Q_i \in \Delta(\mathcal{M})$, and applies the randomly drawn learning rule to the realized data to form a first-order belief. The distributions $(Q_i)_{i \in \mathcal{M}}$ are common knowledge, although the realized random rule is privately known.
\item (Misspecification.) Player $j$ believes with probability 1 that player $i$'s first order belief is $\mu(\bz_n)$ for every data set $\bz_n$, whereas in fact player $i$'s first-order belief is $\mu'(\bz_n)$ for a different learning rule $\mu'\in \mathcal{M}$. 
\end{itemize}

It is possible sometimes to take $\mathscr{B}(\bz_n)$ as the primitive without explicitly defining $\mathcal{M}$ (as in  the coordination example in Section \ref{sec:ExampleCoord}), so that each measure $P^n$ directly defines a measure over sets of first-order beliefs.\footnote{The definition of $\mathscr{B}(\bz_n)$ must, however, respect Assumption \ref{ass:limitbelief}. However, there must exist some set of learning rules $\mathcal{M}$ that could be defined, which respects Assumption \ref{ass:limitbelief} and gives rise to these sets $\mathscr{B}(\bz_n)$. A sufficient condition is for there to exist a $P$-measure 1 set of sequences $\mathcal{Z}^* \subseteq \mathcal{Z}^\infty$ such that for every $\bz \in \mathcal{Z}^*$, and every sequence $(\nu^n)_{n=1}^\infty$ satisfying $\nu^n \in \mathscr{B}(\bz_n)$ for each $n$, it holds that $\lim_{n\rightarrow \infty} d_P(\nu^n, \mu^\infty)=0$.} In other cases, it is more natural to define $\mathcal{M}$. Some examples for this set include: 

\bigskip

\noindent \emph{Bayesian Updating with Different Priors.}  Each learning rule $\mu_\pi \in \mathcal{M}$ is identified with a prior distribution $\pi \in \Delta(\Theta \times \mathcal{Z}^\infty)$. For any realized data $\bz_n$, the belief $\mu_\pi(\bz_n)$ is the marginal over $\Theta$ of the posterior belief associated with the corresponding prior $\pi$.

\bigskip

\noindent \emph{Sample Statistics.} The set $\mathcal{M}$ consists of learning rules that map the data to different point-estimates for the payoff-relevant parameter. For example, $\mathcal{M}$ might consist of the two learning rules $\mu_{\text{mean}}$ and $\mu_{\text{median}}$, where for any data set $\bz_n$, $\mu_{\text{mean}}(\bz_n)$ is a point-mass belief on the mean realization in $\bz_n$ (as in \citet{Jehiel2018}), and $\mu_{\text{median}}(\bz_n)$ is a point-mass belief on the median realization. 
\bigskip

\noindent \emph{Linear Regression.} Suppose that  $X \subseteq \mathbb{R}^p$, $p<\infty$, is a set of attributes that determine the value of a parameter in $\Theta$ (e.g. physical covariates of a patient seeking health insurance, and medical outcomes for those patients). The observations in $\bz_n$ are pairs $(x,\theta)$, and the payoff-relevant unknown is the parameter associated with some new $x^\ast$ (e.g. the outcome for a new patient with characteristics $x^\ast$).

Each learning rule $\mu\in \mathcal{M}$ corresponds to a different regression model based on a subset of attributes $I_\mu \subseteq \{1, \dots, p\}$ (as for example in \citet{competingmodels}). Write $x_\mu = (x_i)_{i \in I_\mu}$ for the coordinates of $x$ at those indices. Then, $\hat{f}^{OLS}_\mu[\bz_n](x) = \beta^{OLS}_\mu \cdot x_\mu$ is the linear function of the attributes in $I_\mu$ that best fits the observed data $\bold{z}_n = \{(x^k, \theta^k)\}_{k=1}^n$; that is,
$\beta^{OLS}_\mu = \argmin_{\beta \in \mathbb{R}^{\vert I_\mu \vert}} \frac1n \sum_{i=1}^n (\beta \cdot x^k_\mu - \theta^k)^2$. 
For every $\bz_n$, the learning rule $\mu$ maps $\bz_n$ into a point-mass belief on the corresponding prediction $\hat{f}^{OLS}_\mu[\bz_n](x^\ast)$. 

\bigskip

\noindent \emph{Case-Based Learning with Different Similarity Functions.} As in the previous example, suppose that the observations are pairs $(x,\theta)\in X \times \Theta$, and the payoff-relevant parameter is the outcome at some new $x^\ast$. Each ``case-based" learning rule $\mu \in \mathcal{M}$ is identified with a real number $\lambda \in \mathbb{R}_+$ (to be interpreted momentarily) and maps the historical data $\bold{z}_n = \{(x^k, \theta^k)\}_{k=1}^n$ into a weighted average of the observed parameter values $\theta^k$ \citep{casebased}.

The observed parameters at $x$-values ``more similar" to $x^\ast$ are weighted more heavily. Formally, let $g: X \times X \rightarrow \mathbb{R}_+$ be a similarity function on attributes, where $g(x,x')$ describes the distance between attribute vectors $x$ and $x'$. The learning rule with parameter $\lambda$ maps $\bold{z}_n $ to a point mass on the weighted average $\frac1n \sum_{k=1}^n \theta^k \cdot  (e^{-\lambda g_\mu(x^k, x^\ast)})/(\sum_{k'} e^{-\lambda g_\mu(x^{k'},x^\ast)}).$
The parameter $\lambda$ controls the degree to which similar observations are weighted more heavily than dissimilar observations. For example, $\lambda = 0$ returns a simple average of all of the observed states, while $\lambda \rightarrow \infty$ returns the observed state at the most similar attribute vector. 

\subsection{Analyst's Confidence Set} \label{sec:robustplausible}
I now use the proposed framework to construct a quantitative metric for the analyst's confidence in a strategic prediction, focusing on prediction that an action is \emph{interim-correlated rationalizable} \citep{ICR,WeinsteinYildiz2017}.
The use of this particular solution concept is not critical to the approach---for example, we could study the agent's confidence in prediction of Bayesian Nash equilibria (see Section \ref{sec:extensions})---but interim-correlated rationalizability is well-suited to the present setting, where agents may have common knowledge disagreement.\footnote{Equilibrium notions are known to lead to potentially counterintuitive predictions when players have common knowledge disagreement. For example, consider a matching pennies game where player 1 receives $\theta$ if players match and $-\theta$ otherwise, and player 2 receives $-\theta$ if the players match and $\theta$ otherwise. Let $\theta \in \{-1,1\}$. Then if player 1 assigns probability 1 to $\theta=1$ while player 2 assigns probability 1 to $\theta=-1$, it is (somewhat counterintuitively) a Bayesian Nash equilibrium for both players to choose match. See \citet{learningbayesian} for an extended discussion.}  Its definition is reviewed here:

For every player $i$ and type $t_i \in T^*_i$, set $S_i^0[t_i]=A_i$, and define $S_i^k[t_i]$ for $k\geq 1$ such that $a_i \in S_i^k[t_i]$ if and only if $a_i$ is a best reply to some $\pi \in \Delta(\Theta \times T^*_{-i} \times A_{-i})$ satisfying (1) $\marg_{\Theta \times T^*_{-i}} \pi = \kappa^*_i(t_i)$ and (2) $\marg_{A_{-i} \times T_{-i}} \pi(\{a_{-i},t_{-i}) \mid a_{-i} \in S^{k-1}_{-i}[t_{-i}]\})=1$, where $S_{-i}^{k-1}[t_{-i}]= \prod_{j \neq i} S_j^{k-1}[t_{-j}]$. We can interpret $\pi$ to be an extension of type $t_i$'s belief $\kappa^*_i(t_i)$ onto the space $\Delta(\Theta \times T_{-i} \times A_{-i})$, with support in the set of actions that survive $k-1$ rounds of iterated elimination of strictly dominated strategies for types in $T_{-i}$. For every $i$, the actions in  
$S_i^\infty [t_i] = \bigcap_{k=0}^\infty S_i^k[t_i]$
are \emph{interim correlated rationalizable} for type $t_i$, or (henceforth) simply \emph{rationalizable}. Say that $a_i$ is \emph{strictly} rationalizable for type $t_i$ if the best reply conditions above are strengthened to strict best replies.

For any set of beliefs $\mathscr{B} \subseteq \Delta(\Theta)$, say that action $a_i$ is \emph{strongly $\mathscr{B}$-rationalizable} if it is rationalizable for player $i$ with \emph{any type} $t_i \in T_i^{\mathscr{B}}$, and it is \emph{weakly $\mathscr{B}$-rationalizable} if it is rationalizable for player $i$ with \emph{some type} $t_i \in T_i^{\mathscr{B}}$.\footnote{Whether an action $a_i$ is interim-correlated rationalizable for some type $t_i$ does not depend on the description of the underlying type space \citep{ICR}. Hence we only need to define rationalizability for the universal type space, even though the type spaces that we will work with are the smaller type spaces $\left(T_i^{\mathscr{B}(\bz_n)}, \kappa_i^{\mathscr{B}(\bz_n)}\right)$, and they vary depending on the data $\bz_n$.} Strong and weak $\mathscr{B}$-rationalizability represent (respectively) maximally stringent and maximally lenient approaches for determining whether $a_i$ constitutes a ``reasonable" prediction in the interim type space $\left(T^{\mathscr{B}}_i, \kappa^{\mathscr{B}}_i\right)_{i \in \mathcal{I}}$.\footnote{Given the restriction in Definition \ref{def:typespace}, the weakly $\mathscr{B}$-rationalizable strategies are the $\Delta$-rationalizable strategies of \citet{battigalli}, where $\Delta=(\Delta_i)_{i \in \mathcal{I}}$ and each $\Delta_i = \{\nu \in \Delta(\Theta \times T_{-i} \times A_{-i}) \mid \marg_{\Theta} \mu \in \mathscr{B}\}$ encodes the belief restriction that first-order beliefs belong to $\mathscr{B}$. The concept of strong $\mathscr{B}$-rationalizability can be interpreted as a ``robust" version of $\Delta$-rationalizability.}

The main concept of a confidence set is now defined.

\begin{definition} For every $n \in \mathbb{Z}_+$, define $\underline{p}^n(a_i)$ to be the probability (over possible datasets $\bz_n$) that action $a_i$ is rationalizable \emph{for every} type in $T_i^{\mathscr{B}(\bz_n)}$; that is,
\begin{equation} \label{pRobust}
\underline{p}^n(a_i)= P^n \left(\{\bz_n\,: \, a_i  \mbox{ is strongly $\mathscr{B}(\bz_n)$-rationalizable}\}\right).
\end{equation}
Define $\overline{p}^n(a_i)$ to be the probability (over possible datasets $\bz_n$) that action $a_i$ is rationalizable \emph{for some} type $t_i \in T^{\mathscr{B}(\bz_n)}_i$; that is,
\begin{equation} \label{pPlausible}
\overline{p}^n(a_i)= P^n \left(\{\bz_n\,: \, a_i \mbox{ is weakly $\mathscr{B}(\bz_n)$-rationalizable}\}\right).
\end{equation}
The \emph{confidence set} for rationalizability of $a_i$ given $n$ observations is $[\underline{p}^n(a_i),\overline{p}^n(a_i)]$.
\end{definition}

The larger $\underline{p}^n(a_i)$ and $\overline{p}^n(a_i)$ are, the more confident an analyst should be in predicting that $a_i$ is rationalizable. At extremes: If $\overline{p}^n(a_i)=\underline{p}^n(a_i)=1$, then given observation of $n$ random samples,  action $a_i$ is guaranteed to be rationalizable for player $i$ (for all permitted types). If $\overline{p}^n(a_i)=\underline{p}^n(a_i)=0$, then action $a_i$ is guaranteed to not be rationalizable for player $i$ (for any permitted types). In the intermediate cases, if $0<\underline{p}^n(a_i)=\overline{p}^n(a_i)<1$, then rationalizability of the action $a_i$ depends on the specific realization of the data, and if $\underline{p}^n(a_i) <\overline{p}^n(a_i)$, then the prediction requires assumptions on the details of the agent's belief beyond Assumption \ref{ass:Example}.\footnote{I do not comment here on what further assumptions may be imposed, interpreting this case simply as one in which the prediction is tenuous.}

\begin{observation} For every player $i$ and action $a_i\in A_i$:
\begin{itemize}
\item[(a)] $\underline{p}^n(a_i) \leq \overline{p}^n(a_i)$ for every $n \in \mathbb{Z}_+$.
\item[(c)] If $\mathcal{M}$ consists of a single learning rule,  then $\underline{p}^n(a_i) = \overline{p}^n(a_i)$ for every $n \in \mathbb{Z}_+$. 
\end{itemize}
\end{observation}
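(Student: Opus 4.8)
The plan is to get (a) from a set-inclusion plus monotonicity of $P^n$, and (c) from the fact that a singleton restriction on \emph{first}-order beliefs in fact pins down the \emph{entire} hierarchy of every permitted type, so that $T_i^{\mathscr{B}(\bz_n)}$ is a singleton.

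\textbf{Part (a).} First I would record that $\mathscr{B}(\bz_n)$ is nonempty for every $\bz_n$, since $\mathcal{M}$ is nonempty and $\mu(\bz_n)\in\mathscr{B}(\bz_n)$ for any $\mu\in\mathcal{M}$, and that consequently $T_i^{\mathscr{B}(\bz_n)}$ is nonempty: fixing any $\mu\in\mathcal{M}$, the ``complete-information'' hierarchy under which every player is certain that $\Theta\sim\mu(\bz_n)$, certain that every player is certain of this, and so on, lies in $\bigcap_{k\ge1}B_i^{k,1}(\mathscr{B}(\bz_n))$. Given nonemptiness, for a fixed $\bz_n$, ``$a_i$ rationalizable for \emph{every} type in $T_i^{\mathscr{B}(\bz_n)}$'' implies ``$a_i$ rationalizable for \emph{some} type in $T_i^{\mathscr{B}(\bz_n)}$''; equivalently, $\{\bz_n: a_i \text{ strongly }\mathscr{B}(\bz_n)\text{-rationalizable}\}\subseteq\{\bz_n: a_i \text{ weakly }\mathscr{B}(\bz_n)\text{-rationalizable}\}$. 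Both events are $P^n$-measurable (implicit in the definitions of $\underline{p}^n$ and $\overline{p}^n$), so monotonicity of $P^n$ gives $\underline{p}^n(a_i)\le\overline{p}^n(a_i)$.

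\textbf{Part (c).} When $\mathcal{M}=\{\mu\}$ we have $\mathscr{B}(\bz_n)=\{\mu(\bz_n)\}$, a singleton, for every $\bz_n$. The crux is the following claim: for any $\nu\in\Delta(\Theta)$ and any player $i$, the set $T_i^{\{\nu\}}$ contains exactly one type. Granting it, for every $\bz_n$ the set $T_i^{\mathscr{B}(\bz_n)}$ is a singleton, so ``$a_i$ rationalizable for every type in $T_i^{\mathscr{B}(\bz_n)}$'' and ``$a_i$ rationalizable for some type in $T_i^{\mathscr{B}(\bz_n)}$'' assert exactly the same thing; hence the events in \eqref{pRobust} and \eqref{pPlausible} coincide, and $\underline{p}^n(a_i)=\overline{p}^n(a_i)$ for all $n$.

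To prove the claim, I would induct on the order $k$ to show that all $t_i\in T_i^{\{\nu\}}$ share the same $k$-th order belief $t_i^k$ (nonemptiness is from part (a)). For $k=1$, membership in $B_i^{1,1}(\{\nu\})$ forces $t_i^1=\nu$. For the inductive step, I would use that each $T_j^{\{\nu\}}$ is belief-closed (the footnote after Definition \ref{def:typespace}; equivalently, intersecting $t_i\in B_i^{m,1}(\{\nu\})$ over all $m$ gives $\kappa_i^*(t_i)\big(\Theta\times\prod_{j\ne i}T_j^{\{\nu\}}\big)=1$), so that by the inductive hypothesis the marginal of $t_i^k$ on $\prod_{j\ne i}\Delta(X_j^{k-2})$ — player $i$'s belief about opponents' $(k{-}1)$-st order beliefs — is the point mass at the fixed profile $(\bar t_j^{\,k-1})_{j\ne i}$. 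Coherency fixes the marginal of $t_i^k$ on $X_i^{k-2}$ at the already-determined $t_i^{k-1}$. Since a probability measure on a product space one factor of which puts mass $1$ on a single point is determined by its marginal on the other factor, $t_i^k$ is the same for all $t_i\in T_i^{\{\nu\}}$; and since in the universal type space a coherent type is identified with its belief hierarchy, $T_i^{\{\nu\}}$ is a singleton. I expect this last step to be the main obstacle: one must keep the recursive structure of the spaces $X_i^k$ straight and verify that belief-closedness and coherency together pin down the \emph{joint} law $t_i^k$, not merely its two marginals.
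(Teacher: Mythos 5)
The paper states this Observation without proof, treating both parts as immediate, so there is no official argument to compare against; your write-up is a correct and complete verification of exactly the two facts the paper implicitly relies on. For (a) you rightly notice that the inclusion of events requires $T_i^{\mathscr{B}(\bz_n)}\neq\emptyset$ (otherwise strong $\mathscr{B}$-rationalizability would hold vacuously while weak would fail), and your construction of the common-certainty type settles this; for (c) the induction showing $T_i^{\{\nu\}}$ is a singleton---using belief-closedness to make the marginal over opponents' lower-order beliefs a point mass and coherency plus the product-with-point-mass observation to pin down the joint law $t_i^k$---is sound and is the step that genuinely needs an argument.
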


In the special case in which agents have a common prior, the definitions in $\underline{p}^n(a_i)$ and $\overline{p}^n(a_i)$ have the following familiar interpretation:

\begin{remark}(Common Prior.) \label{ex:CP} Suppose that players share a common prior over $\Theta \times \mathcal{Z}^\infty$ and for simplicity let $\mathcal{Z}$ be finite. Write $\mu$ for the learning rule that maps $\bz_n$ into the induced posterior belief over $\Theta$ under the common prior. Then, each realization $\bz_n$ determines an interim game, where players all have common certainty in the posterior belief. Moreover, the common prior determines a distribution over $\bz_n$, and hence over possible interim games. For any player $i$ and action $a_i$, the probabilities $\underline{p}^n(a_i) = \overline{p}^n(a_i)$, and are equal to the measure of size-$n$ datasets $\bz_n$ (under the common prior\footnote{A small difference in the formulations is that  $\underline{p}^n(a_i)$ and $\overline{p}^n(a_i)$ are defined using the ``true" probability measure $P \in \Delta(\mathcal{Z}^\infty)$ in the present approach, instead of a measure $Q \in \Delta(\Theta \times \mathcal{Z}^\infty)$.}) with the property that action $a_i$ is rationalizable for player $i$ in the corresponding interim game.\footnote{This approach is used for example in \citet{kajiimorris} (if we re-interpret the histories $\bz_n$ as the states), where an incomplete information game is ``close" to a complete information game if the payoffs of the complete information game occur with high probability under the prior.}

\end{remark}

In the above approach, the common prior serves multiple roles: it determines the true distribution over the data that agents might see, and also determines how agents update from that data. When we separate these roles, we can still use an objective data-generating process to define a measure over interim games, as I do here. In this way, the probabilities $\underline{p}^n(a_i)$ and $\overline{p}^n(a_i)$ are a natural generalization of a standard measure of the typicality of a strategic prediction, in the absence of a common prior.

\section{Asymptotic Results} \label{sec:Asymptotic}

The subsequent sections study how confidence sets depend on the underlying learning environment and the game in question. I first consider the limiting behavior of the probabilities $\overline{p}^n(a_i)$ and $\underline{p}^n(a_i)$ as the quantity of data $n$ gets large. Recall that by Assumption \ref{ass:limitbelief}, the beliefs induced by the different learning rules converge to a limiting belief $\mu^\infty$. Thus, the $n=\infty$ limit corresponds to an incomplete information game in which players have common certainty in the event that every player has first-order belief $\mu^\infty$. Whether the probabilities $\underline{p}^n(a_i)$ and $\overline{p}^n(a_i)$ are continuous at $n=\infty$ tells us how sensitive rationalizability of $a_i$ is to an assumption that agents have coordinated their beliefs using infinite data. When these probabilities are discontinuous at $n=\infty$, then the infinite-data prediction is fragile---that is, the analyst would make a different prediction for arbitrarily large but finite quantities of data.

Formally, let $t_i^\infty$ be the player $i$ type with common certainty in the event that each player's first-order belief is $\mu^\infty$. Then define $\underline{p}^\infty(a_i)=\overline{p}^\infty(a_i)=1$ if the action $a_i$ is rationalizable for type $t_i^\infty$, and define $\underline{p}^\infty(a_i)=\overline{p}^\infty(a_i)=0$ if it is not. 

\begin{definition} The confidence set for action $a_i$ is \emph{asymptotically continuous} if 
\[\lim_{n \rightarrow \infty} [\underline{p}^n(a_i),\overline{p}^n(a_i)] = [\underline{p}^\infty(a_i), \overline{p}^\infty(a_i)].\]
\end{definition}

\subsection{Fragile Predictions}

Whether confidence sets are asymptotically continuous depends crucially on whether the beliefs induced by  the different learning rules converge \emph{uniformly} to $\mu^\infty$. 
\begin{assumption}[Uniform Convergence] \label{ass:UniformConv} 
$\lim_{n\rightarrow \infty} \sup_{\mu \in \mathcal{M}} d_P(\mu(Z^n),\mu^\infty)=0$ $P$-a.s., where $d_P$ is the Prokhorov metric on $\Delta(\Theta)$.
\end{assumption}

Assumption \ref{ass:limitbelief} already implies that for each learning rule $\mu \in \mathcal{M}$, the (random) induced belief $\mu(Z^n)$  almost surely converges to $\mu^\infty$ as the quantity of data $n$ grows large. Assumption \ref{ass:UniformConv} strengthens this by requiring additionally that the speed of convergence does not vary too much across the different learning rules in $\mathcal{M}$. Specifically, the sequence of beliefs $\{\mu(Z^n)\}$ must converge to $\mu^\infty$ (as $n \rightarrow \infty$) uniformly across $\mu \in \mathcal{M}$. 

A sufficient condition for Assumption \ref{ass:UniformConv} to hold is that the set of learning rules $\mathcal{M}$ is finite. But failures of Assumption \ref{ass:UniformConv} occur for classes of learning rules that we may consider plausible. In particular, Assumption \ref{ass:UniformConv} fails if the class $\mathcal{M}$ is too rich, as in the following example:

\begin{example}[Rich Sets of Priors and Likelihoods] \label{ex:RichPriors} An unknown parameter $v$ takes values in $\{0,1\}$. Players commonly observe a sequence of realizations from the set $Z=\{0,1\}$. Learning rules $\mu_{\pi,q} \in \mathcal{M}$ are indexed to parameters $\pi \in (0,1)$ and $q\in (1/2,1)$, where the parameter $\pi$ is the prior probability of value 1, and $q$ identifies the following signal structure:
\[\begin{array}{ccc}
& z=0 & z=1 \\
v=0 & q & 1-q \\
v=1 & 1-q & 1
\end{array}\]
 Each rule $\mu_{\pi,q}$  is identified with prior $\pi$ and signal structure $q$, and maps the observed signal outcomes into the posterior belief over $\{0,1\}$. Assume that the true data-generating process belongs to this class; that is, there exists some  $q^* \in (1/2,1)$ such that  the distribution over the signal set $\{0,1\}$ is $(q^*,1-q^*)$ when $v=0$, and the distribution is $(1-q^*,q^*)$ when $v=1$.
\end{example}

In this example, all learning rules lead to the same belief (that is, there is \emph{asymptotic agreement} in the sense of \citet{yildiz}). But because the rate of this convergence cannot be uniformly bounded across the different learning rules, it is possible for the confidence set to be discontinuous at $n=\infty$.

\begin{claim} \label{claim:Discontinuity} 
Consider the trading game described in Section \ref{sec:example}, and the data-generating process and set of learning rules from Example \ref{ex:RichPriors}. Then,
$\lim_{n\rightarrow \infty} [\underline{p}^n(a_i),\overline{p}^n(a_i)]=[0,1]$, while 
 $[\underline{p}^\infty(a_i),\overline{p}^\infty(a_i)]=\{0\}$, so the prediction that entering is not rationalizable for the Seller is not asymptotically continuous.
\end{claim}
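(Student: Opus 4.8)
The plan is to verify the two endpoint claims separately, since $[\underline{p}^\infty(a_i),\overline{p}^\infty(a_i)]=\{0\}$ is almost immediate from the earlier analysis and the content is entirely in showing $\lim_{n\to\infty}[\underline{p}^n,\overline{p}^n]=[0,1]$. For the limiting game: under Example~\ref{ex:RichPriors} the data-generating process has $v$ identified in the limit (since $q^*>1/2$, the empirical frequency of signals pins down $v$ almost surely), so $\mu^\infty$ is the point mass on the true $v$, and by the footnote argument reproduced after Figure~\ref{fig:game}, entering is not rationalizable for the Seller when beliefs are common certainty at a point mass with $\mathbb{E}(v)\neq p$ (and generically $\mathbb{E}(v)\in\{0,1\}\neq p$). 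Hence $\underline{p}^\infty=\overline{p}^\infty=0$. It remains to show $\underline{p}^n\to 0$ and $\overline{p}^n\to 1$. The lower bound $\underline{p}^n=0$ for every $n$ should follow exactly as in Claim~\ref{claim:Trade}: for any data $\bz_n$, there is always a learning rule in $\mathcal{M}$ whose posterior assigns probability, say, $1$ (or at least $>p$) to $v=1$, and for that type the Seller's entering is not rationalizable by the no-trade logic — so entering fails to be rationalizable for \emph{all} permitted types, giving $\underline{p}^n=0$. (One must check this rule is actually in $\mathcal{M}$: with priors $\pi$ free in $(0,1)$ and likelihood parameters $q\in(1/2,1)$ free, one can always find $(\pi,q)$ making the posterior on $v$ as close to a desired value as one likes after seeing $\bz_n$, since a sufficiently extreme prior overwhelms any fixed finite sample, or a $q$ near $1/2$ makes the sample nearly uninformative.)

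The heart of the argument is $\overline{p}^n\to 1$: I must show that with probability tending to $1$ over $\bz_n$, there is \emph{some} permitted type for which entering is rationalizable. By the no-trade footnote, what is needed is a configuration of first-order beliefs in $\mathscr{B}(\bz_n)$ — one for the Seller, one for the Buyer — such that the Buyer believes $\mathbb{E}(v)\geq p$ while the Seller believes $\mathbb{E}(v)\leq p$, \emph{and} such that this pair can be embedded in a common-certainty type structure (Definition~\ref{def:typespace}) in which entering and selling survive iterated deletion. The key observation is that $\mathscr{B}(\bz_n)$ under Example~\ref{ex:RichPriors} is essentially \emph{all} of $\Delta(\{0,1\})$, or at least a set containing beliefs on both sides of $p$: since for any finite $\bz_n$ and any target posterior $r\in(0,1)$ one can choose a prior $\pi$ and likelihood $q\in(1/2,1)$ delivering posterior arbitrarily close to $r$, the set $\mathscr{B}(\bz_n)$ has nonempty interior around both $0$ and $1$ for every $n$. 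Given that, one can let the Seller have a belief with $\mathbb{E}(v)<p$ and have common certainty that everyone holds that belief, except that the Seller assigns probability $1$ to the Buyer having a belief with $\mathbb{E}(v)>p$ and to the Buyer's believing everyone is optimistic — a belief-closed type structure inside $T_i^{\mathscr{B}(\bz_n)}$ exists because both candidate beliefs lie in $\mathscr{B}(\bz_n)$. In that structure the Buyer (optimistic) strictly prefers to buy and the Seller, believing the Buyer buys, strictly prefers to sell at price $p>c$-adjusted-value and hence to enter. So entering is weakly $\mathscr{B}(\bz_n)$-rationalizable. The only place randomness enters is whether $\mathscr{B}(\bz_n)$ really contains beliefs on both sides of $p$; I expect this to hold for \emph{every} realization $\bz_n$ (the prior can always be made extreme enough), in which case $\overline{p}^n=1$ for all $n$, a fortiori $\overline{p}^n\to 1$. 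If instead there is a measure-zero or vanishing set of $\bz_n$ for which it fails, one bounds that probability and lets it vanish.

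The main obstacle is the embedding step: exhibiting a belief-closed subset of $T_i^{\mathscr{B}(\bz_n)}$ realizing the asymmetric ``optimistic Buyer / pessimistic Seller'' configuration and checking that entering-then-selling survives interim correlated rationalizability there. This is where one uses that the belief restriction in Definition~\ref{def:typespace} permits common-knowledge disagreement — there is no coherence obstruction to the Seller being certain the Buyer is optimistic while the Buyer is certain everyone is optimistic — so one simply writes down the two-type (or finite-type) structure explicitly and runs two rounds of deletion: round one, the optimistic Buyer's unique best reply is \emph{buy}; round two, the Seller who is certain the Buyer buys has \emph{sell} (hence \emph{enter}) as a best reply since $p$ exceeds his perceived expected value net of keeping the good — matching exactly the inequality chain in the no-trade footnote, now satisfied because beliefs need not agree. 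Everything else (the $n=\infty$ computation, $\underline{p}^n=0$, the richness of $\mathscr{B}(\bz_n)$) is routine given Claim~\ref{claim:Trade} and the structure of Example~\ref{ex:RichPriors}.
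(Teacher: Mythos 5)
Your proposal is correct and follows essentially the same route as the paper's proof: the law of large numbers pins down the degenerate limiting belief (giving $\underline{p}^\infty=\overline{p}^\infty=0$), extreme priors overwhelm any finite sample so that $\mathscr{B}(\bz_n)$ contains posteriors on both sides of $p$, and the pessimistic-Seller/optimistic-Buyer common-certainty type construction delivers weak $\mathscr{B}(\bz_n)$-rationalizability of entering. The only differences are cosmetic: you make the $\underline{p}^n=0$ step explicit (the paper defers to the argument in Claim \ref{claim:Trade}), and you note that the posterior richness holds for \emph{every} realization so $\overline{p}^n=1$ exactly, whereas the paper restricts to majority-$z=1$ samples and concludes $\overline{p}^n\to 1$ via the law of large numbers --- both suffice.
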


The claim tells us that although trade will not occur in the limiting game, this prediction is  sensitive to the assumption that agents have indeed coordinated their priors using infinite data. Even if the amount of data that players commonly observed were to be arbitrarily large, the analyst should nevertheless consider trade to be a plausible outcome.

\subsection{Asymptotic Continuity} \label{sec:Continuity}

In contrast, when the assumption of uniform convergence is satisfied, then the limiting confidence sets can be tightly linked to predictions in the limiting game.

\begin{theorem} \label{prop:Limit} Suppose Assumption \ref{ass:UniformConv} is satisfied. 
\begin{itemize}
\item[(a)] If $a_i$ is strictly rationalizable for player $i$ of type $t_i^\infty$, then
\[lim_{n\rightarrow \infty}[\underline{p}^n(a_i),\overline{p}^n(a_i)]=\{1\}.\]
\item[(b)] If $a_i$ is not rationalizable for player $i$ of type $t_i^\infty$, then
\[\lim_{n\rightarrow \infty}[\underline{p}^n(a_i),\overline{p}^n(a_i)] = \{0\}.\]

\end{itemize}
\end{theorem}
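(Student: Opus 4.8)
The plan is to show that, under Assumption~\ref{ass:UniformConv}, for $P$-almost every sample path the induced type spaces $T_i^{\mathscr B(\bz_n)}$ collapse onto the single type $t_i^\infty$ in the uniform-weak topology, and then to transfer the limiting prediction using the known (dis)continuity properties of interim-correlated rationalizability in that topology. First I would record the consequence of Assumption~\ref{ass:UniformConv}: on a $P$-probability-one set of sample paths $\bz$, for every $\varepsilon>0$ there is $N(\bz,\varepsilon)$ such that for all $n\ge N(\bz,\varepsilon)$ one has $\mathscr B(\bz_n)\subseteq\{\nu\in\Delta(\Theta): d_P(\nu,\mu^\infty)\le\varepsilon\}$; past some sample size every plausible belief sits in the Prokhorov $\varepsilon$-ball around $\mu^\infty$.

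The key lemma I would isolate is purely topological: there is a modulus $\psi$ with $\psi(\varepsilon)\to 0$ as $\varepsilon\to 0$ such that whenever $\mathscr B\subseteq\Delta(\Theta)$ is contained in the Prokhorov $\varepsilon$-ball around $\mu^\infty$, every $t_i\in T_i^{\mathscr B}$ is within uniform-weak distance $\psi(\varepsilon)$ of $t_i^\infty$. I would prove this by exploiting the recursive structure of a metric inducing the uniform-weak topology (as characterized in \citet{faingold,faingold2}) together with belief-closedness of the sets $T_i^{\mathscr B}$ (the footnote after Definition~\ref{def:typespace}): the first-order belief of any $t_i\in T_i^{\mathscr B}$ is $\varepsilon$-close to $\mu^\infty$; conditional on $\kappa_i^*(t_i)$ the opponents' types lie in $T_{-i}^{\mathscr B}$ with probability one, on which event the same $\varepsilon$-bound recurs at the next order; unwinding bounds the uniform-weak distance to $t_i^\infty$ by a quantity that is a fixed point of this operation, so one can take $\psi(\varepsilon)=\varepsilon$ up to an inessential constant. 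Combining with the previous paragraph, for $P$-a.e.\ $\bz$ and all large $n$ \emph{every} permitted type for $\bz_n$ lies within uniform-weak distance $\psi(\varepsilon)$ of $t_i^\infty$.

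For part~(a), since $a_i$ is \emph{strictly} rationalizable for $t_i^\infty$, strict rationalizability is preserved under sufficiently small uniform-weak perturbations of the type, by the strategic-continuity characterization of this topology \citep{faingold,faingold2}; so there is $\delta>0$ with $a_i\in S_i^\infty[t_i]$ for every $t_i$ within uniform-weak distance $\delta$ of $t_i^\infty$. Taking $\varepsilon$ with $\psi(\varepsilon)<\delta$, the key lemma gives that for $P$-a.e.\ $\bz$ and all large $n$ the action $a_i$ is rationalizable for \emph{all} $t_i\in T_i^{\mathscr B(\bz_n)}$, i.e.\ $a_i$ is strongly $\mathscr B(\bz_n)$-rationalizable; hence the indicator of this event converges to $1$ $P$-a.s., and Fatou's lemma yields $\underline{p}^n(a_i)\to 1$, whence also $\overline{p}^n(a_i)\to 1$ since $\underline{p}^n(a_i)\le\overline{p}^n(a_i)\le 1$. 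For part~(b), since $a_i$ is \emph{not} rationalizable for $t_i^\infty$ and the correspondence $t_i\mapsto S_i^\infty[t_i]$ is upper hemicontinuous in the uniform-weak topology, the set $\{t_i: a_i\in S_i^\infty[t_i]\}$ is closed and misses some uniform-weak $\delta$-ball around $t_i^\infty$. Taking $\varepsilon$ with $\psi(\varepsilon)<\delta$, the key lemma gives that for $P$-a.e.\ $\bz$ and all large $n$ \emph{no} $t_i\in T_i^{\mathscr B(\bz_n)}$ has $a_i\in S_i^\infty[t_i]$, so $a_i$ is not weakly $\mathscr B(\bz_n)$-rationalizable; the reverse-Fatou argument then gives $\overline{p}^n(a_i)\to 0$, whence $\underline{p}^n(a_i)\to 0$.

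The main obstacle is the key lemma of the second paragraph: passing from Hausdorff--Prokhorov convergence of the sets of plausible beliefs to uniform-weak convergence of the induced type spaces. This is exactly where Assumption~\ref{ass:UniformConv} bites and where a weaker topology (e.g.\ the product topology) would not suffice---absent a modulus $\psi$ uniform across types \emph{and} across hierarchy levels, first-order beliefs near $\mu^\infty$ could still be carried by types whose strategic behavior is far from that of $t_i^\infty$, which is precisely the mechanism behind Claim~\ref{claim:Discontinuity}. Once the lemma is in hand the remaining work is routine, given the hemicontinuity properties of interim-correlated rationalizability in the uniform-weak topology established in \citet{faingold,faingold2}.
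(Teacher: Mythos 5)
Your overall architecture is right, and for part (b) your argument is essentially the paper's: your ``key lemma'' with $\psi(\varepsilon)=\varepsilon$ is exactly Lemma \ref{lemm:UWdistance} (proved by the same induction on hierarchy levels, using belief-closedness of $T_i^{\mathscr{B}}$), and the conclusion then follows from upper hemicontinuity of $S_i^\infty$ in the uniform-weak topology, just as in the appendix. You also correctly diagnose that the crux is uniform (Hausdorff) convergence of the permitted type sets to $\{t_i^\infty\}$ rather than mere sequential convergence of selections.

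The gap is in part (a). You need the statement that strict rationalizability of $a_i$ at $t_i^\infty$ implies $a_i\in S_i^\infty[t_i]$ for \emph{every} $t_i$ in some uniform-weak ball around $t_i^\infty$, and you propose to cite this as a ``strategic-continuity characterization'' from \citet{faingold,faingold2}. Two problems. First, what those papers provide off the shelf is lower hemicontinuity along sequences plus the equivalence of the uniform-weak and uniform-strategic topologies for \emph{finite} $\Theta$; here $\Theta$ is a compact convex subset of Euclidean space and the metric $d^{UW}$ has been modified accordingly (sup-norm in place of the discrete metric on $\Theta$), so the quantitative relation between $d^{UW}$ and the rationalizability indices cannot simply be quoted---it is precisely what must be re-established. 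Second, supplying the missing uniform modulus is the substantive content of the paper's Lemma \ref{lemm:eps}, which bypasses the uniform-weak detour entirely for part (a): it shows directly, via the best-reply characterization of ICR, Lipschitz continuity of payoffs in $\theta$ (Lemma \ref{lemm:Lipschitz}), and the bound $d_W\le\xi\, d_P$ from \citet{metric}, that if $a_i$ is $\delta$-strictly rationalizable at $t_i^\infty$ then it is rationalizable for all types with common certainty that first-order beliefs lie in the $\delta/(2K\xi)$-Prokhorov neighborhood of $\mu^\infty$; combined with Assumption \ref{ass:UniformConv} and Markov's inequality this gives $\underline{p}^n(a_i)\to 1$. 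Your plan would go through if you replaced the citation with a proof of such a quantitative lemma (or of the ball version of lower hemicontinuity in the present infinite-$\Theta$ setting); as written, the step you treat as citable is where the actual work lives.
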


This theorem says that if an action $a_i$ is strictly rationalizable for player $i$ given infinite data, then $\overline{p}^n(a_i)$ and $\underline{p}^n(a_i)$ both converge to 1 as $n$ grows large.\footnote{If the limiting belief $\mu^\infty$ is degenerate at a limiting parameter $\theta^\infty$, and players have common certainty that players' first-order beliefs have support in a shrinking neighborhood of $\theta^\infty$ (see Section \ref{subsec:lower} for a more formal development), then the property that $\overline{p}^n(a_i)\rightarrow 1$ is  equivalent to the property that $a_i$ is robustly rationalizable, as defined in \citet{satoru1}, with the small difference that \citet{satoru1} consider almost common belief in the exact parameter $\theta^\infty$, while I consider common certainty in a neighborhood of $\theta^\infty$. As Proposition 1 in \citet{satoru1} shows, strict rationalizability is a sufficient condition for robust rationalizability. See also \citet{KajiiMorris2020} for related results.} Thus, when agents observe sufficiently large quantities of public data, the analyst should be arbitrarily confident in predicting that $a_i$ is rationalizable. On the other hand, if action $a_i$ is not rationalizable given infinite data, then $\overline{p}^n(a_i)$ and $\underline{p}^n(a_i)$ both converge to 0, so the analyst should be arbitrarily confident in predicting that $a_i$ is \emph{not} rationalizable for large data sets.\footnote{The intermediate case in which $a_i$ is rationalizable for player $i$ given infinite data, but not strictly rationalizable,  is subtle and depends on details of the game. See Online Appendix \ref{app:More} for examples in which $\lim_{n\rightarrow \infty} [\underline{p}^n(a_i),\overline{p}^n(a_i)] = \{1\}$ and in which $\lim_{n\rightarrow \infty}[\underline{p}^n(a_i),\overline{p}^n(a_i)]=[0,1]$. Note that the latter corresponds to a maximally ambiguous outcome\textemdash no amount of data is decisive on whether or not the action should be considered rationalizable.}

Theorem \ref{prop:Limit} builds on results from the literature on topologies on the universal type space. Consider any sequence of types $(t_i^n)_{n=1}^\infty$ where each $t_i \in T_i^{\mathscr{B}(\bz_n)}$. Under Assumption \ref{ass:UniformConv}, as the quantity of data $n$ gets large, the types $t_i^n$ (almost surely) have common certainty that first-order beliefs lie in an arbitrarily small  neighborhood of the limiting belief $\mu^\infty$. Thus, the sequence $(t_i^n)$ can be shown to converge  to $t_i^\infty$, in the \emph{uniform-weak topology} \citep{faingold} on the universal type space (see Lemma \ref{lemm:UW}). Since rationalizability is upper hemi-continuous in the uniform-weak topology \citep{faingold}, Part (b) of the theorem follows.

Part (a) of the theorem is related to lower hemi-continuity of strict rationalizability in the uniform-weak topology (as shown in \citet{faingold}),\footnote{It is crucial that convergence occurs in this topology and not simply the product topology, as otherwise the negative results of \citet{WY} would apply.} but this property is not sufficient. Lower hemi-continuity guarantees that for any sequence of types $(t_i^n)_{n=1}^\infty$ from $T_i^{\mathscr{B}(\bz_n)}$, the action $a_i$ must eventually be rationalizable along the sequence, but the rates of this convergence can differ substantially across different sequences. For eventual strong $\mathscr{B}(\bz_n)$-rationalizability, we need that $a_i$ is rationalizable for \emph{all} types from $T_i^{\mathscr{B}(\bz_n)}$ when $n$ is sufficiently large. To establish this, I show that there is a $P$-measure 1 set of sequences along which the sets $\left(T_i^{\mathscr{B}(\bz_n)}\right)_{n=1}^\infty$ converge to the singleton set $\{t_i^\infty\}$ in the Hausdorff metric induced by the uniform-weak metric. The key lemma underlying this result, Lemma \ref{lemm:eps}, relates the degree of ``strictness" of rationalizability of action $a_i$ at the limiting type $t_i^\infty$ to the size of the neighborhood around $\mu^\infty$ such that common certainty of that neighborhood implies rationalizability of $a_i$. The stronger property that types converge uniformly over the set $T_i^{\mathscr{B}(\bz_n)}$ delivers the desired result.

\section{(Small) Finite Samples} \label{sec:finite}

The previous section characterized confidence sets given large numbers of common observations. I now focus on the setting of small $n$, and bound the extent to which the agent's confidence set $[\underline{p}^n(a_i),\overline{p}^n(a_i)]$ diverges from its asymptotic limit $[\underline{p}^\infty(a_i),\overline{p}^\infty(a_i)]$. Throughout this section, I impose the simplifying assumptions that observations are i.i.d., and that they take values from a finite set $\mathcal{Z}$:

\begin{assumption} \label{iid} $Z_1, \dots, Z_n \sim_{\text{i.i.d.}} Q$.
\end{assumption}
\begin{assumption}  $\vert \mathcal{Z} \vert <\infty$.
\end{assumption}

In some cases, as in the examples in Section \ref{sec:example}, the confidence set can be exactly characterized. In what follows, I provide bounds for the confidence set that can be easier to derive in certain cases.

\subsection{Lower Bound} \label{subsec:lower}
First consider an action $a_i$ that is strictly rationalizable for player $i$ of type $t_i^\infty$. By Theorem \ref{prop:Limit}, the analyst's confidence set $[\underline{p}^n(a_i),\overline{p}^n(a_i)]$ converges to a degenerate interval at 1. Proposition \ref{prop:speedRobust}, below, provides a lower bound on $\underline{p}^n(a_i)$, which informs how fast this convergence occurs.

A key input into the bound is the ``degree" to which $a_i$ is strictly rationalizable for the limiting type $t_i^\infty$. Say that a family of sets $(R_i[t_i])_{t_i \in T_i}$, where each $R_j[t_j]\subseteq A_j$, has the \emph{$\delta$-strict best reply property} if for each $i\in \mathcal{I}$, type $t_i \in T_i$, and action $a_i \in R_i[t_i]$ there is a conjecture $\sigma_{-i}: \Theta \times T_{-i} \rightarrow \Delta(A_{-i})$ to which $a_i$ is a $\delta$-strict best reply for $t_i$; that is,
\[\int_{\Theta} u_i(a_i, \sigma_{-i}(\theta,t_{-i}), \theta) t_i[d\theta \times dt_{-i}] -\int_{\Theta} u_i(a'_i, \sigma_{-i}(\theta,t_{-i}), \theta) t_i[d\theta \times dt_{-i}] \geq \delta \quad \forall a_i'\neq a_i.\]
Say that an action $a_i$ is \emph{$\delta$-strict rationalizable} for type $t_i$ if there exists a family of sets  $(R_j[t_j])_{t_j \in T_j}$ with the $\delta$-strict best reply property, where $a_i \in R_i[t_i]$.\footnote{This is equivalent to $\gamma$-rationalizability from \citet{ICR}, where $\gamma=-\delta$.}

Then, if $a_i$ is strictly rationalizable for the limiting type $t_i^\infty$, and players have commonly observed $n$ realizations, the probability that $a_i$ is rationalizable for all permitted types can be upper bounded as follows. 

\linespread{1}
\begin{proposition} \label{prop:speedRobust}  Suppose $a_i$ is strictly rationalizable for type $t_i^\infty$, and define
  \begin{equation} \label{eq:deltainf}
  \delta^\infty := \sup \left\{ \delta \,: \, a_i \mbox{ is $\delta$-strictly rationalizable for type $t_i^\infty$}\right\}
  \end{equation}
  noting that this quantity is strictly positive. Further define \begin{equation} \label{eq:xi}
\xi := \sup_{\theta, \theta' \in \Theta} \| \theta - \theta' \|_\infty.
\end{equation}
Then, for every $n\geq 1$,
\begin{equation} \label{boundR}
\underline{p}^n(a_i) \geq 1-\frac{2K\xi}{\delta^\infty} \mathbb{E}\left(\sup_{\mu \in \mathcal{M}} d_P(\mu(Z^n),\mu^\infty)\right)\end{equation}
where $K$ is the Lipschitz constant of the map $g: \Theta \rightarrow U$.
\end{proposition}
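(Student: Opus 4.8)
The plan is to bound the probability that \emph{every} permitted type at data $\bz_n$ has $a_i$ rationalizable, by controlling how far the induced type space $T_i^{\mathscr{B}(\bz_n)}$ sits from the limiting type $t_i^\infty$, and then converting ``closeness in beliefs'' into ``$a_i$ is still a best reply'' via the $\delta^\infty$-strictness margin. Concretely, fix a realization $\bz_n$ and let $\rho_n := \sup_{\mu \in \mathcal{M}} d_P(\mu(\bz_n),\mu^\infty)$. Every permitted type $t_i \in T_i^{\mathscr{B}(\bz_n)}$ has common certainty in the event that all first-order beliefs lie in $\mathscr{B}(\bz_n) \subseteq \{\nu \in \Delta(\Theta): d_P(\nu,\mu^\infty) \le \rho_n\}$, a Prokhorov-ball of radius $\rho_n$ around $\mu^\infty$. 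Since $\Theta$ has sup-norm diameter $\xi$, the Prokhorov distance controls the difference in expectations of Lipschitz functions on $\Theta$: for a function on $\Theta$ that is $L$-Lipschitz and bounded by $M$, two measures within Prokhorov distance $\rho$ have expectations differing by at most $(L+M)\rho$ or so — here I would use the standard bound that $d_P(\nu,\nu') \le \rho$ implies $|\int h\,d\nu - \int h\,d\nu'| \le L\rho + 2M\cdot\mathbbm{1}[\text{tail}]$, and since $\Theta$ is bounded the cleanest route is $|\int h\,d\nu - \int h\,d\nu'| \le (L\xi' + \|h\|_\infty)\rho$ — but the key point is it is linear in $\rho$.

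The main steps, in order: (i) Take $t_i$ permitted at $\bz_n$, and take the conjecture $\sigma_{-i}$ witnessing $\delta^\infty$-strict rationalizability of $a_i$ at $t_i^\infty$; I want to use (a version of) this same conjecture against $t_i$. (ii) The relevant payoff difference is $\int u_i(a_i,\sigma_{-i}(\theta,t_{-i}),\theta)\,t_i[d\theta\times dt_{-i}] - \int u_i(a_i',\sigma_{-i}(\theta,t_{-i}),\theta)\,t_i[d\theta\times dt_{-i}]$; I compare this to the same expression under $t_i^\infty$ (which is $\ge \delta^\infty$). The integrand, as a function of $\theta$, is a difference of payoffs, hence — using that payoffs factor through the Lipschitz embedding $g:\Theta\to U$ with constant $K$ and that $U$ carries the sup-norm — it is $2K$-Lipschitz in $\theta$ (factor $2$ from the two actions) and bounded by $2K\xi$ in the relevant sense. (iii) Because $t_i$'s marginal on $\Theta$ and $t_i^\infty$'s marginal ($=\mu^\infty$) are within Prokhorov distance $\rho_n$, the payoff gap at $t_i$ is at least $\delta^\infty - 2K\xi\rho_n$. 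Crucially I need this to hold not just for $t_i$'s own first-order belief but for the belief $t_i$ has about $\theta$ \emph{conditional on the opponent's type}, which is why common certainty (not just a restriction on $t_i$'s own first-order belief) matters — the conjecture $\sigma_{-i}$ ranges over $t_{-i}\in T_{-i}^{\mathscr{B}(\bz_n)}$, all of which are themselves close to $t_{-i}^\infty$, and one shows by the iterated-certainty structure that $\kappa_i^*(t_i)$ puts probability $1$ on $(\theta,t_{-i})$ with $t_{-i}$ permitted, so the same comparison goes through fibre-by-fibre. (iv) Hence whenever $\rho_n < \delta^\infty/(2K\xi)$, every permitted $t_i$ has $a_i$ as a strict best reply to (the restriction of) $\sigma_{-i}$, so running the iterated best-reply construction with $R_i[t_i] = \{a_i\}\cup(\text{limit survivors})$ shows $a_i$ is (strictly) rationalizable for all $t_i\in T_i^{\mathscr{B}(\bz_n)}$, i.e. $a_i$ is strongly $\mathscr{B}(\bz_n)$-rationalizable. (v) Therefore $\{\bz_n : a_i \text{ strongly } \mathscr{B}(\bz_n)\text{-rationalizable}\} \supseteq \{\bz_n : \rho_n < \delta^\infty/(2K\xi)\}$, and by Markov's inequality $P^n(\rho_n \ge \delta^\infty/(2K\xi)) \le \frac{2K\xi}{\delta^\infty}\mathbb{E}(\rho_n)$, giving $\underline{p}^n(a_i) \ge 1 - \frac{2K\xi}{\delta^\infty}\mathbb{E}(\sup_{\mu\in\mathcal{M}} d_P(\mu(Z^n),\mu^\infty))$.

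The step I expect to be the main obstacle is (iii)–(iv): making rigorous that a small Prokhorov perturbation of the \emph{first-order} belief, together with common certainty of that same perturbation at all higher orders, keeps $a_i$ a best reply \emph{uniformly} over the whole permitted type set — essentially a quantitative version of lower hemi-continuity of ($\delta$-)strict rationalizability in the uniform-weak topology, which the excerpt attributes to \citet{faingold} and refines in Lemma \ref{lemm:eps}. The delicate points are: the conjecture $\sigma_{-i}$ witnessing strictness at $t_i^\infty$ is defined on $\Theta\times T_{-i}$ and must be reused (or suitably adapted) for the perturbed type, which requires that the best-reply sets of the opponent types in $T_{-i}^{\mathscr{B}(\bz_n)}$ also contain $a_{-i}$-actions consistent with $\sigma_{-i}$'s support — i.e. an induction on rounds of elimination must be carried out simultaneously with the perturbation bound, so the Lipschitz/Prokhorov estimate has to survive $k$ rounds without blow-up (it does, because the conjecture is fixed and only the first-order belief varies). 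I would organize this as: first prove the per-data-set deterministic implication ``$\rho_n < \delta^\infty/(2K\xi) \Rightarrow a_i$ strongly $\mathscr{B}(\bz_n)$-rationalizable'' invoking Lemma \ref{lemm:eps} to get the right constant, then append the one-line Markov argument.
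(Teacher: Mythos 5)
Your proposal is correct and follows essentially the same route as the paper: first establish the deterministic implication that $\sup_{\mu \in \mathcal{M}} d_P(\mu(\bz_n),\mu^\infty) \leq \delta^\infty/(2K\xi)$ forces $a_i$ to be strongly $\mathscr{B}(\bz_n)$-rationalizable (the paper's Lemma \ref{lemm:eps}, proved exactly as you outline via the $2K$-Lipschitz payoff-gap function, the Wasserstein--Prokhorov comparison $d_W \leq \xi\, d_P$, and an induction keeping the limiting family $(R_j)$ closed under best replies for all permitted types), and then apply Markov's inequality to $\sup_{\mu \in \mathcal{M}} d_P(\mu(Z^n),\mu^\infty)$. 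The only cosmetic difference is that the paper does not literally reuse the limiting conjecture $\sigma^\infty_{-j}$ but replaces it pointwise by the gap-maximizing action in $R_{-j}$, which makes the conjecture depend on $\theta$ alone and renders your fibre-by-fibre concern in step (iii) moot.
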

\linespread{1.35}
\noindent Recalling that $\overline{p}^n(a_i)\geq \underline{p}^n(a_i)$ for every $n$, this proposition allows us to lower bound the confidence set $[\underline{p}^n(a_i),\overline{p}^n(a_i)]$.

The expression in (\ref{boundR}) is increasing in $\delta^\infty$, so the ``more strictly-rationalizable" the action is for the limiting type, the fewer observations are necessary for the prediction to hold. The bound is decreasing in $\mathbb{E}(\sup_{\mu \in \mathcal{M}} d_P(\mu(Z^n),\mu^\infty))$, which is the expected distance from the limiting belief $\mu^\infty$ to the farthest belief in the plausible set $\mathscr{B}(Z^n)$. When Assumption \ref{ass:UniformConv} is satisfied, then $\mathbb{E}(\sup_{\mu \in \mathcal{M}} d_P(\mu(Z^n),\mu^\infty)) \rightarrow 0$ as $n \rightarrow \infty$, and the speed of this convergence can be interpreted as the speed at which players \emph{commonly learn} \citep{commonlearning}. Thus, Theorem \ref{prop:speedRobust} suggests that the quicker players commonly learn, the fewer observations are necessary for limiting predictions to carry over to small-data settings.

In an important special case, the limiting belief $\mu^\infty$ is a point mass at some $\theta^\infty$,  and the sets $\mathscr{B}(\bz_n)$ consist of beliefs with support on shrinking neighborhoods of $\theta^\infty$. Formally, let 
\[\mathscr{C}(\bz_n) := \bigcup_{\mu \in \mathcal{M}} \supp \mu(\bz_n)  \quad  \forall \, \bz_n \in \mathcal{Z}^n\]
with the implication that every $\mu(\bz_n)$, $\mu \in \mathcal{M}$,  assigns probability 1 to $\mathscr{C}(\bz_n)$. If $\mathscr{C}(\bz_n)$ collapses to the singleton set $\{\theta^\infty\}$ as $n\rightarrow \infty$, then the bound in Proposition \ref{prop:speedRobust} can be simplified as follows.

\begin{assumption} \label{ass:ShrinkSupport}
$ \sup_{\theta \in \mathscr{C}(Z^n)}\|\theta - \theta^\infty \|_\infty$ converges to zero $P$-almost surely.\footnote{That is, there is a $P$-measure 1 set of (infinite) sequences such that $ \sup_{\theta \in \mathscr{C}(\bz_n)}\|\theta - \theta^\infty \|_\infty\rightarrow 0$ as $n \rightarrow \infty$ for each sequence $\bz$ in this set.}
\end{assumption}

\begin{proposition}  \label{corr:ShrinkSet} Suppose Assumption \ref{ass:ShrinkSupport} holds, and the action $a_i$ is strictly rationalizable for type $t_i^\infty$. Then, for every $n\geq 1$,
\linespread{1}
\begin{equation} \label{boundR}
\underline{p}^n(a_i) \geq 1- \frac{2K}{\delta^\infty}  \mathbb{E} \left(  \sup_{\theta\in \mathscr{C}(Z^n)} \left\| \theta - \theta^\infty \right\|_\infty\right)\end{equation}
where $K$ is the Lipschitz constant of the map $g: \Theta \rightarrow U$.
\end{proposition}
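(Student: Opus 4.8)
The plan is to follow the proof of Proposition \ref{prop:speedRobust} but to replace its global estimate by one localized around $\theta^\infty$. First I would record two preliminaries. Under Assumption \ref{ass:ShrinkSupport} the limiting belief must be the point mass $\mu^\infty = \delta_{\theta^\infty}$: each $\mu(Z^n)$ is supported on $\mathscr{C}(Z^n)$, whose $\|\cdot\|_\infty$-radius around $\theta^\infty$ vanishes $P$-a.s., so $\mu(Z^n)$ converges weakly to $\delta_{\theta^\infty}$ and Assumption \ref{ass:limitbelief} pins down $\mu^\infty$. Second, an elementary Prokhorov estimate gives $d_P(\nu,\delta_{\theta^\infty}) \le r$ whenever $\supp \nu \subseteq \{\theta : \|\theta-\theta^\infty\|_\infty \le r\}$; hence, writing $r(\bz_n) := \sup_{\theta\in\mathscr{C}(\bz_n)}\|\theta-\theta^\infty\|_\infty$, we get $\sup_{\mu\in\mathcal{M}} d_P(\mu(Z^n),\mu^\infty) \le r(Z^n)$. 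In particular Assumption \ref{ass:UniformConv} holds, so Theorem \ref{prop:Limit} applies and $\delta^\infty$ in \eqref{eq:deltainf} is strictly positive, just as in Proposition \ref{prop:speedRobust}.

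The core is a deterministic claim: if $2Kr(\bz_n) < \delta^\infty$, then $a_i$ is strongly $\mathscr{B}(\bz_n)$-rationalizable. To prove it I would take a family $(R_j[\cdot])_{j\in\mathcal{I}}$ with the $\delta^\infty$-strict best reply property witnessing $\delta^\infty$-strict rationalizability of $a_i$ at $t_i^\infty$. Since $t_j^\infty$'s hierarchy is concentrated at $(\theta^\infty, t_{-j}^\infty)$, the conjecture witnessing each $a_j \in R_j[t_j^\infty]$ matters only through its value there, which I would record as a constant $\bar\sigma_{-j} \in \Delta(R_{-j}[t_{-j}^\infty])$ satisfying $u_j(a_j,\bar\sigma_{-j},\theta^\infty) - u_j(a_j',\bar\sigma_{-j},\theta^\infty) \ge \delta^\infty$ for all $a_j' \ne a_j$. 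On the data-dependent type space $(T_j^{\mathscr{B}(\bz_n)},\kappa_j^{\mathscr{B}(\bz_n)})$ I would then use the constant family $\tilde R_j[t_j] := R_j[t_j^\infty]$ together with the constant conjectures $\bar\sigma_{-j}$. Every permitted $t_j$ has $\marg_\Theta\kappa_j^*(t_j) \in \mathscr{B}(\bz_n)$, hence supported within $\|\cdot\|_\infty$-distance $r(\bz_n)$ of $\theta^\infty$; since payoffs factor through the $K$-Lipschitz embedding $g$, each relevant expected-payoff integral moves by at most $Kr(\bz_n)$ when $\delta_{\theta^\infty}$ is replaced by $t_j$'s belief, so the $\delta^\infty$ best-reply gap survives as $\ge \delta^\infty - 2Kr(\bz_n) > 0$. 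Thus $(\tilde R_j[\cdot])$ has the $(\delta^\infty - 2Kr(\bz_n))$-strict best reply property on the $\bz_n$-type space, exhibiting $a_i \in \tilde R_i[t_i]$ as (strictly, hence) rationalizable for every $t_i \in T_i^{\mathscr{B}(\bz_n)}$.

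The conclusion is then routine. The deterministic claim gives $\{\bz_n : r(\bz_n) < \delta^\infty/(2K)\} \subseteq \{\bz_n : a_i \text{ is strongly } \mathscr{B}(\bz_n)\text{-rationalizable}\}$, so by Markov's inequality applied to $r(Z^n) \ge 0$,
\[ \underline{p}^n(a_i) \;\ge\; P^n\!\bigl(r(Z^n) < \tfrac{\delta^\infty}{2K}\bigr) \;=\; 1 - P^n\!\bigl(r(Z^n) \ge \tfrac{\delta^\infty}{2K}\bigr) \;\ge\; 1 - \frac{2K}{\delta^\infty}\,\mathbb{E}\bigl(r(Z^n)\bigr). \]
Recalling $r(Z^n) = \sup_{\theta\in\mathscr{C}(Z^n)}\|\theta-\theta^\infty\|_\infty$, this is the asserted bound.

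I expect the main obstacle to be the deterministic claim---precisely, checking that the localization costs only $Kr(\bz_n)$ per integral rather than $K\xi\, d_P(\mu(\bz_n),\mu^\infty)$ as in Proposition \ref{prop:speedRobust}. Two features make it work: the witnessing conjectures at $t_i^\infty$ can be taken $\theta$-independent (legitimate exactly because $\mu^\infty$ is a point mass), and the \emph{entire} hierarchy of every permitted type is supported on $\mathscr{C}(\bz_n)$, so the passage from beliefs to payoffs never reintroduces the diameter $\xi$. Concretely, this is an audit of the proof of Proposition \ref{prop:speedRobust} and Lemma \ref{lemm:eps} under Assumption \ref{ass:ShrinkSupport}, tracking that the payoff oscillations in play are bounded by $K\sup_{\theta\in\mathscr{C}(\bz_n)}\|\theta-\theta^\infty\|_\infty$.
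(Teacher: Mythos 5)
Your proposal is correct and follows essentially the same route as the paper's proof: fix the constant conjectures witnessing $\delta^\infty$-strict rationalizability at the point mass $\theta^\infty$, show the best-reply gap degrades by at most $2K$ times the support radius $\sup_{\theta\in\mathscr{C}(\bz_n)}\|\theta-\theta^\infty\|_\infty$ (so strong $\mathscr{B}(\bz_n)$-rationalizability holds whenever that radius is below $\delta^\infty/(2K)$), and conclude by Markov's inequality. Your preliminaries (pinning down $\mu^\infty=\delta_{\theta^\infty}$ and verifying Assumption \ref{ass:UniformConv}) are steps the paper leaves implicit, but the substance is identical.
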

\linespread{1.35}

The expressions in Propositions \ref{prop:speedRobust} and \ref{corr:ShrinkSet} can be used to derive quantitative bounds for specific sets of learning rules, as in the following example: 

\begin{example} Consider the payoff matrix from Section \ref{sec:ExampleCoord} with unknown parameter $\beta \in \mathbb{R}$. Suppose that players commonly observe $n$ public signals
$z_t=\beta+\varepsilon_t$,
with standard normal error terms $\varepsilon_t$ that are i.i.d. across observations. The set of learning rules is $\mathcal{M}=\{\mu_x\}_{x \in [-\eta,\eta]}$, where each learning rule $\mu_x$ is identified with the prior belief
$\beta \sim \mathcal{N}(x,1)$, and maps data into a point mass at the posterior expectation of $\beta$. The set $\mathscr{C}(\bz_n)$ thus consists of the posterior expectations under the different priors, and players have common certainty in the event that all players have first-order beliefs with support on $\mathscr{C}(\bz_n)$. Let the true value of $\beta$ satisfy $\beta>1$. Then, applying Proposition \ref{corr:ShrinkSet}:

\begin{corollary} \label{corr:normal} For each $n\geq1$,
\linespread{1}
\[\underline{p}^n(\emph{strong}) \geq 1-\frac{1}{\beta-1} \left(\sqrt{\frac{2}{\pi n}} + \frac{\beta+\eta}{n+1}\right)\]
\end{corollary}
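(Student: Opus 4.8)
The plan is to apply Proposition \ref{corr:ShrinkSet} and then bound explicitly the expectation $\mathbb{E}\bigl(\sup_{\theta \in \mathscr{C}(Z^n)} \lvert \beta - \beta^\infty\rvert\bigr)$ for the specific Gaussian-conjugate family at hand. First I would identify the primitives: here $\Theta$ is (a bounded neighborhood of) $\mathbb{R}$, $g$ is the identity embedding $\beta \mapsto (-\beta, -1-\beta, \dots)$ of the parameter into the payoff matrix, so the Lipschitz constant is $K=1$; the limiting belief $\mu^\infty$ is a point mass at the true value $\beta^\infty=\beta$, since each conjugate posterior is consistent and all the priors $\mathcal{N}(x,1)$ with $x \in [-\eta,\eta]$ wash out; and since $\beta>1$, the strong lockdown is a strict best reply to the opponent playing strong with gap $\delta^\infty = \beta - 1$ (against \emph{any} conjecture, strong yields $-1$ versus weak's $-\beta$ when the opponent plays strong, and the limiting type is certain of $\beta$, so $\delta^\infty=\beta-1$). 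Assumption \ref{ass:ShrinkSupport} holds because $\mathscr{C}(Z^n)$ is exactly the interval of posterior means. These identifications reduce the corollary to showing $\mathbb{E}\bigl(\sup_{\theta \in \mathscr{C}(Z^n)} \lvert \theta - \beta\rvert\bigr) \le \tfrac12\bigl(\sqrt{2/(\pi n)} + (\beta+\eta)/(n+1)\bigr)$.

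Second, I would compute $\mathscr{C}(Z^n)$ explicitly. With prior $\beta \sim \mathcal{N}(x,1)$ and $n$ observations $z_t = \beta + \varepsilon_t$, $\varepsilon_t \sim \mathcal{N}(0,1)$ i.i.d., the posterior mean under prior $x$ is the standard conjugate expression $m_x(Z^n) = \frac{n\,\bar Z_n + x}{n+1}$, where $\bar Z_n = \frac1n\sum_{t=1}^n Z_t$. As $x$ ranges over $[-\eta,\eta]$, the posterior mean ranges over the interval centered at $\frac{n}{n+1}\bar Z_n$ with half-width $\frac{\eta}{n+1}$. Hence
\[
\sup_{\theta \in \mathscr{C}(Z^n)} \lvert \theta - \beta \rvert \;\le\; \Bigl\lvert \tfrac{n}{n+1}\bar Z_n - \beta \Bigr\rvert + \tfrac{\eta}{n+1}.
\]
Writing $\tfrac{n}{n+1}\bar Z_n - \beta = \tfrac{n}{n+1}(\bar Z_n - \beta) - \tfrac{\beta}{n+1}$ and using the triangle inequality and $\tfrac{n}{n+1}\le 1$, the middle term is at most $\lvert \bar Z_n - \beta\rvert + \tfrac{\beta}{n+1}$, so the whole thing is bounded by $\lvert \bar Z_n - \beta\rvert + \tfrac{\beta + \eta}{n+1}$. (One should handle signs carefully assuming $\beta,\eta>0$, which holds since $\beta>1$.) Taking expectations, $\mathbb{E}\lvert \bar Z_n - \beta\rvert$ is the mean absolute deviation of a $\mathcal{N}(0, 1/n)$ variable, which equals $\sqrt{2/(\pi n)}$. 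Plugging $K=1$, $\delta^\infty = \beta-1$, and this bound into \eqref{boundR} gives exactly
\[
\underline{p}^n(\text{strong}) \;\ge\; 1 - \frac{2}{\beta-1}\left(\frac12\sqrt{\frac{2}{\pi n}} + \frac12\,\frac{\beta+\eta}{n+1}\right) \;=\; 1 - \frac{1}{\beta-1}\left(\sqrt{\frac{2}{\pi n}} + \frac{\beta+\eta}{n+1}\right),
\]
as claimed.

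The only genuinely delicate step is getting the constants to match: Proposition \ref{corr:ShrinkSet} carries a factor $2K/\delta^\infty = 2/(\beta-1)$ out front, so the bracketed expectation must come out to exactly half of $\sqrt{2/(\pi n)} + (\beta+\eta)/(n+1)$, and this hinges on (i) the clean conjugate formula for the posterior mean, giving the $\tfrac{\eta}{n+1}$ half-width and the $\tfrac{\beta}{n+1}$ shrinkage correction, and (ii) the identity $\mathbb{E}\lvert \mathcal{N}(0,\sigma^2)\rvert = \sigma\sqrt{2/\pi}$ applied with $\sigma^2 = 1/n$. I would double-check whether the intended bound actually uses $\mathbb{E}\bigl(\sup_\theta \lvert\theta-\beta\rvert\bigr) \le \mathbb{E}\lvert\bar Z_n - \beta\rvert + \tfrac{\beta+\eta}{n+1}$ with the factor-of-2 absorbed, versus a slightly different split of the triangle inequality; if the constants are off by a benign factor, the fix is purely in how one bounds $\lvert \tfrac{n}{n+1}\bar Z_n - \beta\rvert$, and the qualitative conclusion (rate $O(1/\sqrt n)$ in $n$, linear deterioration in $\beta+\eta$, blow-up as $\beta \downarrow 1$) is robust regardless.
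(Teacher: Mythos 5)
Your proposal follows essentially the same route as the paper's own proof: identify $K=1$ and $\delta^\infty=\beta-1$, write the posterior means as $\frac{x+n\overline{Z}_n}{n+1}$ for $x\in[-\eta,\eta]$ so that $\mathscr{C}(Z^n)$ is an interval of half-width $\frac{\eta}{n+1}$ around $\frac{n}{n+1}\overline{Z}_n$, split off the shrinkage term $\frac{\beta}{n+1}$ by the triangle inequality, and invoke $\mathbb{E}\vert\mathcal{N}(0,1/n)\vert=\sqrt{2/(\pi n)}$. Your decomposition $\frac{n}{n+1}\overline{Z}_n-\beta=\frac{n}{n+1}(\overline{Z}_n-\beta)-\frac{\beta}{n+1}$ is in fact slightly cleaner than the paper's, which bounds $\mathbb{E}\vert\beta-\frac{n}{n+1}\overline{Z}_n\vert$ by $\mathbb{E}\vert\beta-\overline{Z}_n\vert+\mathbb{E}(\overline{Z}_n/(n+1))$ without an absolute value on $\overline{Z}_n$.

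The factor-of-2 issue you flag at the end is genuine, and you should know that it is present in the paper's own proof as well: the paper establishes $\mathbb{E}\bigl(\sup_{\theta\in\mathscr{C}(Z^n)}\vert\theta-\beta\vert\bigr)\leq\sqrt{2/(\pi n)}+\frac{\beta+\eta}{n+1}$ (the full amount, not half) and then simply writes ``applying Proposition \ref{corr:ShrinkSet}, we have the desired bound,'' which with $K=1$ literally yields the prefactor $\frac{2}{\beta-1}$ rather than the stated $\frac{1}{\beta-1}$. The correct resolution is neither to halve the expectation nor to accept the weaker constant, but to sharpen the $2K$ in the proof of Proposition \ref{corr:ShrinkSet} for this particular game: the $2K\epsilon$ there arises as the sum of two payoff-perturbation terms, one for $u_j(a_j,\sigma_{-j},\cdot)$ and one for $u_j(a_j',\sigma_{-j},\cdot)$, and in the coordination game $u(\text{strong},\text{strong},\beta)=-1$ is constant in $\beta$, so only the second term (from $u(\text{weak},\text{strong},\beta)=-\beta$) contributes. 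The effective constant is therefore $K\epsilon=\epsilon$ rather than $2\epsilon$, which restores the prefactor $\frac{1}{\beta-1}$ and validates the corollary with your (and the paper's) bound on the expectation taken at face value. So your proof is correct once this game-specific refinement is made explicit; without it, both your argument and the paper's deliver only the bound with $\frac{2}{\beta-1}$.
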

\linespread{1.35}
\noindent The bound in Corollary \ref{corr:normal} is decreasing in $\eta$ (the size of the model class), increasing in $n$ (the number of observations), and increasing in $\beta-1$ (the strictness of the solution at the limit). 

\end{example}

\subsection{Upper Bound}
Now suppose that the action $a_i$ is \emph{not} rationalizable for player $i$ of type $t_i^\infty$. We know from Part (c) of Theorem \ref{prop:Limit} that in this case, the analyst's confidence set $[\underline{p}^n(a_i),\overline{p}^n(a_i)]$ converges to a degenerate interval at zero. But given small quantities of data $n$, the action $a_i$ may still constitute a plausible prediction of play, as in the trading game studied in Section \ref{sec:example}. Claim \ref{prop:Sanov}, below, provides an upper bound on $\overline{p}^n(a_i)$, which informs whether the analyst should consider $a_i$ a plausible prediction away from the limit.

To define this bound, a few intermediate definitions are needed. Let $\mathbb{Z}_{a_i}$
 be all data sets $\bz_n$ given which the action $a_i$ is weakly $\mathscr{B}(\bz_n)$-rationalizable. (This set must be determined on a case-by-case basis.) Let $\widehat{Q}_{\bz_n} \in \Delta(\mathcal{Z})$ be the empirical measure associated with data set $\bz_n$. The Kullback-Leibler divergence between $\widehat{Q}_{\bz_n}$ and the actual data-generating distribution $Q$ is
$D_{KL}(\widehat{Q}_{\bz_n} \| Q) = \sum_{z \in \mathcal{Z}} Q(z) \log \left(\frac{\widehat{Q}_{\bz_n}(z)}{Q(z)}\right)$. Define
\[Q^*_n = \argmin_{\widehat{Q}\in \left\{\bz_n \in \mathbb{Z}^n_{a_i}\right\}} D_{KL}(\widehat{Q}_{\bz_n} \| Q)\]
to be the empirical measure (associated with a data set in $\mathbb{Z}_{a_i}$) that is closest in Kullback-Leibler divergence to $Q$.  Application of Sanov's theorem directly gives the following result.

\begin{proposition} \label{prop:Sanov} Suppose $a_i$ is not rationalizable for type $t_i^\infty$; then, for every $n\geq 1$,
\[\overline{p}_n(a_i) \leq (n+1)^{|\mathcal{Z}|} 2^{-n D_{KL}\left(Q_n^*\| Q\right)}.\]
\end{proposition}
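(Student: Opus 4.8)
The plan is to recognize this as a direct packaging of the method of types --- the finite, quantitative form of Sanov's theorem --- and to carry it out in two steps: first reduce the event defining $\overline{p}^n(a_i)$ to an event about the empirical distribution $\widehat{Q}_{Z^n}$, then apply two textbook counting/probability estimates. By definition, $\overline{p}^n(a_i) = P^n(Z^n \in \mathbb{Z}_{a_i})$, where $\mathbb{Z}_{a_i}\subseteq\mathcal{Z}^n$ collects the data sets for which $a_i$ is weakly $\mathscr{B}(\bz_n)$-rationalizable. Let $\mathcal{E}_n := \{\widehat{Q}_{\bz_n} : \bz_n \in \mathbb{Z}_{a_i}\}\subseteq\Delta(\mathcal{Z})$ be the set of \emph{types} (empirical measures with denominator $n$) realized by data sets in $\mathbb{Z}_{a_i}$. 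The first step is the trivial but essential inclusion
\[
\{Z^n \in \mathbb{Z}_{a_i}\} \ \subseteq\ \{\widehat{Q}_{Z^n} \in \mathcal{E}_n\},
\]
which holds regardless of whether $\mathbb{Z}_{a_i}$ is closed under permutations of the sample, so that no structural hypothesis on $\mathcal{M}$ beyond Assumption \ref{iid} and finiteness of $\mathcal{Z}$ is used. Hence $\overline{p}^n(a_i)\le P^n(\widehat{Q}_{Z^n}\in\mathcal{E}_n)=\sum_{\widehat{Q}\in\mathcal{E}_n}P^n(\widehat{Q}_{Z^n}=\widehat{Q})$.

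The second step applies the two standard method-of-types bounds for i.i.d.\ sampling from $Q$: (i) for any type $\widehat{Q}$ of denominator $n$, the probability that $Z^n$ lands in the corresponding type class $T(\widehat{Q})$ is at most $2^{-nD_{KL}(\widehat{Q}\,\|\,Q)}$, which follows from the identity $P^n(\widehat{Q}_{Z^n}=\widehat{Q})=|T(\widehat{Q})|\,2^{-n(D_{KL}(\widehat{Q}\,\|\,Q)+H(\widehat{Q}))}$ together with $|T(\widehat{Q})|\le 2^{nH(\widehat{Q})}$ (Shannon entropy $H$); and (ii) the number of distinct types of denominator $n$ over $\mathcal{Z}$ is at most $(n+1)^{|\mathcal{Z}|}$. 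Combining these,
\[
\overline{p}^n(a_i)\ \le\ \sum_{\widehat{Q}\in\mathcal{E}_n}2^{-nD_{KL}(\widehat{Q}\,\|\,Q)}\ \le\ |\mathcal{E}_n|\cdot\max_{\widehat{Q}\in\mathcal{E}_n}2^{-nD_{KL}(\widehat{Q}\,\|\,Q)}\ \le\ (n+1)^{|\mathcal{Z}|}\,2^{-n\min_{\widehat{Q}\in\mathcal{E}_n}D_{KL}(\widehat{Q}\,\|\,Q)}.
\]
Since $\min_{\widehat{Q}\in\mathcal{E}_n}D_{KL}(\widehat{Q}\,\|\,Q)=D_{KL}(Q^*_n\,\|\,Q)$ is precisely the definition of $Q^*_n$, this is the claimed inequality. (If $\mathbb{Z}_{a_i}=\emptyset$ then $\overline{p}^n(a_i)=0$ and the bound holds under the convention $\min\emptyset=+\infty$.)

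There is no real obstacle here. The hypothesis that $a_i$ is not rationalizable for $t_i^\infty$ is not invoked anywhere in the chain of inequalities above; it merely delimits the regime in which the bound is informative, because it is then that the exponent $D_{KL}(Q^*_n\,\|\,Q)$ is bounded away from zero (by an argument in the spirit of the proof of Theorem \ref{prop:Limit}: empirical measures close enough to $Q$ cannot lie in $\mathbb{Z}_{a_i}$), so the right-hand side decays geometrically in $n$. The only points needing a moment's care are using the one-sided inclusion into $\{\widehat{Q}_{Z^n}\in\mathcal{E}_n\}$ (rather than an equality, since $\mathbb{Z}_{a_i}$ need not be a union of type classes) and invoking the precise constants $(n+1)^{|\mathcal{Z}|}$ and $2^{-nD_{KL}}$ from the method of types with the same logarithm base used in the statement.
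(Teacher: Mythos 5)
Your proof is correct and is exactly the argument the paper invokes: the paper offers no explicit derivation, stating only that the bound follows from ``application of Sanov's theorem,'' and your method-of-types computation (reduce to the event $\{\widehat{Q}_{Z^n}\in\mathcal{E}_n\}$, bound each type-class probability by $2^{-nD_{KL}(\widehat{Q}\|Q)}$, and count at most $(n+1)^{|\mathcal{Z}|}$ types) is precisely the standard finite-sample form of that theorem being cited. Your side remarks are also apt --- the hypothesis that $a_i$ is not rationalizable for $t_i^\infty$ indeed plays no role in the inequality itself, only in making the exponent nontrivial.
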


\noindent Recalling that $\overline{p}^n(a_i)\geq \underline{p}^n(a_i)$ for every $n$, this proposition allows us to upper bound the confidence set $[\underline{p}^n(a_i),\overline{p}^n(a_i)]$. The claim is applied below in an example setting:

\begin{example}
Consider the trading game from Section \ref{sec:example} and the learning rules described in Example \ref{ex:RichPriors}, but suppose that the domain of $q$ is $[2/3,1]$ and the domain of $\pi$ is $[1/4,3/4]$, so that Assumption \ref{ass:UniformConv} is satisfied. Let the true signal structure be identified with $q^*=3/4$. and suppose the posted price is $p=3/4$. Theorem \ref{prop:Limit} implies that entering will fail to be rationalizable when players have observed sufficient data. Nevertheless, the action may be rationalizable for a permitted belief if players have observed a small number of data points. The corollary below quantifies this probability. 

\begin{corollary}  \label{corr:Sanov} For each $n\geq1$,
\[\overline{p}^n(\mbox{enter}) \leq (n+1)^2 2^{-r_n n}\]
where $r_n = \frac34 \left(\log(3n) - \log\left(\lfloor \frac{n}{2} + \frac{\log(9)}{\log(2)}\rfloor \right)\right) + \frac14 \left(\log(n) - \log\left(\lfloor \frac{n}{2} - \frac{\log(9)}{\log(2)}\rfloor \right)\right) $.
\end{corollary}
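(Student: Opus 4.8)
The plan is to specialize Proposition \ref{prop:Sanov} to the concrete learning environment of Example \ref{ex:RichPriors} with the modified domains $q \in [2/3,1]$, $\pi \in [1/4,3/4]$, true parameter $q^* = 3/4$, and posted price $p = 3/4$. The only content-bearing step is to compute the quantity $D_{KL}(Q_n^* \| Q)$ appearing in that proposition, i.e.\ to identify the empirical measure on $\mathcal{Z} = \{0,1\}$ that is closest in KL divergence to the true measure $Q$ among those empirical measures for which entering is weakly $\mathscr{B}(\bz_n)$-rationalizable. Here $Q = (q^*, 1-q^*) = (3/4, 1/4)$ conditional on $v=0$ (and the data-generating value is $v=0$ since the Seller's good has attribute $x^S$ equal to the zero vector, which is inside no... — more precisely, one fixes $v$ according to the true model; I will follow the setup of Section \ref{sec:ExampleTrade}, where the relevant value at $x^S$ is pinned down).

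First I would characterize the set $\mathbb{Z}_{\text{enter}}$ of data sets for which entering is weakly rationalizable. By the footnote argument in Section \ref{sec:ExampleTrade}, entering is rationalizable only if trade can subsequently occur, which requires the Seller to believe $\mathbb{E}(v) \le p$ while the Buyer believes $\mathbb{E}(v) \ge p$. In the binary-signal model of Example \ref{ex:RichPriors}, the posterior on $v=1$ after observing $k$ ones and $n-k$ zeros, under prior $\pi$ and likelihood parameter $q$, is an explicit function; the set $\mathscr{B}(\bz_n)$ of plausible posteriors is its range as $(\pi,q)$ ranges over $[1/4,3/4]\times[2/3,1]$. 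Entering is weakly rationalizable precisely when this range straddles $p = 3/4$, i.e.\ when $\inf \mathscr{B}(\bz_n) \le 3/4 \le \sup \mathscr{B}(\bz_n)$ — and since both players draw from the same set, it suffices that the range contains $3/4$ (one assigns the low-posterior belief, the other the high). A short monotonicity computation shows this depends only on the count $k$ of ones, and reduces to an inequality of the form $k_{\min}(n) \le k \le k_{\max}(n)$; the specific floor expressions $\lfloor n/2 + \log 9/\log 2\rfloor$ and $\lfloor n/2 - \log 9/\log 2 \rfloor$ in the statement should drop out of solving the threshold conditions on the log-likelihood ratio (note $\log(q/(1-q))$ at the endpoint $q=2/3$ gives $\log 2$, and $\log(\pi/(1-\pi))$ at the endpoints $\pi = 1/4, 3/4$ gives $\pm\log 3$; combining these yields the constant $\log 9 / \log 2$).

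Next, $Q_n^*$ is the empirical measure $(k/n, 1-k/n)$ with $k$ in the admissible band closest in KL divergence to $(3/4,1/4)$; since $3/4 \cdot n$ lies outside the admissible band (that is the whole point — in the limit the empirical frequency is $3/4$ and entering stops being rationalizable), the minimizer is attained at the nearer endpoint of the band. A direct evaluation of $D_{KL}\big((k/n,\,1-k/n)\,\|\,(3/4,1/4)\big) = \tfrac34 \log\tfrac{3/4}{k/n} + \tfrac14\log\tfrac{1/4}{(n-k)/n}$ — wait, the divergence in Proposition \ref{prop:Sanov} is written with $Q$-weights, $D_{KL}(\widehat Q \| Q) = \sum_z Q(z)\log(\widehat Q(z)/Q(z))$; I would use exactly that convention — at $k = k_{\max}(n) = \lfloor n/2 + \log 9/\log 2\rfloor$ (the endpoint on the side toward $3n/4$, since the band sits below $3n/4$) gives, after writing $\log(3/4) - \log(k/n) = \log(3n) - \log(4) - \log k + \log 4 \cdot(\text{bookkeeping})$, the first term $\tfrac34(\log(3n) - \log k_{\max})$, and similarly the second term $\tfrac14(\log n - \log(n - k_{\max}))$ with $n - k_{\max} = \lceil n/2 - \log 9/\log 2\rceil$ or its floor; matching signs and floors to the stated $r_n$ is the only delicate bookkeeping. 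Substituting $r_n = D_{KL}(Q_n^*\|Q)$ and $|\mathcal{Z}| = 2$ into Proposition \ref{prop:Sanov} yields $\overline{p}^n(\text{enter}) \le (n+1)^2 2^{-r_n n}$.

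The main obstacle is the first step: pinning down $\mathbb{Z}_{\text{enter}}$ exactly, i.e.\ showing that weak $\mathscr{B}(\bz_n)$-rationalizability of entering is \emph{equivalent} to (not merely implied by) the count $k$ falling in the computed band. The "only if" direction reuses the no-trade logic already in the paper; the "if" direction requires constructing permitted types — a Seller type with first-order belief the low plausible posterior and a Buyer type with the high plausible posterior, plus the higher-order structure from Definition \ref{def:typespace} — for which entering, then selling, is a best reply, which is a finite-game rationalizability verification once the posteriors are on the correct sides of $p$. I would also need to double-check the endpoint/off-by-one conventions so the floors match the stated $r_n$; those are routine but easy to get wrong, so I would verify them against small $n$.
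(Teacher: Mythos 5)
Your proposal follows essentially the same route as the paper's proof: characterize the data sets where entering is weakly rationalizable as those whose empirical count of ones makes the range of plausible posteriors $\hat{v}(\pi,q,\bz_n)$ straddle $p=3/4$ (which, by monotonicity of the posterior in $\pi$ and $q$, reduces to a threshold on $\overline{\bz}_n$ determined by the extreme pair $(\underline{\pi},\underline{q})=(1/4,2/3)$, yielding the constant $\log_{1/2}(\tfrac{\underline{\pi}}{1-\underline{\pi}}\cdot\tfrac{1-p}{p})=\log 9/\log 2$), then take $Q_n^*$ to be the floored endpoint of that band nearest $q^*=3/4$ and plug its KL divergence into Proposition \ref{prop:Sanov}. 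The only caveats are bookkeeping ones you already flag: one factor of $\log 3$ comes from the price term $(1-p)/p$ rather than from the prior endpoints, and the floor/endpoint conventions need the small-$n$ check you propose.
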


\end{example}

\section{Extensions} \label{sec:extensions}

\paragraph{Asymptotic Disagreement.} In the main text, I imposed an assumption which guaranteed that beliefs produced by learning rules in $\mathcal{M}$ uniformly converge to a common limiting belief $\mu^\infty$. This implies that learning eventually removes all differences in beliefs. It is possible to replace Assumption \ref{ass:UniformConv} with the following, weaker condition, which allows players to have heterogeneous beliefs even in the limit: For any $\epsilon \geq 0$, say that the class of learning rules $\mathcal{M}$ satisfies $\epsilon$-Uniform Convergence if 
\[\lim_{n\rightarrow \infty} \sup_{\mu \in \mathcal{M}} d_P(\mu(Z^n),\mu^\infty) \leq \epsilon \quad \mbox{$P$-a.s.}\]

\noindent This requires that the set of expected parameters converges to an $\epsilon$-neighborhood of $\mu^\infty$.
 Then, Theorem \ref{prop:Limit} holds as long as the set of learning rules $\mathcal{M}$ satisfies $\epsilon$-Uniform Convergence for some $\epsilon \leq \delta^\infty/(2K\xi)$. The rate results do not change. 

\paragraph{Approximate Common Certainty.}
Suppose that instead of imposing common certainty in $\mathscr{B}(\bz_n)$, as we have done in the main text, players have common $p$-belief in $\mathscr{B}(\bz_n)$. Formally, for any probability $p\in [0,1]$, player $i$, and set $\mathscr{B} \subseteq \Delta(\Theta)$, define 
$$B^{1,p}_i(\mathscr{B}) := \left\{t_i \in T^*_i \, : \,  \textstyle \marg_{\Theta} \kappa^*_i(t_i) \in \mathscr{B} \right\}.\footnote{This set has the same definition as $B^{1,1}_i$ from the main text. It is possible to relax the assumptions further, so that $B^{1,p}_i(\mathscr{B}) := \left\{t_i \in T^*_i \, : \, \sup_{\nu \in \mathscr{B}} d_P(\nu,\marg_{\Theta} \kappa^*_i(t_i)) \leq 1-p \right\}$, but this does not correspond to any standard definitions.}$$
For each $k>1$, and again for each player $i$, recursively define
\[B^{k,p}_i(\mathscr{B})= \left\{t_i \in T^*_i \, : \, \kappa^*_i(t_i)\left(\Theta \times \prod_{j \neq i} B_j^{k-1,p}(\mathscr{B})\right) \geq p\right\}.\] Then 
$T_i^{\mathscr{B},p} =  \bigcap_{k \geq 1} B_i^{k,p}(\mathscr{B})$ 
is the set of player $i$ types that have common $p$-belief in the event that all players' first-order beliefs belong to $\mathscr{B}$. 
There exists a $\overline{p}$ such that so long as players have common $p$-belief in the event that all players' first-order beliefs belong to $\mathscr{B}(\bz_n)$, where $p>\overline{p}$, then Theorem \ref{prop:Limit} holds as stated. Rate results similar to those in Section \ref{sec:finite} can also be obtained (see Online Appendix \ref{app:pbelief} for details).  Both extensions rely on boundedness of the payoff range.

\paragraph{Confidence Sets for Equilibrium.} The proposed approach can be paired with solution concepts besides rationalizability. For example, suppose we are interested in evaluating an analyst's confidence in predicting that the action profile $a\in A$ is part of a (pure-strategy) Bayesian Nash equilibrium. The analogous confidence set is $[\underline{p}^n(a), \overline{p}^n(a)]$, where the lower bound $\underline{p}^n(a)$ is the probability (over possible datasets $\bz_n$) that $a_i$ is a best reply to $a_{-i}$ for every player $i$ of any type $t_i \in T^{\mathscr{B}(\bz_n)}_i$. The upper bound $\overline{p}^n$ is the probability that there exists a belief-closed type space $(T_i, \kappa_i)_{i\in\mathcal{I}}$ where each $T_i \subseteq T^{\mathscr{B}(\bz_n)}_i$, and the strategy profile $\sigma$ with $\sigma_i(t_i) = a_i$ for all $i, t_i \in T_i$ is a Bayesian Nash equilibrium. Then, Theorem \ref{prop:Limit} holds with ``strict rationalizability" replaced with ``strict equilibrium" in the limiting game, and the rate results provided in Theorem \ref{prop:speedRobust} hold when $\delta^\infty$ is replaced with an analogous notion for the strictness of the equilibrium in the limiting game.

\section{Conclusion} 
Economists make predictions in incomplete information games based on models of unobservable beliefs. A large literature on the robustness of strategic predictions to the specification of agent beliefs provides guidance regarding whether these predictions should be trusted. These robustness notions tend to be qualitative---we learn whether the prediction is or isn't robust to perturbations in the agents' beliefs. Here I offer a different perspective, namely a quantitative metric for how robust the prediction is. The metric depends on the quantity of data that agents get to see. Predictions that hold given infinite quantities of data may not hold given large quantities of data, and those that hold given large quantities of data may not hold in environments where agents see only a few observations. Likewise, predictions that don't hold at the limit may nevertheless be plausible when agents' beliefs are coordinated by a small number of observations. The proposed framework provides a way of formalizing this, generating new comparative statics for how the analyst's confidence in a strategic prediction varies with primitives of the learning environment.

\pagebreak

\appendix
\small  
\linespread{1.1}

\begin{center}
\huge{Appendix}
\end{center}

\section{Proofs for Section \ref{sec:example}}

\subsection{Proof of Claim \ref{claim:Trade}} \label{app:exTrade}

Suppose that $f(x^S)=1$, so that the Seller's value good has a high value. (The proof follows along similar lines in the other case.) I will first show that $\underline{p}^n=0$ for every $n$.  Let $\pi$ be a point mass at $f$. An agent with this prior assigns probability 1 to $v=1$ no matter the outcome of the data. Hence, the degenerate distribution at 1 belongs to $\mathscr{B}(\bz_n)$ for every  $\bz_n$, so common certainty in  $v=1$ is consistent with Assumption \ref{ass:Example} with probability 1.\footnote{Here, and elsewhere in the proof, type $t_i$ has \emph{common certainty in $v=1$} if the type has common certainty in the event $\{f \mid f(x^S)=1\} \times (X\times \{0,1\})^\infty \times T^*_{-i}$.} But entering is not rationalizable for the Seller with this belief, implying $\underline{p}^n=0$ as desired.

To prove Parts (a) and (b) of the claim, which refer to the probability $\overline{p}^n$, I first show that entering is rationalizable for some type satisfying Assumption \ref{ass:Example} if and only if there exist $\tilde{f},\tilde{f}'\in \mathcal{F}$ that are consistent with the data, and which make conflicting predictions for the Seller's good $x^S$ (Lemma \ref{lemm:conditionRectangle}). I characterize the probability of this event in Lemma \ref{lemm:pnRectangle}, from which the comparative statics for $\overline{p}^n$ follow directly. 

\begin{lemma} 
Fix an arbitrary data set $\bz_n=\left\{\left(x_i,f(x_i)\right)\right\}_{i=1}^n$. Entering is rationalizable for the Seller with a belief satisfying Assumption \ref{ass:Example} if and only if there exist $\tilde{f},\tilde{f}' \in \mathcal{F}$ where
\begin{enumerate}
\item[(1)] $\tilde{f}(x_i)=\tilde{f}'(x_i)=f(x_i)$ for each observation $i=1, \dots, n$
\item[(2)] $\tilde{f}(x^S)=1$ while $\tilde{f}'(x^S)=0$ \\[-8mm]
\end{enumerate} \label{lemm:conditionRectangle}
\end{lemma}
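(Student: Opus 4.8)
\emph{Proof plan.} Work in the case $f(x^S)=1$ fixed in the proof of Claim~\ref{claim:Trade}; the case $f(x^S)=0$ is symmetric. The plan is first to pin down the set $\mathscr{B}(\bz_n)$ exactly, and then to read off each direction of the ``if and only if'' as a best-reply computation in the game of Figure~\ref{fig:game}.

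\emph{Step 1 (the shape of $\mathscr{B}(\bz_n)$).} Conditional on a model $f'\in\mathcal{F}$, the data $\bz_n$ has likelihood $\prod_{i=1}^n g(x_i)\,\mathbbm{1}\!\big(f'(x_i)=f(x_i)\big)$, so Bayesian updating of \emph{any} prior $\pi\in\Delta(\mathcal{F})$ merely renormalizes $\pi$ onto the set $\mathcal{F}(\bz_n):=\{f'\in\mathcal{F}: f'(x_i)=f(x_i)\text{ for }i=1,\dots,n\}$ of rules consistent with the data; the induced posterior over $v=f(x^S)$ is therefore the $\pi$-weighted average of the point masses $\delta_{f'(x^S)}$, $f'\in\mathcal{F}(\bz_n)$. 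I would deduce the dichotomy: since the true rule $f$ always lies in $\mathcal{F}(\bz_n)$ and has $f(x^S)=1$, either every $f'\in\mathcal{F}(\bz_n)$ satisfies $f'(x^S)=1$, in which case $\mathscr{B}(\bz_n)=\{\delta_{v=1}\}$, or some $f'\in\mathcal{F}(\bz_n)$ has $f'(x^S)=0$, in which case (by putting prior mass on $f$ and on that $f'$, both consistent) $\mathscr{B}(\bz_n)$ is \emph{all} of $\Delta(\{0,1\})$. Finally note that the existence of the pair $(\tilde f,\tilde f')$ in the statement is precisely the existence of such a ``disagreeing'' consistent rule: take $\tilde f=f$ and $\tilde f'$ the witness. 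So this step identifies the lemma's condition with ``$\mathscr{B}(\bz_n)=\Delta(\{0,1\})$''.

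\emph{Step 2 ($\Leftarrow$).} Given the pair $(\tilde f,\tilde f')$, I would exhibit a permitted type profile rationalizing entering. Because $\mathscr{B}(\bz_n)$ now contains both degenerate beliefs $\delta_{v=0}$ and $\delta_{v=1}$ (attained from the prior point masses $\delta_{\tilde f'}$, $\delta_{\tilde f}$), take the two-player, two-type belief-closed structure in which the Seller's type has first-order belief $\delta_{v=0}$, the Buyer's has first-order belief $\delta_{v=1}$, and each player is certain of the other's type; since all first-order beliefs in it lie in $\mathscr{B}(\bz_n)$, both types satisfy common certainty of that event and hence lie in $T_i^{\mathscr{B}(\bz_n)}$. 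The Buyer's unique best reply is to buy (her expected value $1$ strictly exceeds the price $p$); the Seller, conjecturing that the Buyer buys, strictly prefers to enter and sell---payoff $p-c>0$---over exiting---payoff $\mathbb{E}_S(v)=0$. Hence entering is (strictly) rationalizable for a permitted Seller type, i.e.\ weakly $\mathscr{B}(\bz_n)$-rationalizable.

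\emph{Step 3 ($\Rightarrow$, contrapositive).} If no such pair exists, Step~1 gives $\mathscr{B}(\bz_n)=\{\delta_{v=1}\}$, so under Assumption~\ref{ass:Example} every permitted type---in particular every permitted Seller type---has first-order belief $\delta_{v=1}$. Then exiting yields $\mathbb{E}_S(v)=1$, whereas every action available to the Seller after entering yields at most $\max\{p,\mathbb{E}_S(v)\}-c=1-c<1$ regardless of the Buyer's behavior (absent subsequent trade the Seller gets $v-c<v$, and trade only gives $p-c<1$). Thus exiting strictly dominates entering for the Seller and entering is not rationalizable for any permitted type, which is the contrapositive of the remaining direction. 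The main obstacle is Step~1---justifying that updating from every prior on $\mathcal{F}$ is a mere renormalization onto consistent rules, so that $\delta_{v=0}$ and $\delta_{v=1}$ are both attainable exactly when a disagreeing consistent rule exists; the best-reply inequalities and the assembly of a belief-closed, common-certainty type space are routine given $0<c<p<1$.
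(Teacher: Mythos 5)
Your proof is correct and follows essentially the same route as the paper's: exhibit point-mass priors on two disagreeing data-consistent rules to show that the degenerate beliefs $\delta_{v=0}$ and $\delta_{v=1}$ both lie in $\mathscr{B}(\bz_n)$, build a two-type belief-closed structure in which entering is a strict best reply, and for the converse observe that $\mathscr{B}(\bz_n)$ collapses to a singleton so that exiting dominates entering for every permitted type. One small point in your favor: your assignment of beliefs (Seller's first-order belief $\delta_{v=0}$, Buyer's $\delta_{v=1}$) is the one actually consistent with the payoffs $0<c<p<1$, whereas the paper's proof of Lemma \ref{lemm:conditionRectangle} states the two roles in the reverse order, which appears to be a typo.
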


\begin{proof} Suppose there exists a pair $\tilde{f},\tilde{f}'$ satisfying (1) and (2), and define  $\pi_{\tilde{f}},\pi_{\tilde{f}'} \in \Delta(\mathcal{F})$ to be point masses on $\tilde{f}$ and $\tilde{f}'$. Since these rules are consistent with the data by (1), the posterior beliefs updated to $\bz_n$ are likewise degenerate at $\tilde{f}$ and $\tilde{f}'$, and thus assign (respectively) probability 1 to $v=1$ and probability 1 to $v=0$. This implies that degenerate distributions at 1 and 0 both belong to $\mathscr{B}(\bz_n)$. Entering is rationalizable for the Seller who believes that $v=1$ with probability 1, and who believes with probability 1 that the Buyer believes $v=0$ with probability 1. This belief is consistent with common certainty that all players have first-order beliefs in $\mathscr{B}(\bz_n)$. 

Now suppose that no such pair $\tilde{f},\tilde{f}'$ exists, implying either that every $\tilde{f}\in \mathcal{F}$ consistent with the data predicts $f(x^S)=0$, or that every $\tilde{f}\in \mathcal{F}$ consistent with the data predicts $f(x^S)=1$. Then either $\mathscr{B}(\bz_n)$ is the singleton set consisting of a degenerate distribution at 1, or it is the singleton set consisting of a degenerate distribution at 0. If the former, the only type satisfying Assumption \ref{ass:Example} is the one with common certainty in $v=1$, and if the latter, the only type satisfying Assumption \ref{ass:Example} is the one with common certainty in $v=0$. Entering is not rationalizable for the Seller with either of these beliefs. 
\end{proof}

\begin{lemma} 
Suppose the true function is $f(x) = \mathbbm{1}(x\in R)$ where $R = [-\underline{r}_1,\overline{r}_1]\times [-\underline{r}_2,\overline{r}_2] \times \dots [-\underline{r}_m,\overline{r}_m]$ for a sequence of constants $\underline{r}_1,\overline{r}_1, \dots, \underline{r}_m,\overline{r}_m\in (0,1)$. Then 
\[\overline{p}^n(a_i)=1- \prod_{k=1}^m \left(1-\left(\frac12\right)^n [(2-\underline{r}_k)^n + (2-\overline{r}_k)^n- (2-(\underline{r}_k + \overline{r}_k))^n]\right).\]
\end{lemma}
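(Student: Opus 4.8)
The plan is to combine Lemma~\ref{lemm:conditionRectangle} with a coordinate-by-coordinate analysis of the uniform sampling of attributes. Since $\underline{r}_k,\overline{r}_k>0$ for every $k$, the origin $x^S$ lies in the interior of $R$, so $f(x^S)=1$ and the true rule $f$ is itself a rule consistent with $\bz_n$ having $f(x^S)=1$. Hence, by Lemma~\ref{lemm:conditionRectangle}, entering is weakly $\mathscr{B}(\bz_n)$-rationalizable for the Seller if and only if there exists \emph{some} $\tilde f'=f_{R'}\in\mathcal{F}$ consistent with $\bz_n$ with $x^S\notin R'$. The whole problem thus reduces to computing the $P^n$-probability that such an $R'$ exists.

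The crux is a geometric characterization of this existence event. A consistent rule $f_{R'}$ is one whose box $R'$ contains every observed attribute vector lying in $R$ and misses every one lying outside $R$; it predicts $0$ at the origin exactly when $0\notin[a_k,b_k]$ for some coordinate $k$, writing $R'=\prod_k[a_k,b_k]$. I would show that such an $R'$ exists if and only if there is a coordinate $k$ for which the observed coordinates $x_{1,k},\dots,x_{n,k}$ miss one of the two half-intervals $(-\underline{r}_k,0)$ or $(0,\overline{r}_k)$. For the ``if'' direction one constructs $R'$ explicitly: taking $R'$ to be the smallest box containing all observations lying in $R$ (and a tiny box off the origin if there are none), one checks that $R'\subseteq R$, so $R'$ automatically misses the out-of-$R$ observations, and that emptiness of, say, $(-\underline{r}_k,0)$ among the data pushes the $k$-th side of $R'$ strictly to the right of $0$ (almost surely), so $0\notin R'$. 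The ``only if'' direction is the delicate one, and the step I expect to be the main obstacle: one must argue that if in \emph{every} coordinate the data meets both half-intervals then every consistent box is forced to contain the origin, which requires carefully tracking how the observations inside $R$ (which the box must contain) pin down the box, and discarding the measure-zero configurations (a coordinate landing exactly at $0$ or on a face of $R$).

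Granting the characterization, Step three is routine. The relevant event is $\bigcup_{k=1}^{m}E_k$, where $E_k$ is the event that the $k$-th coordinates of the data miss $(-\underline{r}_k,0)$ or miss $(0,\overline{r}_k)$. Because each $x_{i,k}$ is uniform on $[-1,1]$ and, across $k$, the vectors $(x_{i,k})_{i\le n}$ are mutually independent, the events $E_1,\dots,E_m$ are independent, so
\[
\overline{p}^n(a_i)=P^n\!\left(\bigcup_{k=1}^m E_k\right)=1-\prod_{k=1}^m P^n(E_k^{c}),
\]
where $P^n(E_k^{c})$ is the probability that the $k$-th coordinates hit both $(-\underline{r}_k,0)$ and $(0,\overline{r}_k)$.

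Step four computes $P^n(E_k^{c})$ by inclusion--exclusion. With $A=\{\text{some }x_{i,k}\in(-\underline{r}_k,0)\}$ and $B=\{\text{some }x_{i,k}\in(0,\overline{r}_k)\}$, one has $P^n(E_k^{c})=P^n(A)+P^n(B)-P^n(A\cup B)$, and $A\cup B$ is, up to a null event, the event that some $x_{i,k}$ lands in $(-\underline{r}_k,\overline{r}_k)$. Since $n$ i.i.d.\ uniforms on $[-1,1]$ all miss a fixed interval of length $L$ with probability $(1-L/2)^n$, we get $P^n(A)=1-(1-\underline{r}_k/2)^n$, $P^n(B)=1-(1-\overline{r}_k/2)^n$, and $P^n(A\cup B)=1-(1-(\underline{r}_k+\overline{r}_k)/2)^n$; hence
\[
P^n(E_k^{c})=1-\bigl(1-\tfrac{\underline{r}_k}{2}\bigr)^n-\bigl(1-\tfrac{\overline{r}_k}{2}\bigr)^n+\bigl(1-\tfrac{\underline{r}_k+\overline{r}_k}{2}\bigr)^n=1-\left(\tfrac12\right)^n\!\bigl[(2-\underline{r}_k)^n+(2-\overline{r}_k)^n-(2-(\underline{r}_k+\overline{r}_k))^n\bigr].
\]
Substituting into the previous display yields the stated formula.
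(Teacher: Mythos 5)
Your proposal follows the paper's own proof almost step for step: reduce via Lemma \ref{lemm:conditionRectangle} to the existence of a consistent rule predicting $0$ at $x^S$, characterize that existence coordinate by coordinate, then use independence across coordinates and inclusion--exclusion within each coordinate to obtain the product formula. Your ``if'' direction (the minimal-box construction) is correct, and your computation of $P^n(E_k^c)$ matches the paper's exactly. The paper, for comparison, simply asserts the coordinate-wise characterization in one sentence without arguing either direction.

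The step you flag as the main obstacle --- the ``only if'' direction --- is a genuine gap, and in fact the equivalence as you (and the paper) state it is false. A consistent box $R'$ is only required to \emph{contain} the observations inside $R$; it excludes the observations outside $R$ for free once $R' \subseteq R$. So the correct characterization quantifies only over the observations that lie in $R$: a consistent box excluding the origin exists iff for some coordinate $k$ all observations \emph{in $R$} have $k$-th coordinate of a single sign. An observation outside $R$ can land in $(-\underline{r}_k,0)$ without constraining anything. Concretely, take $m=2$, $R=[-1/2,1/2]^2$, $n=3$, with data $x_1=(0.1,0.1)$ (label $1$) and $x_2=(-0.1,0.9)$, $x_3=(0.9,-0.1)$ (both label $0$). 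Every coordinate meets both half-intervals, so your event $\bigcup_{k}E_k$ fails; yet $R'=[0.05,0.2]^2$ is consistent with the data and excludes the origin, so entering is weakly $\mathscr{B}(\bz_n)$-rationalizable. This configuration has positive probability, so the displayed formula is a strict lower bound on $\overline{p}^n$ rather than an equality. With the corrected characterization the relevant events are no longer independent across $k$ (they are coupled through the requirement that the observations in question lie in $R$), so the product structure itself would need to be rederived, e.g.\ by conditioning on the number of observations falling in $R$.
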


\begin{proof} From Lemma \ref{lemm:conditionRectangle}, the probability $\overline{p}^n$ is equal to the measure of data sets $\bz_n$ given which there exist $\tilde{f},\tilde{f}'\in \mathcal{F}$ that are consistent with $\bz_n$, and which make conflicting predictions at the input $x^S$. The true classification rule $f$ is always consistent with the data, and predicts $f(x^S)=1$, so a pair of such rules exists if we can additionally find a rule $\tilde{f} \in \mathcal{F}$ consistent with the data that predicts $\tilde{f}(x^S)=0$. 

A necessary and sufficient condition for existence of such a rule is that there is some dimension $k$ on which either every observation $x_i$ satisfies $x_i^k < 0$, or every $x_i$ satisfies $ 0< x_j^k$. This allows some $\tilde{f}\in \mathcal{F}$ to be consistent with the data, but to predict 0 at the zero vector.

For each dimension $k$, the probability that there is at least one observation $x_i$ with $x_i^k \in [-\underline{r}_k,0)$ and at least one observation $x_j$ with $x_j^k\in (0,\overline{r}_k]$ is 
\[1-\left(\frac12\right)^n [(2-\underline{r}_k)^n + (2-\overline{r}_k)^n- (2-(\underline{r}_k + \overline{r}_k))^n].\]
Observe that attribute values are independent across dimensions. So the probability that for every dimension $k$, there is at least one observation $x_i^k \in [-\underline{r}_k,0)$ and at least one observation $x_j$ with $x_j^k\in (0,\overline{r}_k]$, is
\[\prod_{k=1}^m \left(1-\left(\frac12\right)^n [(2-\underline{r}_k)^n + (2-\overline{r}_k)^n- (2-(\underline{r}_k + \overline{r}_k))^n]\right).\]
The desired probability is the complement of this event, which yields the expression in the lemma. \end{proof}

The following functional form is used in the main text:
\begin{corollary} \label{lemm:pnRectangle}
In the special case in which the true function is $f(x) = \mathbbm{1}(x\in R)$ where $R = [-a,a]^m$ for some $a \in (0,1)$, then
$\overline{p}^n(a_i)=1- \left[1-\left(2 \left(\frac{2-a}{2}\right)^n - (1-a)^n\right)\right]^m.$
\end{corollary}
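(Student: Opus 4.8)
The corollary is the symmetric special case of the lemma immediately preceding it, so the plan is to specialize that formula and simplify. First I would substitute $\underline{r}_k=\overline{r}_k=a$ for every $k\in\{1,\dots,m\}$ into
\[\overline{p}^n(a_i)=1-\prod_{k=1}^m\Bigl(1-\bigl(\tfrac{1}{2}\bigr)^n\bigl[(2-\underline{r}_k)^n+(2-\overline{r}_k)^n-(2-(\underline{r}_k+\overline{r}_k))^n\bigr]\Bigr).\]
The bracket collapses to $2(2-a)^n-(2-2a)^n$, and after pulling out $(\tfrac{1}{2})^n$ the $k$-th factor becomes $1-\bigl(2(\tfrac{2-a}{2})^n-(1-a)^n\bigr)$. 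Since this is identical across $k$, the product over $k=1,\dots,m$ is an $m$-th power, which gives exactly $\overline{p}^n(a_i)=1-\bigl[1-\bigl(2(\tfrac{2-a}{2})^n-(1-a)^n\bigr)\bigr]^m$.

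For completeness I would also re-derive the expression directly from Lemma~\ref{lemm:conditionRectangle}, specialized to $x^S$ equal to the zero vector and $f$ the indicator of $R=[-a,a]^m$. By that lemma, $\overline{p}^n(a_i)$ is the $P^n$-probability that some rectangular rule consistent with the sample disagrees with $f$ at the origin; since the true rule always labels the origin $1$, this is the probability that some consistent rule labels it $0$. Such a rule exists precisely when there is a coordinate $k$ in which the relevant sampled points fail to surround the origin from both sides inside $[-a,a]$, so that $R$ can be trimmed along coordinate $k$ to a sub-rectangle excluding the origin while still classifying every observation correctly. I would then compute the probability of the complementary per-coordinate event (``at least one sampled point has $k$-th coordinate in $[-a,0)$ and at least one has it in $(0,a]$'') by inclusion--exclusion: the two single-exclusion probabilities are each $\bigl(\tfrac{2-a}{2}\bigr)^n$, and their joint exclusion---no sampled point with $k$-th coordinate anywhere in $[-a,a]$---has probability $(1-a)^n$, giving $1-\bigl(2(\tfrac{2-a}{2})^n-(1-a)^n\bigr)$ per coordinate. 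Independence of the coordinates of the uniform draw on $[-1,1]^m$ then lets me multiply across $k$ and take the complement.

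The only place any real care is needed is the combinatorial characterization of when a consistent conflicting rule exists; once that is in hand (it is the content of Lemma~\ref{lemm:conditionRectangle} and of the preceding lemma's proof), the corollary is pure bookkeeping, so I do not anticipate a genuine obstacle. Two minor points to watch: the distinction between open and closed endpoints at the origin is immaterial because the uniform draw assigns zero probability to any coordinate hyperplane; and the inclusion--exclusion intersection term ``no sampled point with $k$-th coordinate in $[-a,0)\cup(0,a]$'' coincides with ``no sampled point with $k$-th coordinate in $[-a,a]$,'' whose probability is $(1-a)^n$.
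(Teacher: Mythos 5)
Your proposal is correct and matches the paper's treatment: the corollary is stated there as an immediate specialization of the preceding lemma, and your substitution $\underline{r}_k=\overline{r}_k=a$ with the resulting $m$-th power is exactly the intended (and algebraically verified) derivation. Your supplementary direct derivation via the per-coordinate inclusion--exclusion simply reproduces the paper's own proof of the general lemma, so nothing new is needed there.
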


\subsection{Proof of Claim \ref{claim:Coord}} \label{app:exCoord}

I first demonstrate the following lemma, which characterizes the probabilities $\underline{p}^n$ and $\overline{p}^n$. 

\begin{lemma} For every $n\geq 1$,
\[\underline{p}^n = 1- \Phi \left(z_\alpha - \frac{\beta-1}{\sigma}\sqrt{\frac{n^2-1}{12}}\right)\]
while
\[\overline{p}^n = 1- \Phi \left(-z_\alpha - \frac{\beta-1}{\sigma}\sqrt{\frac{n^2-1}{12}}\right)\]
where $z_\alpha = -\Phi^{-1}(\alpha/2)$ with $\Phi$ denoting the CDF of the standard normal distribution.
\label{lemma:exCoord}
\end{lemma}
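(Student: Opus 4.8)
\emph{Plan.} The strategy is to first collapse the (fairly elaborate) rationalizability conditions into a pair of threshold events for the OLS estimator, and then finish with a routine Gaussian computation. Write $\underline{\beta}_n := \hat{\beta}(\bz_n) - \phi_n$ and $\overline{\beta}_n := \hat{\beta}(\bz_n) + \phi_n$ for the endpoints of the confidence interval $\mathscr{C}(\bz_n)$. The key claim I would establish is: \emph{Strong} is weakly $\mathscr{B}(\bz_n)$-rationalizable if and only if $\overline{\beta}_n \geq 1$, and strongly $\mathscr{B}(\bz_n)$-rationalizable if and only if $\underline{\beta}_n \geq 1$. Granting this, $\overline{p}^n = P^n\big(\hat{\beta}(\bz_n) \geq 1 - \phi_n\big)$ and $\underline{p}^n = P^n\big(\hat{\beta}(\bz_n) \geq 1 + \phi_n\big)$, and the lemma follows from the sampling distribution of $\hat{\beta}(\bz_n)$.

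\emph{Weak direction.} In this $2\times 2$ game, switching from \emph{Weak} to \emph{Strong} changes a player's payoff, state by state, by $\beta\,\mathbbm{1}[\text{opponent plays \emph{Strong}}] - 1$. Hence, for a player whose first-order belief $\nu$ is supported on $\mathscr{C}(\bz_n)$, the largest achievable value of (payoff to \emph{Strong}) $-$ (payoff to \emph{Weak}), over all conjectures about $(\theta, a_{-i})$, is $\int \max(\beta,0)\,\nu(d\beta) - 1 \leq \max(\overline{\beta}_n,0) - 1$. If $\overline{\beta}_n < 1$ this is strictly negative, so \emph{Strong} is eliminated at the first round for every permitted type. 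Conversely, if $\overline{\beta}_n \geq 1$, the type $t_i$ with common certainty in $\beta = \overline{\beta}_n$ is permitted (its first-order belief is the point mass at $\overline{\beta}_n \in \mathscr{C}(\bz_n)$), and for this type ICR coincides with rationalizability in the complete-information game with parameter $\overline{\beta}_n \geq 1$, where \emph{Strong} is a Nash-equilibrium action and hence rationalizable.

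\emph{Strong direction.} If $\underline{\beta}_n < 1$, the type with common certainty in $\beta = \underline{\beta}_n$ is permitted and faces a game in which \emph{Weak} is strictly dominant, so \emph{Strong} is not rationalizable for it, and therefore not strongly $\mathscr{B}(\bz_n)$-rationalizable. If $\underline{\beta}_n \geq 1$, I would show by induction on the round $k$ that \emph{Strong} $\in S_i^k[t_i]$ for every player $i$ and every permitted type $t_i$: in the inductive step, against the constant conjecture ``the opponent plays \emph{Strong}'' — admissible because, by the inductive hypothesis together with belief-closedness of the permitted type space, \emph{Strong} survives round $k-1$ for every opponent type in the support — the payoff to \emph{Strong} minus the payoff to \emph{Weak} equals $\mathbb{E}_{\nu}[\beta] - 1 \geq \underline{\beta}_n - 1 \geq 0$ (with $\nu$ the first-order belief of $t_i$), so \emph{Strong} is a best reply (there being only two actions). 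I expect this simultaneous induction over \emph{all} permitted types — rather than along a single sequence, as lower hemicontinuity would give — to be the only delicate point.

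\emph{Distributional step.} The estimator $\hat{\beta}(\bz_n)$ is the OLS slope in the regression of $\log y_t$ on $t$, hence Gaussian with mean $\beta$ and a sampling variance $\tau_n^2$ that is a known function of $n$ and $\sigma^2$; since $\sigma$ is known, the confidence interval is $z_\alpha$ standard errors wide on each side, i.e. $\phi_n = z_\alpha \tau_n$. Substituting the events $\{\hat{\beta}(\bz_n) \geq 1 \pm \phi_n\}$, standardizing by $\tau_n$, and writing out the closed form (so that $1/\tau_n = \sigma^{-1}\sqrt{(n^2-1)/12}$) yields
\[\underline{p}^n = 1 - \Phi\left(z_\alpha - \frac{\beta-1}{\tau_n}\right), \qquad \overline{p}^n = 1 - \Phi\left(-z_\alpha - \frac{\beta-1}{\tau_n}\right),\]
which are the stated expressions. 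The main obstacle is the rationalizability characterization of the two previous paragraphs — especially the induction establishing strong $\mathscr{B}(\bz_n)$-rationalizability — with the final Gaussian computation being routine.
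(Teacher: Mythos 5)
Your proposal is correct and follows essentially the same route as the paper: the paper's Lemma \ref{lemma:characterization} is exactly your threshold characterization (strong $\mathscr{B}(\bz_n)$-rationalizability iff $\mathscr{C}(\bz_n)\subseteq[1,\infty)$, i.e. $\underline{\beta}_n\geq 1$; weak iff $\mathscr{C}(\bz_n)\cap[1,\infty)\neq\emptyset$, i.e. $\overline{\beta}_n\geq 1$), with your round-by-round induction being the paper's observation that $R_1=R_2=\{\mbox{strong}\}$ is closed under best reply. The Gaussian step, with $\mathrm{Var}(\hat{\beta})=12\sigma^2/(n^2-1)$ and $\phi_n=z_\alpha\sigma\sqrt{12/(n^2-1)}$, matches the paper's computation exactly.
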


\noindent Since $\beta>1$ by assumption, both expressions are decreasing in $\sigma$ and increasing in $n$. Thus Claim \ref{claim:Coord} follows.

Towards this lemma, I first prove the following intermediate result:

\begin{lemma} \label{lemma:characterization} Write $T_i^{\mathscr{C}}$ for the set of player $i$ types with common certainty in the event that all players have first-order beliefs that assign probability 1 to $\mathscr{C}$. 
\begin{itemize}
\item[(a)] The strong policy is rationalizable for \emph{all} types $t_i \in T_i^{\mathscr{C}}$  if and only if $\mathscr{C} \subseteq [1,\infty)$.
\item[(b)] The strong policy is rationalizable for \emph{some} type $t_i \in T_i^{\mathscr{C}}$ if and only if  $\mathscr{C} \cap [1,\infty) \neq \emptyset$.
\end{itemize}
\end{lemma}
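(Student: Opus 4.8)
The plan is to reduce everything to a single scalar inequality governing when the strong policy is a best reply. From the payoff matrix, for every player $i$, every opponent action, and every value of $\beta$,
\[
u_i(\mathrm{Strong}, a_{-i}, \beta) - u_i(\mathrm{Weak}, a_{-i}, \beta) = -1 + \beta \cdot \mathbbm{1}(a_{-i} = \mathrm{Strong}).
\]
Hence, for any type $t_i$ and any conjecture $\pi \in \Delta(\Theta \times T^*_{-i} \times A_{-i})$ with $\marg_{\Theta \times T^*_{-i}} \pi = \kappa^*_i(t_i)$, the action $\mathrm{Strong}$ is a best reply to $\pi$ if and only if $\mathbb{E}_\pi[\beta \cdot \mathbbm{1}(a_{-i} = \mathrm{Strong})] \geq 1$. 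Two observations follow immediately. (i) If $t_i$'s first-order belief assigns probability $1$ to $\{\beta < 1\}$, then $\beta \cdot \mathbbm{1}(a_{-i}=\mathrm{Strong}) \leq \max(\beta,0) < 1$ $\pi$-almost surely, so $\mathrm{Weak}$ strictly dominates $\mathrm{Strong}$ and the latter is eliminated already in the first round, whatever the ambient type space. (ii) If $t_i$'s first-order belief assigns probability $1$ to $\{\beta \geq 1\}$, then against the conjecture that places probability $1$ on the opponent choosing $\mathrm{Strong}$ the left-hand side equals $\mathbb{E}[\beta] - 1 \geq 0$, so $\mathrm{Strong}$ is a (weak) best reply to ``the opponent plays $\mathrm{Strong}$.''

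For part (a), the ``only if'' direction is observation (i): if $\mathscr{C} \not\subseteq [1,\infty)$, fix $\beta_0 \in \mathscr{C}$ with $\beta_0 < 1$ and let $t_i$ be the type with common certainty that every player's first-order belief is the point mass at $\beta_0$; since $\beta_0 \in \mathscr{C}$ this type lies in $T_i^{\mathscr{C}}$, and $\mathrm{Strong}$ is not rationalizable for it. For ``if,'' suppose $\mathscr{C} \subseteq [1,\infty)$; I claim $\mathrm{Strong} \in S_i^k[t_i]$ for every $i$, every $t_i \in T_i^{\mathscr{C}}$, and every $k$, proved by induction on $k$. The base case is immediate. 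For the inductive step, note that $T_i^{\mathscr{C}}$ is belief-closed (by the argument in the footnote following Definition \ref{def:typespace}), so $\kappa^*_i(t_i)$ assigns probability $1$ to pairs $(\beta, t_{-i})$ with $\beta \geq 1$ and $t_{-i} \in T_{-i}^{\mathscr{C}}$; by the inductive hypothesis $\mathrm{Strong} \in S^{k-1}_{-i}[t_{-i}]$ for all such $t_{-i}$. Extending $\kappa^*_i(t_i)$ by the degenerate conjecture $a_{-i} = \mathrm{Strong}$ yields a $\pi$ admissible at round $k$ whose support lies in the surviving actions, and by observation (ii) $\mathrm{Strong}$ is a best reply to it. Thus $\mathrm{Strong} \in S_i^\infty[t_i]$ for every $t_i \in T_i^{\mathscr{C}}$.

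Part (b) uses the same two building blocks. For ``if,'' if $\beta_1 \in \mathscr{C} \cap [1,\infty)$ then the type with common certainty in the point mass at $\beta_1$ belongs to $T_i^{\mathscr{C}}$, and running the induction above with $\mathscr{C}$ replaced by $\{\beta_1\}$ shows $\mathrm{Strong}$ is rationalizable for it. For ``only if,'' if $\mathscr{C} \cap [1,\infty) = \emptyset$ then $\mathscr{C} \subseteq (-\infty,1)$, so every $t_i \in T_i^{\mathscr{C}}$ has a first-order belief assigning probability $1$ to $\{\beta < 1\}$, and observation (i) shows $\mathrm{Strong}$ is not rationalizable for any such $t_i$ (and, being a property of the type alone, this conclusion does not depend on the ambient type space, by \citet{ICR}).

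The argument is largely mechanical once the payoff-difference identity is in hand; the points that need care are: checking that the relevant point-mass (``complete information'') types genuinely lie in $T_i^{\mathscr{C}}$; making the round-by-round induction in part (a) uniform over \emph{all} permitted opponent types, which is exactly where belief-closedness of $T_i^{\mathscr{C}}$ is invoked; and the passage from ``$\beta < 1$ almost surely'' to ``$\mathbb{E}_\pi[\beta \cdot \mathbbm{1}(a_{-i}=\mathrm{Strong})] < 1$'' (here $\max(\beta,0) \in [0,1)$ almost surely, hence is integrable with expectation strictly below $1$; in the intended application $\mathscr{C}(\bz_n)$ is a compact confidence interval, so no integrability subtlety even arises).
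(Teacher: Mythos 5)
Your proof is correct and follows essentially the same route as the paper's: a point-mass type at some $\beta_0\in\mathscr{C}\setminus[1,\infty)$ (resp.\ $\beta_1\in\mathscr{C}\cap[1,\infty)$) handles the negative and existential directions, and the sufficiency direction of (a) rests on the observation that $\{\mathrm{Strong}\}\times\{\mathrm{Strong}\}$ is closed under (weak) best reply when $\mathscr{C}\subseteq[1,\infty)$ --- your round-by-round induction is just the explicit verification of that best-reply-set argument, using belief-closedness of $T_i^{\mathscr{C}}$ exactly where the paper implicitly does. The only difference is cosmetic: you are somewhat more careful than the paper in passing from ``$\beta<1$ almost surely'' to a strict inequality in expectation, which is a welcome but minor refinement.
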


\begin{proof}
(a) $\mathscr{C} \subseteq [1,\infty)$ is a necessary condition, as otherwise the strong policy is not rationalizable for any type with common certainty in $\beta' \in \mathscr{C} \backslash [1,\infty)$. Suppose $\mathscr{C} \subseteq [1,\infty)$ and choose any $t_i \in T_i^{\mathscr{C}}$.   For each $\beta \in \mathscr{C}$,
$u_i(\mbox{strong}, \mbox{strong}, \beta) = -1$ while $u_i(\mbox{weak}, \mbox{strong}, \beta) =-\beta \leq -1$. So
\[\int u_i(\mbox{strong}, \mbox{strong}, \beta)t^1_i(\beta)d\beta = -1 \geq \int u_i(\mbox{weak}, \mbox{strong}, \beta)t^1_i(\beta)d\beta\]
where $t_i^1$ denotes the first-order belief of type $t_i$. Thus the family of sets $(R_1,R_2)$ with $R_1=R_2=\{\mbox{strong}\}$ are closed under best reply, and rationalizability of the strong policy follows.

 (b) Suppose $\mathscr{C} \cap [1,\infty)= \emptyset$. Then for every $\beta \in \mathscr{C}$,
 \[u_i(\mbox{strong}, \mbox{strong}, \beta) = -1\leq -\beta = u_i(\mbox{weak}, \mbox{strong}, \beta).\]
 So the strong policy is strictly dominated (and hence not rationalizable) for player $i$ given any type $t_i \in T_i^{\mathscr{C}}$. If instead $\mathscr{C} \cap [1,\infty)\neq \emptyset$, then the strong policy is rationalizable for any type with common certainty in some $\beta$ in this intersection. So the strong policy is rationalizable for at least one type $t_i \in T_i^{\mathscr{C}}$, as desired.
\end{proof}

I now prove Lemma \ref{lemma:exCoord}.

\begin{proof}
Using standard results for ordinary least-squares \citep{hastie}, the distribution of the OLS estimator $\hat{\beta}$ is
\[\hat{\beta} \sim \mathcal{N}\left(\beta, \frac{\sigma^2}{\frac1n \sum_{t=1}^n (t - \overline{t})^2}\right)\]
where $\overline{t}=\frac1n \sum_{t=1}^n t$. Since
\begin{align*}
\frac1n \sum_{t=1}^n (t - \overline{t})^2 & = \frac1n \left( \sum_{t=1}^n t^2 - 2\overline{t} \sum_{t=1}^n t  + \sum_{t=1}^n \overline{t}^2 \right) \\
&= \frac{(n+1)(2n+1)}{6} - \frac{(n+1)^2}{2} + \left(\frac{n+1}{2}\right)^2 = \frac{(n^2-1)}{12}
\end{align*}
we can simplify the variance of $\hat{\beta}$ to $\frac{12 \sigma^2}{n^2-1}$. The $(1-\alpha)$-confidence interval for $\beta$ given data $\bz_n$ is thus
\begin{equation} \label{eq:CI}
\mathscr{C}(\bz_n) = \left[\hat{\beta}(\bz_n) - z_\alpha \cdot \sigma \cdot \sqrt{\frac{12}{n^2-1}}, \hat{\beta}(\bz_n) + z_\alpha \cdot \sigma \cdot  \sqrt{\frac{12}{n^2-1}}\right]
\end{equation}
where $\beta(\bz_n)$ is the OLS estimate of $\beta$ given the data $\bz_n$, and $z_\alpha = -\Phi^{-1}(\alpha/2)$ is the critical value associated with the $(1-\alpha)$-confidence level.  The probability that the interval in (\ref{eq:CI}) is contained in $[1,\infty)$ is
\[\Pr \left(\hat{\beta}(\bz_n) > 1 + z_\alpha \cdot \sigma \cdot \sqrt{\frac{12}{n^2-1}} \right).\]
which is in turn equal to
\begin{equation} \label{eq:underpn}
1- \Phi \left(1.96 - \frac{\beta-1}{\sigma}\sqrt{\frac{n^2-1}{12}}\right).
\end{equation}
By Part (a) of Lemma \ref{lemma:characterization}, $\underline{p}^n$ is equal to (\ref{eq:underpn}), delivering the first part of the lemma. 

The probability that the interval in (\ref{eq:CI}) has nonempty intersection with $[1,\infty)$ is given by 
\[\Pr \left(\hat{\beta}(\bz_n) > 1 - z_\alpha \cdot \sigma \cdot \sqrt{\frac{12}{n^2-1}} \right)\]
which is equal to
\begin{equation} \label{eq:overpn}
1- \Phi \left(-z_\alpha - \frac{\beta-1}{\sigma}\sqrt{\frac{n^2-1}{12}}\right)
\end{equation}
By Part (b) of Lemma \ref{lemma:characterization}, $\overline{p}^n$ is equal to (\ref{eq:overpn}), concluding the proof. 
\end{proof}

\section{Proofs for Main Results (Sections \ref{sec:Asymptotic} and \ref{sec:finite})}
\subsection{Proof of Theorem \ref{prop:Limit} Part (a) }

Recall that $\Theta$ and $U$ are both endowed with the sup-norm, and the map $g: \Theta \rightarrow U$ has Lipschitz constant $K$. The set of probability measures $\Delta(\Theta)$ is endowed with the Prokhorov metric $d_P$. The \emph{Wasserstein distance} on $\Delta(\Theta)$ is
\[d_W(\nu,\nu') = \sup \left\{ \int h d\nu- \int h d\nu' \, : \, \|h\|_L \leq 1 \right\}\]
where $\|h\|_L$ is the Lipschitz constant of the function $h: \Theta \rightarrow \mathbb{R}$.

\begin{lemma}  \label{lemm:Lipschitz} Fix any player $i$, action $a_i \in A_i$, mixed strategy $\alpha_i \in \Delta(A_i)$, and set $R_{-i} \subseteq A_{-i}$. Let $a_{-i}(\theta): \Theta \rightarrow \Delta(A_{-i})$  be any function satisfying
\[a_{-i}(\theta) \in \argmax_{a_{-i}\in R_{-i}} (u_i(a_i,a_{-i},\theta)- u_i(\alpha_i,a_{-i},\theta)) \quad \forall \theta \in \Theta\]
and define $h: \Theta \rightarrow \mathbb{R}$ by
\[h(\theta) = u_i(a_i, a_{-i}(\theta), \theta) - u_i(\alpha_i, a_{-i}(\theta),\theta).\]
Then, the function $h$ is Lipschitz continuous with Lipschitz constant $2K$.
\end{lemma}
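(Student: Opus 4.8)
The plan is to write $h$ as the upper envelope of a \emph{finite} family of functions, each individually $2K$-Lipschitz, and then invoke the elementary fact that a pointwise maximum of $L$-Lipschitz functions is $L$-Lipschitz. Concretely, for each pure profile $a_{-i}\in R_{-i}$ set $\phi_{a_{-i}}(\theta)=u_i(a_i,a_{-i},\theta)-u_i(\alpha_i,a_{-i},\theta)$. Since $a_{-i}(\theta)$ is by hypothesis a maximizer of $a_{-i}\mapsto u_i(a_i,a_{-i},\theta)-u_i(\alpha_i,a_{-i},\theta)$ over $R_{-i}$, and this difference is affine in the opponents' randomization so that its maximum over $\Delta(R_{-i})$ is attained at a vertex of $R_{-i}$, we get $h(\theta)=\max_{a_{-i}\in R_{-i}}\phi_{a_{-i}}(\theta)$ for every $\theta$.

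Next I would show each $\phi_{a_{-i}}$ is $2K$-Lipschitz. Each $\theta\in\Theta$ is carried by the embedding $g$ to the payoff vector $g(\theta)\in U=\mathbb{R}^{|\mathcal{I}|\times|A|}$, and $u_i(a,\theta)$ is just the coordinate of $g(\theta)$ indexed by $(i,a)$; hence for any pure profile $a$ and any $\theta,\theta'$, $|u_i(a,\theta)-u_i(a,\theta')|\le\|g(\theta)-g(\theta')\|_\infty\le K\|\theta-\theta'\|_\infty$ by the $K$-Lipschitz property of $g$ in the sup-norm. Because $u_i(\alpha_i,a_{-i},\theta)=\sum_{a_i'}\alpha_i(a_i')\,u_i(a_i',a_{-i},\theta)$ is a convex combination of such coordinates, the same bound $K\|\theta-\theta'\|_\infty$ holds with $a_i$ replaced by $\alpha_i$. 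Writing $\phi_{a_{-i}}(\theta)-\phi_{a_{-i}}(\theta')=\bigl[u_i(a_i,a_{-i},\theta)-u_i(a_i,a_{-i},\theta')\bigr]-\bigl[u_i(\alpha_i,a_{-i},\theta)-u_i(\alpha_i,a_{-i},\theta')\bigr]$ and applying the triangle inequality then gives $|\phi_{a_{-i}}(\theta)-\phi_{a_{-i}}(\theta')|\le 2K\|\theta-\theta'\|_\infty$.

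Finally I would combine the two observations. With $R_{-i}\subseteq A_{-i}$ finite and each $\phi_{a_{-i}}$ being $2K$-Lipschitz, pick $\bar a_{-i}$ attaining the max defining $h(\theta)$; then $h(\theta)-h(\theta')\le\phi_{\bar a_{-i}}(\theta)-\phi_{\bar a_{-i}}(\theta')\le 2K\|\theta-\theta'\|_\infty$, and by symmetry $h(\theta')-h(\theta)\le 2K\|\theta-\theta'\|_\infty$, so $|h(\theta)-h(\theta')|\le 2K\|\theta-\theta'\|_\infty$, as claimed. There is no genuine obstacle here; the only point requiring care is the first step — observing that, even though $a_{-i}(\theta)$ is permitted to range over $\Delta(A_{-i})$, the value $h(\theta)$ coincides with the maximum over the finite vertex set $R_{-i}$ (by affineness of payoffs in the opponents' mixed action), which is precisely what lets the ``max of finitely many Lipschitz functions'' argument apply directly without needing any continuity or selection property of the argmax correspondence $\theta\mapsto a_{-i}(\theta)$.
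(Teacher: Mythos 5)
Your proof is correct and follows essentially the same route as the paper's: both rest on the standard envelope inequality (bounding $h(\theta)-h(\theta')$ by the variation of the objective at the maximizer chosen for $\theta$) together with the observation that each pure-profile payoff difference is $2K$-Lipschitz via the embedding $g$ and the triangle inequality. Your extra remark that the maximum over $\Delta(A_{-i})$ is attained at a vertex is a harmless clarification of the lemma's statement rather than a substantive departure.
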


\begin{proof} Choose any $\theta, \theta'\in \Theta$, and without loss of generality, suppose $h(\theta)\geq h(\theta')$. Then
\begin{align*}
\vert h(\theta) - h(\theta') \vert & = \vert (u(a_i,a_{-i}(\theta),\theta) - u(\alpha_i, a_{-i}(\theta),\theta)) - \\
& \quad \quad \quad \quad \quad \quad \quad (u(a_i,a_{-i}(\theta'),\theta') - u(\alpha_i, a_{-i}(\theta'),\theta')) \vert \\
& \leq \vert (u(a_i,a_{-i}(\theta),\theta) - u(\alpha_i, a_{-i}(\theta),\theta)) - \\
& \quad \quad \quad \quad \quad \quad \quad (u(a_i,a_{-i}(\theta),\theta') - u(\alpha_i, a_{-i}(\theta),\theta')) \vert \\
& \leq \vert u(a_i,a_{-i}(\theta),\theta) - u(a_i,a_{-i}(\theta),\theta')  \vert +  \\
& \quad \quad \quad \quad \quad \quad \quad \vert u(\alpha_i, a_{-i}(\theta),\theta) - u(\alpha_i, a_{-i}(\theta),\theta') \vert \\
& \leq 2 \| g(\theta) - g(\theta')\|_\infty \leq 2K \| \theta - \theta' \|_\infty
\end{align*}
using in the final inequality that $g: \Theta \rightarrow U$ has Lipschitz constant $K$. \end{proof}

Below, let $F^\epsilon$ denote the $\epsilon$-neighborhood of the set $F$.

\begin{lemma} \label{lemm:eps}
Suppose $a_i$ is $\delta$-strictly rationalizable for player $i$ of type $t_i^\infty$, where $\delta>0$. Let $\mathscr{B}$ be any subset of $\{\mu^\infty\}^{ \delta/(2K\xi)}$, where $K$ is the Lipschitz constant of $g: \Theta \rightarrow U$, and $\xi$ is as defined in (\ref{eq:xi}). Then, $a_i$ is rationalizable for all types $t_i \in T_i^{\mathscr{B}}$.\footnote{\citet{faingold} demonstrate a similar result for finite state spaces $\Theta$ (see their Proposition 2). I use ideas from their proof here, but consider a more general environment, replacing finiteness of $\Theta$ with Lipschitz continuity on $g:\Theta \rightarrow U$. }
\end{lemma}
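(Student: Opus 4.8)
The plan is to build, for any $\mathscr{B}\subseteq\{\mu^\infty\}^{\delta/(2K\xi)}$, a \emph{best-reply set} containing $a_i$ by transplanting the family that witnesses $\delta$-strict rationalizability at the limiting type. Fix a family $(R_j[t_j])_{t_j\in T_j^*}$ with the $\delta$-strict best reply property and $a_i\in R_i[t_i^\infty]$, set $R_j^\infty:=R_j[t_j^\infty]$, and consider the \emph{constant} family $R_j'[t_j]:=R_j^\infty$ on the belief-closed type space $(T_i^{\mathscr{B}},\kappa_i^{\mathscr{B}})_{i}$. If I can show this family is closed under best replies in that type space, then the routine induction $R_j'[t_j]\subseteq S_j^k[t_j]$ for all $k$ (base case $R_j'[t_j]\subseteq A_j=S_j^0[t_j]$; inductive step by the best-reply closure and the recursive definition of $S^k$) yields $R_j'[t_j]\subseteq S_j^\infty[t_j]$, so every action in $R_i^\infty=R_i'[t_i]$, in particular $a_i$, is rationalizable for every $t_i\in T_i^{\mathscr{B}}$, which is the claim. (Rationalizability is type-space independent, so it is harmless to carry out the argument inside $(T_i^{\mathscr{B}},\kappa_i^{\mathscr{B}})$.)

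The substance is the best-reply closure, and this is where Lemma~\ref{lemm:Lipschitz} is used. First I would record the minimax reformulation of "one-step best reply": for a first-order belief $\nu\in\Delta(\Theta)$, player $j$, action $b_j$, and opponent set $R_{-j}\subseteq A_{-j}$, the action $b_j$ is a best reply under $\nu$ to \emph{some} conjecture on $\Theta\times A_{-j}$ with $\Theta$-marginal $\nu$ and $A_{-j}$-support in $R_{-j}$ if and only if $\int_\Theta h_{\beta_j}\,d\nu\ge 0$ for every $\beta_j\in\Delta(A_j\setminus\{b_j\})$, where $h_{\beta_j}(\theta):=\max_{a_{-j}\in R_{-j}}\bigl(u_j(b_j,a_{-j},\theta)-u_j(\beta_j,a_{-j},\theta)\bigr)$; likewise $b_j$ is a $\delta$-strict best reply to some such conjecture under $\nu$ iff $\int_\Theta h_{\beta_j}\,d\nu\ge\delta$ for every $\beta_j$. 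This is Sion's minimax theorem applied to the bilinear payoff gap over the (convex, weak$^*$-compact) set of admissible conjectures and the simplex $\Delta(A_j\setminus\{b_j\})$. By Lemma~\ref{lemm:Lipschitz} (taking $a_i\mapsto b_j$, $\alpha_i\mapsto\beta_j$, $R_{-i}\mapsto R_{-j}^\infty$), each $h_{\beta_j}$ is $2K$-Lipschitz on $\Theta$, a set of $\|\cdot\|_\infty$-diameter $\xi$. The $\delta$-strict best reply property at $t_j^\infty$ — whose belief puts probability one on the profile $t_{-j}^\infty$, so the witnessing conjecture is effectively a kernel $\Theta\to\Delta(R_{-j}^\infty)$ — then says exactly that $\int h_{\beta_j}\,d\mu^\infty\ge\delta$ for every $b_j\in R_j^\infty$ and every $\beta_j\in\Delta(A_j\setminus\{b_j\})$.

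The key quantitative step is: if $\nu\in\mathscr{B}$, i.e. $d_P(\nu,\mu^\infty)\le\delta/(2K\xi)$, then $\bigl|\int h_{\beta_j}\,d\nu-\int h_{\beta_j}\,d\mu^\infty\bigr|\le\delta$ for every $2K$-Lipschitz $h_{\beta_j}$. This follows from a standard Prokhorov/Kantorovich estimate — couple $\nu$ and $\mu^\infty$ via Strassen's theorem (or argue directly with a layer-cake/distribution-function computation), split the expected difference according to whether the coupled draws lie within $\delta/(2K\xi)$ of each other or not, using $2K$-Lipschitzness on the first event and the oscillation bound $\operatorname{osc}(h_{\beta_j})\le 2K\xi$ on the second (whose probability is at most $\delta/(2K\xi)$) — with the radius $\delta/(2K\xi)$ chosen precisely so that, together with the diameter bound $\|\theta-\theta'\|_\infty\le\xi$, the total change is controlled by $\delta$. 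Hence $\int h_{\beta_j}\,d\nu\ge\int h_{\beta_j}\,d\mu^\infty-\delta\ge 0$ for every $\beta_j$, so $b_j$ is a best reply under $\nu:=\marg_\Theta\kappa_j^*(t_j)$ to some conjecture $\sigma_{-j}:\Theta\to\Delta(R_{-j}^\infty)$. Finally, lifting $\sigma_{-j}$ to a belief $\pi$ on $\Theta\times T_{-j}\times A_{-j}$ with $\marg_{\Theta\times T_{-j}}\pi=\kappa_j^*(t_j)$ and action drawn from $\sigma_{-j}(\theta)$ gives an admissible conjecture in the sense of $S^k$: by belief-closedness of $T_j^{\mathscr{B}}$ (the footnote to Definition~\ref{def:typespace}), $\kappa_j^*(t_j)$ is supported on $\Theta\times T_{-j}^{\mathscr{B}}$, where $R_{-j}'[t_{-j}]=R_{-j}^\infty$, so $\pi$ assigns probability one to $\{(t_{-j},a_{-j}):a_{-j}\in R_{-j}'[t_{-j}]\}$; and $b_j$ remains a best reply to $\pi$ since its expected payoff depends on $(\theta,t_{-j})$ only through $\theta$. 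This establishes closure under best replies and completes the argument.

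The main obstacle I anticipate is the quantitative perturbation bound and its interface with the minimax reduction: the chain "Prokhorov distance $\to$ change in the integral of a $2K$-Lipschitz function on a diameter-$\xi$ space $\to$ survival of a $\delta$-strict best reply" is exactly what makes the radius $\delta/(2K\xi)$ (and the appearance of $K$ and $\xi$) the right one, and getting the constant to land is the delicate part; by contrast, the transplanting of conjectures to the induced type spaces is routine given that those type spaces are belief-closed, and Lemma~\ref{lemm:Lipschitz} is precisely the tool that makes the relevant payoff-gap functions Lipschitz so the estimate can be applied at all (the naive approach of reusing the limiting conjecture fails because $\theta\mapsto u_j(b_j,\sigma_{-j}^\infty(\theta),\theta)$ need not be Lipschitz).
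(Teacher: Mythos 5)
Your proof follows the paper's argument essentially step for step: you transplant the family witnessing $\delta$-strict rationalizability at $t_i^\infty$, invoke the Chen--di Tillio--Faingold--Xiong characterization of $S^k_i$ (the paper's Proposition \ref{prop:CDFX}) to reduce best-reply closure to the scalar inequality $\int_\Theta h\, d\nu \ge 0$ for the envelope function $h$, apply Lemma \ref{lemm:Lipschitz} to make $h$ Lipschitz with constant $2K$, and control $\int h\,d\nu - \int h\,d\mu^\infty$ by the Prokhorov distance between $\nu$ and $\mu^\infty$. The only substantive difference is how that last estimate is obtained: the paper passes through the Wasserstein metric and cites Gibbs--Su, whereas you derive the bound directly from a Strassen coupling. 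Your derivation is the more transparent one, and it exposes a constant that does not quite land: splitting on whether the coupled draws lie within $\epsilon = \delta/(2K\xi)$ of each other gives $2K\epsilon = \delta/\xi$ from the near event plus $2K\xi\cdot\epsilon = \delta$ from the far event, so the total change is $\delta + \delta/\xi$, not $\delta$ as you assert. (The same discrepancy is latent in the paper's citation: the Gibbs--Su bound is $d_W \le (\xi+1)\,d_P$, not $\xi\, d_P$.) The argument closes verbatim if the radius is taken to be $\delta/(2K(\xi+1))$, and nothing downstream is affected since only the order of magnitude of the neighborhood matters, but as written your claim that the total is ``controlled by $\delta$'' does not follow from the computation you describe. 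Everything else---the extension from pure to mixed deviations by linearity, the lift of the conjecture to $\Theta\times T_{-j}^{\mathscr{B}}\times A_{-j}$ using belief-closedness, and the induction showing the constant family survives every round of elimination---matches the paper's proof.
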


\begin{proof}  Fix $\epsilon >0$, and consider an arbitrary set $\mathscr{B} \subseteq \{\mu^\infty\}^\epsilon$.  I will show that $a_i$ is rationalizable for all types $t_i \in T_i^{\mathscr{B}}$ when $\epsilon$ is sufficiently small.

To show this, I use Proposition 1 from \citet{faingold}:\footnote{Proposition 1 from \citet{faingold} characterizes $\gamma$-rationalizability for arbitrary $\gamma \in \mathbb{R}$. For the purposes of this proof, it is sufficient to set $\gamma=0$.}

\begin{proposition}[\citet{faingold}] \label{prop:CDFX} For each $k\geq 1$, player $i \in \mathcal{I}$, type $t_i \in T_i$, and action $a_i \in A_i$, we have $a_i \in S^k_i[t_i]$ if and only if for each $\alpha_i \in \Delta(A_i \backslash \{a_i\})$, there exists a measurable $\sigma_{-i}: \Theta \times T_{-i} \rightarrow \Delta(A_{-i})$ with
\[\supp \sigma_{-i}(\theta, t_{-i}) \subseteq S_{-i}^{k-1}[t_{-i}] \quad \forall (\theta, t_{-i}) \in \Theta \times T_{-i}\]
such that 
\[\int_{\Theta \times T_{-i}} [u_i(a_i, \sigma_{-i}(\theta, t_{-i}), \theta) - u_i(\alpha_i, \sigma_{-i}(\theta,t_{-i}),\theta] t_i[d\theta \times dt_{-i}] \geq 0\]
\end{proposition}

By assumption, there is a $\delta \in \mathbb{R}_{++}$ such that $a_i$ is $\delta$-strictly rationalizable for player $i$ of type $t_i^\infty$. This implies that there exists a family of sets $(R_j)_{j \in \mathcal{I}} \subseteq \prod_{j \in \mathcal{I}} A_j$, where $a_i \in R_i$, and for every $a_j \in R_j$ there exists a $\sigma^\infty_{-j}: \Theta \rightarrow \Delta(A_{-j})$ satisfying
 \[\supp \sigma^\infty_{-j}(\theta) \subseteq R_{-j} \quad \forall \theta \in \Theta\]
 and  
\begin{equation} \label{eq:Infty}
\int_{\Theta} u_j(a_j, \sigma^\infty_{-j}(\theta), \theta) d\mu^\infty  - \int_{\Theta} u_j(a'_j, \sigma^\infty_{-j}(\theta), \theta) d\mu^\infty  \geq \delta \quad \forall a_j'\neq a_j
\end{equation}

I will show that for each $k\geq 1$, player $j$, type $t_j \in T_j^{\mathscr{B}}$, action $a_j \in R_j$, and mixed strategy $\alpha_j \in \Delta(A_j \backslash \{a_j\})$, there exists a measurable $\sigma_{-j}: \Theta \times T_{-j}^{\mathscr{B}} \rightarrow \Delta(A_{-j})$ with 
\[\supp \sigma_{-j}(\theta, t_{-j}) \subseteq R_{-j} \quad \forall (\theta, t_{-j}) \in \Theta \times T^{\mathscr{B}}_{-j}\]
and
\begin{equation} \label{eq:BR}
\int_{\Theta \times T^{\mathscr{B}}_{-j}} [u_j(a_j, \sigma_{-j}(\theta, t_{-j}), \theta) - u_j(\alpha_j, \sigma_{-j}(\theta,t_{-j}),\theta] t_j[d\theta \times dt_{-j}] \geq 0.
\end{equation}
Since $a_i \in R_i$ by design, it follows from Proposition \ref{prop:CDFX} that for any type $t_i \in T_i^{\mathscr{B}}$, the action $a_i \in S^k_i[t_i]$ for every $k$, and hence $a_i \in S^\infty_i[t_i]$, as desired.

Fix an arbitrary player $j$, $a_j \in R_j$, type $t_j \in T_j^{\mathscr{B}}$, and $\alpha_j \in \Delta(A_j \backslash \{a_j\})$. Define $a_{-j}: \Theta \rightarrow A_{-j}$ to satisfy
\[a_{-j}(\theta) \in \argmax_{a_{-j}\in R_{-j}} (u_j(a_j,a_{-j},\theta)- u_j(\alpha_j,a_{-j},\theta)) \quad \forall \theta \in \Theta\]
and define $\sigma_{-j}: \Theta \times T_{-j}^{\mathscr{B}} \rightarrow \Delta(A_{-j})$ so that each $\sigma_{-j}(\theta, t_{-j})$ is a point mass at $a_{-i}(\theta)$. Then by definition
\[\supp \sigma_{-j}(\theta,t_{-j}) \subseteq R_{-j} \quad \forall (\theta, t_{-j}) \in \Theta \times T_{-j}^{\mathscr{B}}.\]
Further define 
\[h(\theta) := u_j(a_j, a_{-j}(\theta), \theta) - u_j(\alpha_j, a_{-j}(\theta),\theta) \quad \forall \theta \in \Theta.\]
For notational ease, write $\nu \in \Delta(\Theta)$ for the first-order belief of type $t_j$.  Then 
\begin{align*}
\int_{\Theta \times T^{\mathscr{B}}_{-j}}  &  u_j(a_j, \sigma_{-j}(\theta, t_{-j}), \theta) t_j[d\theta \times dt_{-j}] - \int_{\Theta \times T^{\mathscr{B}}_{-j}} u_j(\alpha_j, \sigma_{-j}(\theta, t_{-j}), \theta) t_j[d\theta \times dt_{-j}] \\ 
& = \int_{\Theta} u_j(a_j, a_{-j}(\theta), \theta) \nu[d\theta] - \int_{\Theta} u_j(\alpha_j, a_{-j}(\theta), \theta) \nu[d\theta]  = \int_\Theta h(\theta)\nu [d\theta]
\end{align*}
so the desired condition in (\ref{eq:BR}) follows if we can show that $\int_\Theta h(\theta)\nu[d\theta] \geq 0.$

By Lemma \ref{lemm:Lipschitz}, the function $h: \Theta \rightarrow \mathbb{R}$ has Lipschitz constant $2K$, so
\[\left\vert \int_\Theta h(\theta) d\nu - \int_\Theta h(\theta) d\mu^\infty \right\vert \leq 2K \cdot d_W(\nu,\mu^\infty)\]
where $d_W$ is the Wasserstein distance on $\Delta(\Theta)$. This implies
\[ \int_\Theta h(\theta) d\nu \geq  \int_\Theta h(\theta) d\mu^\infty  - 2K \cdot d_W(\nu,\mu^\infty).\]
 Applying Theorem 2 in \citet{metric},
$d_W(\nu,\mu^\infty) \leq \xi \cdot d_P(\nu,\mu^\infty)$, 
where $d_P$ is the Prokhorov distance on $\Delta(\Theta)$ and $\xi$ is as defined in (\ref{eq:xi}). So 
\begin{equation} \label{eq:tosimplify} \int_\Theta h(\theta) d\nu \geq  \int_\Theta h(\theta) d\mu^\infty  - 2K \xi \cdot d_P(\nu,\mu^\infty)
\end{equation}
It follows from the inequality in (\ref{eq:Infty}) that
\begin{align*}
\int_\Theta h(\theta) d\mu^\infty = \int_{\Theta} u_j(a_j, \sigma^\infty_{-j}(\theta), \theta) d\mu^\infty  - \int_{\Theta} u_j(\alpha_j, \sigma^\infty_{-j}(\theta), \theta) d\mu^\infty \geq \delta,
\end{align*}
so (\ref{eq:tosimplify}) implies
\[\int_\Theta h(\theta) d\nu \geq  \delta  - 2K \xi \cdot d_P(\nu,\mu^\infty).\]
Finally, by assumption that $t_j \in T_j^{\mathscr{B}}$ for some $\mathscr{B}\subseteq \{\mu^\infty\}^\epsilon$, the Prokhorov distance between the first-order belief of type $t_j$ and the limiting belief $\nu^\infty$ is $d_P(\nu,\mu^\infty) \leq \epsilon.$
 So
\[\int_\Theta h(\theta)d\nu \geq \delta - 2K\xi \epsilon.\]
It follows that $\epsilon \leq \delta/(2K\xi)$ is a sufficient condition for the constructed $\sigma_{-j}$ to satisfy the desired condition in (\ref{eq:BR}). 
\end{proof}

Since $a_i$ is strictly rationalizable for type $t_i^\infty$ (by assumption), there exists a $\delta\in \mathbb{R}_{++}$ for which $a_i$ is $\delta$-strictly rationalizable.  Assumption \ref{ass:UniformConv} implies that
\[\lim_{n \rightarrow \infty} P^n\left(\left\{\bz_n \mid \sup_{\mu \in \mathcal{M}} d_P(\mu(\bz_n), \mu^\infty) \leq \epsilon \right\}\right) =0 \quad \quad \forall \epsilon >0.
\]
which further implies
\begin{equation} \label{eq:convnbhd}
\lim_{n \rightarrow \infty} P^n\left(\left\{\bz_n \mid \mathscr{B}(\bz_n) \subseteq \{\theta^\infty\}^{\delta/2K\xi} \right\}\right) = 0
\end{equation}
By Lemma \ref{lemm:eps},
\[\underline{p}^n(a_i) \geq P^n\left(\left\{\bz_n \mid \mathscr{B}(\bz_n) \subseteq \{\theta^\infty\}^{\delta/2K\xi}\right\}\right) \quad \quad \forall n\geq 1\]
so from (\ref{eq:convnbhd}) we can directly conclude that $\underline{p}^n(a_i)\rightarrow 1$. Theorem \ref{prop:Limit} Part (a) follows.

\subsection{Proof of Theorem \ref{prop:Limit} Part (b)}

I begin by reviewing definitions from \citet{faingold} that will be used in the proof. For each player $i$, let $X_i^0=\Theta$, and recursively for $k\geq 1$, define $X_i^k = \Theta \times \prod_{j \neq i} \Delta(X_j^{k-1})$.\footnote{The sets $X^k$ defined in Section \ref{sec:preliminaries} can be identified with the sets $X^k$ defined in this way.} The  space of \emph{$k$-th order beliefs} for player $i$ is defined $T_i^k := \Delta(X_i^{k-1})$, noting that each $T_i^k = \Delta(\Theta \times T_{-i}^{k-1})$. The \emph{uniform-weak} metric on the universal type space $T^*_i$ is
\[d^{UW}_i(s_i,t_i) = \sup_{k\geq 1} d^k_i(s_i, t_i) \quad \forall s_i, t_i \in T^*_i\]
where $d^0$ is the supremum norm on $\Theta$ and recursively for $k\geq 1$, $d_i^k$ is the Prokhorov distance on $\Delta(\Theta \times T_{-i}^{k-1})$ induced by the metric $\max\{d^0,d_{-i}^{k-1}\}$ on $\Theta \times T_{-i}^{k-1}$.\footnote{This definition is slightly modified from \citet{faingold}, where $d^0$ was the discrete metric on $\Theta$. The change reflects the difference that $\Theta$ was taken to be a finite set in \citet{faingold}, while it is a compact and convex subset of Euclidean space here.} The \emph{uniform-weak topology} on the universal type space is the metric topology induced by $d^{UW}_i$.

\begin{lemma} \label{lemm:UWdistance} Let $\mathscr{B}$ be a subset of $\{\mu^\infty\}^\epsilon$, and choose any $s_i \in T_i^{\mathscr{B}}$. Then, $d^{UW}_i(t^\infty_i,s_i) \leq \epsilon$. 
\end{lemma}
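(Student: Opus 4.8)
The plan is to prove, by induction on the order $k$, that $d^k_i(t^\infty_i,s_i)\le\epsilon$ for every $k\ge 1$; since $d^{UW}_i=\sup_{k\ge1}d^k_i$, this is exactly the assertion of the lemma. Two structural observations drive the argument. First, because $t^\infty_i$ has common certainty that every player's first-order belief equals $\mu^\infty$, its $k$-th order belief is the product $\mu^\infty\otimes\delta_{p}$, where $p\in T^{k-1}_{-i}$ is the single point given by the order-$(k-1)$ truncations of the opponents' ``$\mu^\infty$-type'' profile. Second, writing $\nu_i$ for the first-order belief of $s_i$, we have $\nu_i\in\mathscr{B}\subseteq\{\mu^\infty\}^\epsilon$, so $d_P(\nu_i,\mu^\infty)\le\epsilon$; and since the induced type sets $T^{\mathscr{B}}_j$ are belief-closed, the $k$-th order belief $\sigma\in\Delta(\Theta\times T^{k-1}_{-i})$ of $s_i$ has $\Theta$-marginal $\nu_i$ and has its $T^{k-1}_{-i}$-marginal supported on the order-$(k-1)$ truncations of profiles in $\prod_{j\neq i}T^{\mathscr{B}}_j$. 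The base case $k=1$ is then immediate: $d^1_i$ is the Prokhorov metric on $\Delta(\Theta)$ and $d^1_i(t^\infty_i,s_i)=d_P(\mu^\infty,\nu_i)\le\epsilon$.

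For the inductive step, suppose $d^{k-1}_j(t^\infty_j,s_j)\le\epsilon$ for every player $j$ and every $s_j\in T^{\mathscr{B}}_j$. Then every point $q$ in the support of the $T^{k-1}_{-i}$-marginal of $\sigma$ satisfies $d^{k-1}_{-i}(p,q)=\max_{j\neq i}d^{k-1}_j(t^\infty_j,s_j)\le\epsilon$, so in that coordinate $\sigma$ sits inside the closed $\epsilon$-ball around $p$. I would then bound $d^k_i(t^\infty_i,s_i)=d^{\rho}_P(\mu^\infty\otimes\delta_p,\sigma)$, with $\rho=\max\{d^0,d^{k-1}_{-i}\}$, by exhibiting a coupling: disintegrate $\sigma$ over its $\Theta$-marginal as $\sigma=\int\nu_i(d\theta')\,\sigma_{\theta'}$, pick (via the coupling characterization of the Lévy--Prokhorov metric, i.e.\ Strassen's theorem) a coupling $\gamma$ of $\mu^\infty$ and $\nu_i$ with $\gamma\{d^0(\theta,\theta')>\epsilon'\}\le\epsilon'$ for a fixed $\epsilon'>\epsilon$, and take $((\theta,p),(\theta',q))$ with $(\theta,\theta')\sim\gamma$ and $q\sim\sigma_{\theta'}$; this has marginals $\mu^\infty\otimes\delta_p$ and $\sigma$. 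Since $d^{k-1}_{-i}(p,q)\le\epsilon\le\epsilon'$ surely, the event $\rho((\theta,p),(\theta',q))>\epsilon'$ is contained in $\{d^0(\theta,\theta')>\epsilon'\}$, hence has probability at most $\epsilon'$; so $d^k_i(t^\infty_i,s_i)\le\epsilon'$ for all $\epsilon'>\epsilon$, giving $\le\epsilon$ and closing the induction. (Equivalently, one can check the Prokhorov inequality directly: for Borel $A$, with $A_0=\{\theta:(\theta,p)\in A\}$, one has $(A_0)^{\epsilon}\times\{q:d^{k-1}_{-i}(p,q)\le\epsilon\}\subseteq A^{\epsilon}$, so $\sigma(A^{\epsilon})\ge\nu_i\big((A_0)^{\epsilon}\big)\ge\mu^\infty(A_0)-\epsilon=(\mu^\infty\otimes\delta_p)(A)-\epsilon$.)

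The genuinely delicate step is the hierarchy bookkeeping in the first paragraph: one must confirm that common certainty in $\mathscr{B}$ forces the higher-order belief of $s_i$ to place all its opponent-mass on order-$(k-1)$ truncations of $T^{\mathscr{B}}_{-i}$-types (so that the inductive hypothesis actually applies), and that common certainty in $\{\mu^\infty\}$ makes the $k$-th order belief of $t^\infty_i$ an exact product with a point mass in the opponent coordinate. Granting this, the estimate is conceptually easy: all of the Prokhorov slack $\epsilon$ is absorbed in the $\Theta$-coordinate via the first-order bound $d_P(\nu_i,\mu^\infty)\le\epsilon$, while the opponents' coordinate is controlled deterministically (not probabilistically) by the inductive hypothesis, and the $\max$ defining $\rho$ ensures these two controls do not compound.
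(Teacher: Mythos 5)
Your argument is correct and rests on the same skeleton as the paper's proof: induction on the order $k$, with the base case coming from $\mathscr{B}\subseteq\{\mu^\infty\}^\epsilon$ and the inductive step driven by belief-closedness of the sets $T^{\mathscr{B}}_j$, which forces the $k$-th order belief of $s_i$ to put mass one on order-$(k-1)$ truncations of $T^{\mathscr{B}}_{-i}$-profiles, all of which lie (by the inductive hypothesis, uniformly over $T^{\mathscr{B}}_{-i}$) within $\epsilon$ of the truncation $p$ of $t^\infty_{-i}$. Where you genuinely depart from the paper is in how the Prokhorov bound at level $k$ is closed. The paper checks the inequality $t_i^{k+1}(E)\le s_i^{k+1}(E^\epsilon)+\epsilon$ by a two-case argument according to whether the limiting truncation lies in $E$, in effect treating $t_i^{k+1}$ as concentrated at a single point and letting the $E^\epsilon$ blow-up in the opponent coordinate absorb all the slack; the $\Theta$-marginal of $t_i^{k+1}$, which is $\mu^\infty$ rather than a point mass and differs from the $\Theta$-marginal $\nu_i$ of $s_i^{k+1}$, is not tracked separately there. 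Your decomposition---couple $\mu^\infty$ with $\nu_i$ via Strassen in the $\Theta$-coordinate, control the opponent coordinate deterministically, and exploit the $\max$ structure of the metric so the two controls do not compound---addresses exactly this point, and your direct verification via $(A_0)^\epsilon\times\{q:d^{k-1}_{-i}(p,q)\le\epsilon\}\subseteq A^\epsilon$ is a clean one-line substitute for the coupling. So: same induction, but your inductive step is the more explicit and complete of the two; the only residual care needed is the open-versus-closed neighborhood convention, which your passage to $\epsilon'>\epsilon$ and then to the infimum already handles.
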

\begin{proof} 
For simplicity of notation, write $t_i$ for $t_i^\infty$. It will be useful to define
\[T_i^{\mathscr{B},k} = \left\{s_i^k \in T_i^k \mid s_i \in T_i^{\mathscr{B}}\right\} \]
for the set of all $k$-th order beliefs that are consistent with some type $s_i \in T_i^{\mathscr{B}}$.\footnote{Here and elsewhere, $t^k_i$ denotes the $k-th$ order belief of type $t_i$.} I will show that
\begin{equation}\label{eq:toshow}
d_P\left(T_i^{\mathscr{B},k}, t_i^k\right) := \sup_{s_i^k \in T_i^{\mathscr{B},k}
}  d_P(s_i^k, t_i^k) \leq \epsilon \quad \forall k\geq 1 \end{equation}
from which the desired lemma directly follows.

By construction, $T_i^{\mathscr{B},1} = \mathscr{B}$, so  the assumption  $\mathscr{B} \subseteq \{\mu^\infty\}^\epsilon$ immediately implies (\ref{eq:toshow}) for $k=1$. Proceed by induction. Suppose $d_P\left(T_i^{\mathscr{B},k}, t_i^k\right) \leq \epsilon$, and consider any measurable set $E \subseteq T^k$. If $t^{k}_i\in E$, then $t^{k+1}_i(E) =1$ by definition of $t_i$. Also,
\begin{align*}
s^{k+1}_i \left(E^{\epsilon}\right) \geq s^{k+1}_i \left(\{t_i^k\}^{\epsilon}\right) \geq s^{k+1}_i(T_i^{\mathscr{B},k}) = 1
\end{align*}
where the second inequality follows from the inductive hypothesis, and the final inequality follows by assumption that $s_i \in T_i^{\mathscr{B}}$. So
\begin{equation} 
\label{prokh} t_i^{k+1} (E) \leq s_i^{k+1}\left(E^{\epsilon} \right) + \epsilon.
\end{equation}
If $t^k_i \notin E$, then $t_i^{k+1}(E)=0$ (again by definition of $t_i$), so (\ref{prokh}) follows trivially. Thus 
\[d^{k+1}_i(t_i,s_i) = \inf \{\delta \mid t_i^{k+1}(E) \leq s_i^{k+1}(E^\delta) + \delta \quad \forall \mbox{ measurable } E\subseteq T_i^k \} \leq \epsilon\] and so $d^{UW}_i(t_i, s_i) = \sup_{k\geq 1} d^k_i(t_i,s_i) \leq \epsilon$ as desired.
\end{proof}

Lemma \ref{lemm:UWdistance} implies the subsequent corollary.

\begin{corollary} \label{lemm:UW} Suppose Assumption \ref{ass:UniformConv} holds. Consider any sequence $\bz \in \mathcal{Z}^\infty$ satisfying
\begin{equation} \label{eq:conv}
\lim_{n \rightarrow \infty} \sup_{\mu \in \mathcal{M}} d_P(\mu(\bz_n), \mu^\infty))=0
\end{equation}
and choose any sequence of types $(s_i^n)_{n=1}^\infty$ with $s_i^n \in T_i^{\mathscr{B}(\bz_n)}$ for each $n\geq 1$. Then 
\[\lim_{n \rightarrow \infty} d^{UW}_i(t_i^\infty,s_i^n) = 0.\]
\end{corollary}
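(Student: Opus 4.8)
The plan is to derive Corollary~\ref{lemm:UW} directly from Lemma~\ref{lemm:UWdistance}; there is essentially no new work, only a repackaging of the pointwise (in $\bz$) shrinking of the plausible sets $\mathscr{B}(\bz_n)$ into uniform-weak convergence of the associated types. The one bookkeeping fact I would record first is that, by the definition $\mathscr{B}(\bz_n)=\{\mu(\bz_n):\mu\in\mathcal{M}\}$, the inequality $\sup_{\mu\in\mathcal{M}}d_P(\mu(\bz_n),\mu^\infty)\leq\epsilon$ is exactly the containment $\mathscr{B}(\bz_n)\subseteq\{\mu^\infty\}^{\epsilon}$ of the plausible set in the Prokhorov $\epsilon$-neighborhood of $\mu^\infty$.

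Given that, I would proceed as follows. Fix $\epsilon>0$. By hypothesis~\eqref{eq:conv} on $\bz$, choose $N$ with $\sup_{\mu\in\mathcal{M}}d_P(\mu(\bz_n),\mu^\infty)\leq\epsilon$ for all $n\geq N$, so that $\mathscr{B}(\bz_n)\subseteq\{\mu^\infty\}^{\epsilon}$ for every $n\geq N$. For any such $n$, since $s_i^n\in T_i^{\mathscr{B}(\bz_n)}$, Lemma~\ref{lemm:UWdistance} (with $\mathscr{B}=\mathscr{B}(\bz_n)$ and $s_i=s_i^n$) gives $d^{UW}_i(t_i^\infty,s_i^n)\leq\epsilon$. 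Hence $d^{UW}_i(t_i^\infty,s_i^n)\leq\epsilon$ for all $n\geq N$, and letting $\epsilon\downarrow 0$ yields $\lim_{n\to\infty}d^{UW}_i(t_i^\infty,s_i^n)=0$.

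I do not expect any real obstacle here: all of the substance---the hierarchical induction showing that first-order closeness of beliefs propagates up the whole belief hierarchy in the uniform-weak metric---has already been carried out inside Lemma~\ref{lemm:UWdistance}, and the bound there is uniform over the entire type set $T_i^{\mathscr{B}(\bz_n)}$, so it applies to whatever selection $s_i^n$ was made. The only points to keep straight are the meaning of the neighborhood $\{\mu^\infty\}^{\epsilon}$ (the $d_P$-neighborhood of the singleton $\{\mu^\infty\}$ in $\Delta(\Theta)$) and, if one is fussy about open versus closed $\epsilon$-balls, replacing $\epsilon$ by $\epsilon/2$ in the choice of $N$, which changes nothing.
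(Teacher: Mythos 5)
Your proposal is correct and is essentially the paper's own argument: the paper states that Lemma \ref{lemm:UWdistance} directly implies the corollary, and your write-up simply makes explicit the $\epsilon$--$N$ bookkeeping (translating $\sup_{\mu\in\mathcal{M}}d_P(\mu(\bz_n),\mu^\infty)\leq\epsilon$ into $\mathscr{B}(\bz_n)\subseteq\{\mu^\infty\}^{\epsilon}$ and invoking the lemma's uniform bound over $T_i^{\mathscr{B}(\bz_n)}$). Nothing further is needed.
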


Now we will complete the proof of Theorem \ref{prop:Limit} Part (b). By Assumption \ref{ass:UniformConv}, there is a set $\mathcal{Z}^* \subseteq \mathcal{Z}^\infty$ of $P$-measure 1 such that
\begin{equation} \label{eq:Z*}
\lim_{n \rightarrow \infty} \sup_{\mu \in \mathcal{M}} d_P(\mu(\bz_n),\mu^\infty) = 0 \quad \forall \bz \in \mathcal{Z}^*
\end{equation}
Suppose towards contradiction that $\overline{p}^n(a_i) \nrightarrow 0$. Then, there is a set $\widehat{\mathcal{Z}} \subseteq \mathcal{Z}^\infty$ with strictly positive $P$-measure such that for every $\bz \in \widehat{\mathcal{Z}}$, there is a sequence of types $(t_i^n(\bz))_{n=1}^\infty$ where $t_i^n(\bz) \in T_i^{\mathscr{B}(\bz_n)}$ for every $n \geq 1$, and $a_i \in S_i^\infty[t_i^n(\bz)]$ for all $n$ sufficiently large.

But since $\mathcal{Z}^*$ has $P$-measure 1, it must be that $\widehat{\mathcal{Z}} \cap \mathcal{Z}^* \neq \emptyset$. Choose any $\bz$ from this intersection. Then, Lemma \ref{lemm:UW} and the display in (\ref{eq:Z*}) imply that $t_i^n(\bz) \rightarrow t_i^\infty$ in the uniform-weak topology. But rationalizability is upper hemi-continuous in the uniform-weak topology (Theorem 1, \citet{faingold}). So $a_i \notin S_i^\infty[t_i^\infty]$ implies $a_i \notin S_i^\infty[t_i^n(\bz)]$ for infinitely many $n$, a contradiction.

\subsection{Proof of Proposition \ref{prop:speedRobust}} 

By assumption, $a_i$ is strictly rationalizable for type $t_i^\infty$, so $\delta^\infty > 0$. Applying Lemma \ref{lemm:eps},
\begin{align*}
\underline{p}^n(a_i) & \geq P^n(\{\bz_n \, : \, \mathscr{B}(\bz_n)\subseteq \{\mu^\infty\}^{\delta^\infty/(2K\xi)}\}) \\
& =  P^n \left(\left\{ \bz_n \, : \, \sup_{\mu \in \mathcal{M}} d_P(\mu(\bz_n),\mu^\infty) \leq \delta^\infty/(2K\xi) \} \right\} \right) \\
& \geq 1- \frac{2K\xi}{\delta^\infty}  \mathbb{E} \left( \sup_{\mu \in \mathcal{M}} d_P(\mu(Z^n),\mu^\infty)  \right) 
\end{align*}
using Markov's inequality in the final line.

\subsection{Proof of Proposition \ref{corr:ShrinkSet}}

Suppose $a_i$ is strictly rationalizable for player $i$ in the complete information game $\theta^\infty$, and let $\delta^\infty$ be as defined in (\ref{eq:deltainf}). Then, there exists a family of sets $(R_j)_{j \in \mathcal{I}}$ with $a_i \in R_i$, where for each player $j$ and action $a_j \in R_j$, there is a mixed strategy $\sigma_{-j} \in \Delta(A_{-j})$ satisfying $\sigma_{-j}[R_{-j}]=1$, and
\[u_i(a_i,\sigma_{-j},\theta^\infty) - u_i(a'_i,\sigma_{-j},\theta^\infty) \geq \delta \quad \forall a_i'\neq a_i.\]

Now consider an arbitrary set $\mathscr{C} \subseteq \{\theta^\infty\}^\epsilon$ and a type $t_i$ with common certainty in the event that every player's first-order belief assigns probability 1 to $\mathscr{C}$. Write $\nu \in \Delta(\Theta)$ for the first-order belief of type $t_i$. For any action $a_j' \neq a_j$,
\begin{align*}
\int u_j (a_j, \sigma_{-j}, \theta) d\nu -  \int u_j(a_j', \sigma_j, \theta)d\nu & = \int u_j (a_j, \sigma_{-j}, \theta) d\nu - \int u_j (a_j, \sigma_{-j}, \theta) d\mu^\infty \\
& \quad \quad +  \int u_j (a_j, \sigma_{-j}, \theta) d\mu^\infty -  \int u_j (a'_j, \sigma_{-j}, \theta) d\mu^\infty \\
& \quad \quad \quad \quad +   \int u_j (a'_j, \sigma_{-j}, \theta) d\mu^\infty - \int u_j(a_j', \sigma_j, \theta)d\nu \\
& \geq  \int u_j (a_j, \sigma_{-j}, \theta) d\mu^\infty -  \int u_j (a'_j, \sigma_{-j}, \theta) d\mu^\infty \\
& \quad \quad  - \left \vert \int u_j (a_j, \sigma_{-j}, \theta) d\nu - \int u_j (a_j, \sigma_{-j}, \theta) d\mu^\infty \right \vert \\
& \quad \quad \quad \quad -  \left\vert \int u_j (a'_j, \sigma_{-j}, \theta) d\nu - \int u_j(a_j', \sigma_j, \theta)d\mu^\infty \right\vert \\
&  \geq \delta -  2 K \cdot d_P(\nu,\mu^\infty) \geq \delta - 2K\epsilon
\end{align*}
using in the penultimate inequality that $g:\Theta\rightarrow U$ has Lipschitz constant $K$. Since this bound on the payoff difference holds across all actions $a_j'\neq a_j$, the action $a_j$ is a  best reply to belief $\nu$ whenever $\epsilon \leq \delta / (2K)$.

This allows us to construct the lower bound
\begin{align*}
\underline{p}^n(a_i) & \geq Q^n\left(\left\{\bz_n \, : \, \mathscr{C}(\bz_n)\subseteq \{\theta^\infty\}^{\delta^\infty/(2K)}\right\}\right) \\
& =  Q^n \left(\left\{ \bz_n \, : \, \sup_{\theta\in \mathscr{C}(\bz_n)} \left\| \theta - \theta^\infty \right\|_\infty \leq \delta^\infty/(2K) \right\} \right) \\
& \geq 1- \frac{2K}{\delta^\infty}  \mathbb{E} \left(  \sup_{\theta\in \mathscr{C}(\bz_n)} \left\| \theta - \theta^\infty \right\|_\infty\right) 
\end{align*}
using Markov's inequality in the final line.

\clearpage

\renewcommand{\thesection}{\Alph{section}}
\renewcommand\thesubsection{\thesection.\arabic{subsection}}

\setcounter{section}{14}

\section{For Online Publication}

\subsection{Proof of Claim \ref{claim:Discontinuity}}

Fix an arbitrary $(\pi,q) \in (0,1)\times (1/2,1)$. Given data $\bz_n$, the posterior belief $\mu_{\pi,q}(\bz_n)$ assigns probability
\begin{equation} \label{eq:post1}
\hat{v}(\pi,q,\bz_n) := 1 / \left(1+ \frac{1-\pi}{\pi} \left(\frac{1-q}{q}\right)^{n(2\overline{\bz}_n-1)}\right)
\end{equation}
to $v=1$, where $\overline{\bz}_n=\frac1n \sum_{n'=1}^n z_{n'}$ denotes the average realization in the sequence $\bz_n$. 

Suppose without loss that $v=1$, and let $q^* \in (1/2,1)$ be the true frequency of $z=1$.  By the strong Law of Large numbers, there is a measure 1 set of sequences $\mathcal{Z}^*$ satisfying $\lim_{n \rightarrow \infty} (\frac1n \sum_{n'=1}^n z_{n'})= q^*$ for every $\bz = (z_1, z_2, \dots) \in \mathcal{Z}^*$. The expression in (\ref{eq:post1}) converges to 1 on this set for every $(\pi,q) \in (0,1) \times (1/2,1)$.  So Assumption \ref{ass:limitbelief} is satisfied, and the limiting belief $\mu^\infty$ assigns probability 1 to $v=1$. Since entering is not rationalizable for the Seller given common certainty in the event in that all players assign probability 1 to $v=1$, it follows that $\underline{p}(\infty)=\overline{p}(\infty)=0$. 

I show next that the probability $\overline{p}(n)$ converges to 1 as $n \rightarrow \infty$. 
Fix an arbitrary $n$, and define $\mathcal{Z}^\dag_n = \{\bz_n \mid \overline{\bold{z}}_n>1/2\}$ to be the set of length-$n$ sequences with majority realizations of $z=1$. For every $\bz_n \in \mathcal{Z}^\dag_n$, the expression $((1-q)/q)^{n(2\bold{z}_n-1)}$ is bounded between 1/2 and 1 on the domain $q\in (1/2,1)$, while the image of $(1-\pi)/\pi$ is all of $\mathbb{R}_+$. Thus, the display in (\ref{eq:post1}) ranges from zero to 1; that is,
\[\left\{\hat{v}(\pi,q,\bz_n) : \pi\in (0,1), q\in (1/2,1)\right\} = (0,1) \quad \forall \bz_n \in \mathcal{Z}^\dag_n.\]
 It follows that for every $\bz_n \in \mathcal{Z}^\dag_n$, there exist pairs $(\pi,q),(\pi',q') \in (0,1)\times (1/2,1)$ satisfying $\hat{v}(\pi,q,\bz_n)<p<\hat{v}(\pi',q',\bz_n)$. Entering is rationalizable for the Seller with a type that assigns probability $\hat{v}(\pi',q',\bz_n)$ to the high value, and which assigns probability 1 to the Buyer assigning probability $\hat{v}(\pi,q,\bz_n)$ to the high value. So entering is weakly $\mathscr{B}(\bz_n)$-rationalizable for every $\bz_n \in \mathcal{Z}^\dag_n$, implying $\overline{p}^n(a_i) \geq P^n(\mathcal{Z}^\dag_n)$.
 
 Again by the law of large numbers,  the measure of datasets with majority realizations of $z=1$ converges to 1 as $n \rightarrow \infty$; that is,
$
P^n\left(\mathcal{Z}^\dag_n\right)\rightarrow 1
$. So  $\lim_{n\rightarrow \infty} \overline{p}^n(a_i)=1$, as desired.

\subsection{Proof of Corollary \ref{corr:normal}}

First observe that $\delta^\infty=\beta-1$, since the action \emph{Strong} is $\delta$-strictly rationalizable for every $\delta < \beta-1$ and not for any $\delta \geq \beta-1$. It remains to determine $\mathbb{E}\left[\sup_{\theta' \in \mathscr{C}(Z^n)}\| \theta' - \theta^\infty\|_\infty\right]$. Write $\overline{Z}_n$ for the (random) empirical mean of $n$ signal realizations, and $\hat{\beta}_x(\bz_n)$ for the expectation of $\beta$ given signals $\bz_n$ and prior $\beta \sim \mathcal{N}(x,1)$. Then, using standard formulas for updating to Gaussian signals:
\begin{align*}
\mathbb{E}\left(\sup_{x\in [-\eta,\eta]} \vert \beta - \hat{\beta}_x(Z^n)\vert \right)
& = \mathbb{E} \left[ \max_{x \in [-\eta,\eta]} \left(\left\vert \beta - \frac{x + n\overline{Z}_n}{n+1} \right\vert \right)\right] 
\end{align*}
We can further bound the RHS as follows:
\begin{align*}
\mathbb{E} \left[ \max_{x \in [-\eta,\eta]} \left(\left\vert \beta - \frac{x + n\overline{Z}_n}{n+1} \right\vert \right)\right] 
 & \leq \mathbb{E} \left(\left\vert \beta -\frac{n\overline{Z}_n}{n+1}  \right\vert \right)+\max_{x \in [-\eta,\eta]} \left\vert \frac{x}{n+1}\right\vert \\
& = \mathbb{E} \left(\left\vert \beta -\frac{n\overline{Z}_n}{n+1} \right\vert \right)+ \eta/(n+1) \\
& \leq \mathbb{E} \left(\left\vert \beta -\overline{Z}_n \right\vert \right) + \mathbb{E}\left(\frac{\overline{Z}_n}{n+1}\right)+ \eta/(n+1) \\
& = \sqrt{\frac{2}{n\pi}} + \frac{\beta+\eta}{n+1}
\end{align*}
using in the final line the expected absolute deviation of the empirical mean of $n$ observations from a Gaussian distribution \citep{Geary}. Finally, the map $g: \Theta \rightarrow U$ has Lipschitz constant 1. Applying Proposition \ref{corr:ShrinkSet}, we have the desired bound.

\subsection{Proof of Corollary \ref{corr:Sanov}}

Fix arbitrary $\underline{\pi}, \overline{\pi},\underline{q},\overline{q}$ satisfying $0<\underline{\pi}<\overline{\pi}<1$ and $1/2<\overline{q}<\overline{q}<1$, and let $\mathcal{M}$ be the set of learning rules identified with $(\pi,q) \in  [\underline{\pi}, \overline{\pi}] \times [\underline{q},\overline{q}]$.  Entering is rationalizable for a Seller with common certainty that all players have first-order beliefs in $\mathscr{B}(\bz_n)$ if and only if there exist $\pi,\pi'\in [\underline{\pi},\overline{\pi}]$ and $q,q'\in [\underline{q},\overline{q}]$ satisfying
\begin{equation} \label{eq:piq}
\hat{v}(\pi,q,\bz_n) < p < \hat{v}(\pi',q',\bz_n).
\end{equation}
where $\hat{v}(\pi,q,\bz_n)$ is as defined in (\ref{eq:post1}). Let $\mathbb{Z}^*_n$ denote the set of all sequences $\bz_n$ satisfying (\ref{eq:piq}).
 
 Since the state space is binary, each empirical measure $\widehat{Q}(\bz_n) \in \Delta(\{0,1\})$ can be identified with its average realization $\overline{\bold{z}}_n$, which is also the probability assigned to $z=1$. The KL-divergence between $\widehat{Q}(\bz_n)$ and the actual signal-generating distribution $Q=(q^*,1-q^*)$ is 
\[D_{KL}(\widehat{Q}(\bz_n) \mid Q) = q^*\log\left(\frac{q^*}{\overline{\bold{z}}_n}\right) + (1-q^*)\log\left(\frac{1-q^*}{1-\overline{\bold{z}}_n}\right)\]
and this expression is monotonically increasing in $\vert \overline{\bold{z}}_n-q^*\vert$. Thus, to minimize the KL-divergence, we seek the value of $\overline{\bold{z}}_n$ closest to $q^*$ for which (\ref{eq:piq}) is satisfied.  

Suppose $\overline{\bold{z}}_n>1/2$.  By assumption, $\overline{\pi}>p$ and $\overline{q}>1/2$,  so $\hat{v}(\overline{\pi},\overline{q}, \bz_n)>p$. It remains to determine when $\hat{v}(\pi,q, \bz_n)<p$ is  satisfied for some other $(\pi,q)\in \mathcal{M}$. Since $\hat{v}(\pi,q, \bz_n)$ is monotonically increasing in both $\pi$ and $q$ for sequences $\bz_n$ satisfying $\overline{\bz}_n>1/2$ (and on the given domain for ($\pi,q$)), a necessary and sufficient condition is $\hat{v}(\underline{\pi},\underline{q}, \bz_n)<p$. Using (\ref{eq:post1}), this inequality requires
\begin{align*}
1 / \left(1+ \frac{1-\underline{\pi}}{\underline{\pi}} \left(\frac{1-\underline{q}}{\underline{q}}\right)^{n(2\overline{\bz}_n-1)}\right) <p
\end{align*}
which can be rewritten
\[\overline{\bold{z}}_n \leq \frac12 \left(1 + \frac1n \log_{(1-\underline{q})/\underline{q}}\left(\frac{\underline{\pi}}{1-\underline{\pi}} \cdot \frac{1-p}{p}\right)\right):= \overline{\bold{z}}_n^*.\]
Since $\overline{\bold{z}}_n^* \cdot n$ need not be an integer, the distribution $(\overline{\bold{z}}_n^*,1-\overline{\bold{z}}_n^*)$ may not be achievable by any empirical measure $\widehat{Q}_n$ for finite $n$. Thus, $Q^*_n$ is instead given by $(\lfloor \bold{z}^*_n \cdot n \rfloor/n, 1- (\lfloor \bold{z}^*_n \cdot n \rfloor/n)$, and
\[D_{KL}(Q^*_n \| Q) = q^*\log\left(\frac{q^*}{\lfloor \bold{z}^*_n \cdot n \rfloor/n}\right) + (1-q^*)\log\left(\frac{1-q^*}{1-\lfloor \bold{z}^*_n \cdot n \rfloor/n}\right)\]
Plugging in the given parameter values, and applying Proposition \ref{prop:Sanov}, yields the expression in the corollary.

\subsection{Examples Related to Theorem \ref{prop:Limit}} \label{app:More}

Part (a) of Theorem \ref{prop:Limit} provides a sufficient condition for the confidence set $[\underline{p}^n(a_i),\overline{p}^n(a_i)]$ to converge to certainty---$a_i$ is strictly rationalizable for type $t_i^\infty$---and Part (b) of Theorem \ref{prop:Limit} provides a necessary condition---$a_i$ is rationalizable for type $t_i^\infty$. The condition that $a_i$ is strictly rationalizable is not necessary, as I demonstrate in Section \ref{ex:NotNecessary}, and the condition that $a_i$ is rationalizable is not sufficient, as I demonstrate in Section \ref{app:RatNotSuff}.

In each of these examples, I assume (as in Section \ref{subsec:lower}) that the limiting belief $\mu^\infty$ is degenerate at a limiting parameter $\theta^\infty$, and players have common certainty of shrinking neighborhoods of this parameter. That is, for every realization $\bz_n$, players have common certainty in the event that players have first-order beliefs with support on $\mathscr{C}(\bz_n)$, where the support sets $\mathscr{C}(\bz_n)$ satisfy Assumption \ref{ass:ShrinkSupport}.

\subsubsection{Strict Rationalizability is Not Necessary} \label{ex:NotNecessary}
Consider the following complete information game
\[\begin{array}{ccc}
& a_3 & a_4 \\
a_1 & \theta,0 & \theta,0 \\
a_2 & 0,0 & 0,0\end{array}\]
and suppose that the limiting belief is degenerate at $\theta^\infty=1$. Then, the action $a_1$ is \emph{strictly dominant} for player 1 in the limiting complete information game, and also for all types with common certainty in the event that players have first-order beliefs with support on a small enough neighborhood of $\theta^\infty$. So Assumption \ref{ass:UniformConv} implies $ \lim_{n\rightarrow \infty}[\underline{p}^n(a_i),\overline{p}^n(a_i)]=\{1\}$. But action $a_1$ is not \emph{strictly rationalizable} for type $t_i^\infty$. 

\subsubsection{Rationalizability is not Sufficient}  \label{app:RatNotSuff}

I show next that rationalizability of $a_i$ for type $t_i^\infty$ is not sufficient for the analyst's confidence set for $a_i$ to converge to certainty. Section \ref{sec:exBoundary} provides a simple  example to this effect. Define $\Theta^{a_i}$ to be the set of parameter values $\theta$ such that $a_i$ is rationalizable for player $i$ in the complete information game indexed to $\theta$. If $a_i$ is on the boundary of $\Theta^{a_i}$, then common certainty of shrinking neighborhoods around $\theta^\infty$ does not guarantee rationalizability of $a_i$. More surprisingly, common certainty in arbitrarily small open sets within the \emph{interior} of $\Theta^{a_i}$ also does not guarantee rationalizability of $a_i$, and I provide an example of this in Section \ref{ex:NotSufficient}. (See also the working paper of \citet{satoru3} for a nice two-player example to this effect.)

\subsubsection{$\theta^\infty$ is on the Boundary of $\Theta^{a_i}$} \label{sec:exBoundary}

Consider the following two-player game, parametrized by $\theta \in [\underline{\theta}, \overline{\theta}]$ for some $\underline{\theta}<0<\overline{\theta}$:
\[\begin{array}{ccc}
& a & b\\
a & \theta, \theta & 0,0 \\
b & 0,0 & 1,1 
\end{array}\]
Suppose that the limiting parameter $\theta^\infty=0$, so that $a$ is rationalizable in the limiting complete information game, but not strictly rationalizable. It is straightforward to see that common certainty of shrinking neighborhoods of $\theta^\infty$ does not guarantee rationalizability of action $a$, as the type with common certainty of any $\theta'<0$ considers $a$ to be strictly dominated.

\subsubsection{$\theta^\infty$ is in the Interior of $\Theta^{a_i}$} \label{ex:NotSufficient}
But even if $\theta^\infty$ is not on the boundary of the set $\Theta^{a_i}$, it may be that common certainty of a shrinking neighborhood of $\theta^\infty$ does not guarantee rationalizability of $a_i$. Consider the following four-player game. Players 1 and 2 choose between actions in $\{a,b\}$, and player 3 chooses between matrices from $\{l,r\}$. Their payoffs are:
\begin{equation}
\label{matrix}
\begin{array}{ccc}
& a& b  \\
a& 1,1,0 & 0,0,0 \\
b& 0,0,0 & 0,0 ,0
\end{array} \quad \quad \quad \begin{array}{ccc}
& a& b  \\
a& 0,0,0 & 0,0,0 \\
b& 0,0,0 & 1,1 ,0
\end{array}
\end{equation}
\[(l) \quad \quad \quad \quad \quad \quad \quad \quad \quad \quad \,  (r)\]
A fourth player predicts whether players 1 and 2 chose matching actions or mis-matching actions. He receives a payoff of 1 if he predicts correctly (and 0 otherwise).\footnote{In more detail: player 4 chooses between $\{\text{Match},\,\, \text{Mismatch}\}$. His payoff from Match is 1 if players 1 and 2 choose the same action (both $a$ or both $b$) and 0 otherwise; his payoff from Mismatch is 1 if players 1 and 2 chose different actions ($a$ and $b$ or flipped), and 0 otherwise.} Player 4's action does not affect the payoffs of the other three players. 

Let the state space $\Theta=\mathbb{R}^{64}$ be the set of all payoff matrices given these actions, where the payoffs described above are a particular $\theta$. Match is clearly rationalizable for player 4 at $\theta$; it is also rationalizable for player 4 on a neighborhood of $\theta$ (in the Euclidean metric).\footnote{Suppose neither $l$ nor $r$ are strictly dominated for player 1; then, all actions are rationalizable for player 1-3, so Match is rationalizable for player 4. If either $l$ or $r$ is strictly dominated for player 1, then one of the following will be a rationalizable family: $\{l\} \times \{a\} \times \{a\} \times \{\mbox{Match}\}$, $\{l\} \times \{a,b\} \times \{a,b\} \times \{\mbox{Match}\}$, $\{r\} \times \{b\} \times \{b\} \times \{\mbox{Match}\}$, or $\{r\} \times \{a,b\} \times \{a,b\} \times \{\mbox{Match}\}$. Thus, Match is rationalizable for player 4. }

Nevertheless, I will show existence of a sequence of types for player 4 with common certainty in increasingly small neighborhoods of $\theta$, given which Match fails to be rationalizable. Along this sequence, player 4 believes that $a$ is uniquely rationalizable for player 1, while $b$ is uniquely rationalizable for player 2, so the action Match is strictly dominated.

Define $\theta_\varepsilon^1$ to be the following perturbation of the payoff matrix $\theta$ (with player 4's payoffs unchanged):
\begin{equation}
\label{perturbed1}
\begin{array}{ccc}
& a& b  \\
a& 1,1,0 & 0,0,0 \\
b& 0,0,0 & -\varepsilon,0 ,0
\end{array} \quad \quad \quad \begin{array}{ccc}
& a& b  \\
a& 0,0,-\varepsilon & 0,0,-\varepsilon \\
b& 0,0,-\varepsilon & 1,1,-\varepsilon 
\end{array}
\end{equation}
\[(l) \quad \quad \quad \quad \quad \quad \quad \quad \quad \quad \,  (r)\]

\noindent Let $\theta_\varepsilon^2$ correspond to the following payoff matrix (again with player 4's payoffs unchanged):
\begin{equation}
\label{perturbed2}
\begin{array}{ccc}
& a& b  \\
a& 1,1,-\varepsilon & 0,0,-\varepsilon \\
b& 0,0,-\varepsilon &0,0 ,-\varepsilon
\end{array} \quad \quad \quad \begin{array}{ccc}
& a& b  \\
a& -\varepsilon,0,0 & 0,0,0 \\
b& 0,0,0 & 1,1 ,0
\end{array}
\end{equation}
\[(l) \quad \quad \quad \quad \quad \quad \quad \quad \quad \quad \,  (r)\]
Let $\varepsilon>0$. If player 1 has common certainty in the state $\theta^1_\varepsilon$, then $a$ is his uniquely rationalizable action: $l$ strictly dominates $r$ for player 3, given which $a$ strictly dominates $b$ for player 1. By a similar argument, if player 2 has common certainty in the state $\theta^2_\varepsilon$, then $b$ is his uniquely rationalizable action. These statements hold for $\varepsilon$ arbitrarily small. Construct a sequence of types $(t^{\varepsilon_n}_4)$ for player 4, where each type $t^{\varepsilon_n}_4$ has common certainty that player 1 has common certainty in the state $\theta^1_{\varepsilon_n}$ and player 2 has common certainty in the state $\theta^2_{\varepsilon_n}$. Then, player 4 of type $t^{\varepsilon_n}_4$ has common certainty in an $\varepsilon$-neighborhood of $\theta$, but only one rationalizable action: Mismatch. Take $\varepsilon_n \rightarrow 0$ (with each $\varepsilon_n>0$) and the desired conclusion obtain: rationalizability of Match holds at $\lim_{n\rightarrow \infty} \varepsilon_n$ but fails to hold arbitrarily far out along the sequence $\varepsilon_n$. 

\subsection{Extension to Common $p$-Belief} \label{app:pbelief}

For each $q \in [0,1]$, define:
\[\underline{p}^{n,q}(a_i)= P^n \left(\left\{\bz_n\,: \, a_i\in S^\infty_i[t_i] \quad \forall t_i \in T_i^{\mathscr{B},q}\right\}\right).\]
where $q=1$ returns the definition of $\underline{p}^{n}(a_i)$ given in the main text.

\begin{proposition} \label{prop:speedRobust}  Suppose $a_i$ is strictly rationalizable for type $t_i^\infty$, and define
\[
  \delta^\infty := \sup \left\{ \delta \,: \, a_i \mbox{ is $\delta$-strictly rationalizable for type $t_i^\infty$}\right\}
\]
  noting that this quantity is strictly positive. Define
  \begin{equation} \label{eq:M}
  M:= \sup_{a,a'\in A, \theta, \theta'\in \Theta, j\in \mathcal{I}} \vert u_j(a,\theta) - u_j(a',\theta')\vert.
  \end{equation}
  Then, for every $n\geq 1$, and $q > M/(\delta^\infty+M)$,
\[
\underline{p}^{n,q}(a_i) \geq 1-\frac{2M\xi q}{\delta^\infty q - (1-q) M} \mathbb{E}\left(\sup_{\mu \in \mathcal{M}} d_P(\mu(Z^n),\mu^\infty)\right)
\]
\end{proposition}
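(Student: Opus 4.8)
The plan is to mirror the proof of Proposition~\ref{prop:speedRobust} in the main text. That proof had two ingredients: (i)~Lemma~\ref{lemm:eps}, which says that if $a_i$ is $\delta$-strictly rationalizable for $t_i^\infty$ and $\mathscr{B}\subseteq\{\mu^\infty\}^{\delta/(2K\xi)}$ then $a_i$ is rationalizable for every $t_i\in T_i^{\mathscr{B}}$; and (ii)~Markov's inequality applied to $\sup_{\mu\in\mathcal{M}}d_P(\mu(Z^n),\mu^\infty)$. So the first step is to prove a common-$q$-belief analogue of Lemma~\ref{lemm:eps}: if $a_i$ is $\delta$-strictly rationalizable for $t_i^\infty$ and $\mathscr{B}\subseteq\{\mu^\infty\}^\epsilon$ with $\epsilon\le(\delta q-(1-q)M)/(2M\xi q)$, then $a_i$ is rationalizable for every $t_i\in T_i^{\mathscr{B},q}$. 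Setting $\delta=\delta^\infty$ and noting that $q>M/(\delta^\infty+M)$ is exactly what makes this threshold positive, one then gets
\[
\underline{p}^{n,q}(a_i)\ \ge\ P^n\!\left(\Big\{\bz_n:\sup_{\mu\in\mathcal{M}}d_P(\mu(\bz_n),\mu^\infty)\le\tfrac{\delta^\infty q-(1-q)M}{2M\xi q}\Big\}\right)\ \ge\ 1-\frac{2M\xi q}{\delta^\infty q-(1-q)M}\,\mathbb{E}\!\left(\sup_{\mu\in\mathcal{M}}d_P(\mu(Z^n),\mu^\infty)\right),
\]
the last inequality by Markov, which is the claimed bound.

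For the $q$-belief analogue of Lemma~\ref{lemm:eps} I would reuse the iterated-dominance characterization (Proposition~\ref{prop:CDFX} of \citet{faingold}) together with the conjecture construction from the proof of Lemma~\ref{lemm:eps}. Given $a_j\in R_j$, a type $t_j$, and a deviation $\alpha_j$, build $\sigma_{-j}$ so that on the probability-$\ge q$ event that opponents' types lie in the relevant restricted sets it places mass on the pointwise maximizer $a_{-j}(\theta)\in\argmax_{a_{-j}\in R_{-j}}\big(u_j(a_j,a_{-j},\theta)-u_j(\alpha_j,a_{-j},\theta)\big)$, and on the complement on any profile in the $(k-1)$-th survivor set. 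Writing $h(\theta)$ for $u_j(a_j,a_{-j}(\theta),\theta)-u_j(\alpha_j,a_{-j}(\theta),\theta)$, the expected payoff advantage splits into a $\nu_j$-integral of $h$ and the contribution of the ``off'' event; since $|h|\le M$ and the off-event has mass $\le 1-q$, that contribution is controlled by a multiple of $(1-q)M$. On the ``on'' part one argues exactly as in Lemma~\ref{lemm:eps}: $h$ is $2K$-Lipschitz (Lemma~\ref{lemm:Lipschitz}), $d_W(\nu_j,\mu^\infty)\le\xi\,d_P(\nu_j,\mu^\infty)$ (Theorem~2 of \citet{metric}), $d_P(\nu_j,\mu^\infty)\le\epsilon$, and $\int h\,d\mu^\infty\ge\delta$ by $\delta$-strict rationalizability at $t_i^\infty$; combining, the advantage is at least $q\delta-(1-q)M-(\text{const})\cdot\xi\epsilon$ up to the exact packaging of the $q$-weights, nonnegative under the stated threshold, and feeding this back through Proposition~\ref{prop:CDFX} yields $a_i\in S_i^\infty[t_i]$ for every $t_i\in T_i^{\mathscr{B},q}$.

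The main obstacle is the bookkeeping of the belief hierarchy under $q$-belief rather than certainty. Under common certainty the sets $B_j^{k,1}(\mathscr{B})$ nest so that $t_j\in T_j^{\mathscr{B}}$ literally assigns probability $1$ to opponents lying in $T_{-j}^{\mathscr{B}}$, which is what closes the induction in Lemma~\ref{lemm:eps}; with $q<1$ a naive induction bleeds $(1-q)$ at every level and degrades like $1-k(1-q)$. One therefore has to route the argument through a ``one-level-down''/$q$-evident formulation — for instance by first proving a $q$-analogue of Lemma~\ref{lemm:UWdistance} bounding $d_i^{UW}(t_i^\infty,t_i)$ by a quantity of order $\epsilon+(1-q)$ that does \emph{not} compound with the level, and then quantifying how much $\delta$-strictness of rationalizability survives a uniform-weak perturbation of that size. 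It is in this last conversion — from a uniform-weak perturbation to a loss in the payoff advantage — that the payoff range $M$ (rather than the Lipschitz constant $K$) enters the constants, consistent with the remark in Section~\ref{sec:extensions} that the extension ``relies on boundedness of the payoff range.'' Carrying the normalizations through so that the factor $q$ and the term $(1-q)M$ land exactly as in the statement is then routine once the non-compounding estimate is in hand.
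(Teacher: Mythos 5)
Your first two paragraphs reproduce the paper's own proof essentially verbatim in outline: establish a common-$q$-belief analogue of Lemma \ref{lemm:eps} with threshold $\epsilon \le (\delta^\infty q-(1-q)M)/(2M\xi q)$; construct the conjecture as the pointwise maximizer $a_{-j}(\theta)$ over $R_{-j}$ on the event that opponents' types lie in the restricted set and arbitrarily (within the survivor sets) off that event; bound the off-event contribution by $(1-q)M$ and the on-event contribution via the Lipschitz/Wasserstein/Prokhorov chain already used in Lemma \ref{lemm:eps}; then apply Markov's inequality. You also correctly identify $q>M/(\delta^\infty+M)$ as exactly the condition making the threshold positive. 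That is the paper's argument.

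The third paragraph, however, manufactures a difficulty that is not present and then proposes an unnecessary detour to address it. The induction in the argument is the induction of Proposition \ref{prop:CDFX} over rounds of elimination $k$ with a \emph{fixed} candidate family $(R_j)_{j\in\mathcal{I}}$; it is not an induction over levels of the belief hierarchy, so nothing ``bleeds $(1-q)$ at every level.'' What closes the argument in one step is that the common-$q$-belief type set is itself $q$-evident in the sense of \citet{monderersamet}: every $t_j\in T_j^{\mathscr{B},q}$ has first-order belief in $\mathscr{B}$ (this is built into $B_j^{1,q}$) and assigns probability at least $q$ to $\Theta\times T_{-j}^{\mathscr{B},q}$, so at every elimination round the same one-step estimate $q(\delta^\infty-2M\xi\epsilon)-(1-q)M\ge 0$ suffices, with no compounding across levels. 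Your proposed alternative---a $q$-analogue of Lemma \ref{lemm:UWdistance} bounding $d^{UW}_i(t_i^\infty,t_i)$ by a quantity of order $\epsilon+(1-q)$, followed by a conversion from uniform-weak perturbations to payoff losses---is not what the paper does, and as sketched it would not obviously deliver the stated constants: the paper itself notes, in the discussion after Theorem \ref{prop:Limit}, that lower hemi-continuity of strict rationalizability in the uniform-weak topology is only qualitative, which is precisely why Lemma \ref{lemm:eps} bypasses the topology and works directly with the best-reply characterization. Drop the detour and execute your second paragraph literally; that is the proof.
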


\begin{proof}
I first demonstrate a lemma analogous to Lemma \ref{lemm:eps}.

\begin{lemma}
Suppose $a_i$ is $\delta^\infty$-strictly rationalizable for player $i$ of type $t_i^\infty$. Let $\mathscr{B}\subseteq \Delta(\Theta)$ by any set satisfying
\[\sup_{\nu \in \mathscr{B}} d_P(\nu,\mu^\infty)\leq \frac{\delta^\infty q - (1-q) M}{2M\xi q}\] where $M$ is as defined in (\ref{eq:M}) and $\xi$ is as defined in (\ref{eq:xi}). Then, $a_i$ is rationalizable for all types $t_i \in T_i^{\mathscr{B},q}$.
\end{lemma}

\begin{proof}  The proof follows along similar lines to the proof of Lemma \ref{lemm:eps}. Fix $\epsilon >0$, and consider an arbitrary set $\mathscr{B} \subseteq \{\mu^\infty\}^\epsilon$.  I will show that $a_i$ is rationalizable for all types $t_i \in T_i^{\mathscr{B},q}$ when $\epsilon$ is sufficiently small and $q$ is sufficiently large.

By assumption, action $a_i$ is $\delta^\infty$-strictly rationalizable for player $i$ of type $t_i^\infty$. This implies that there exists a family of sets $(R_j)_{j \in \mathcal{I}} \subseteq \prod_{j \in \mathcal{I}} A_j$, where $a_i \in R_i$, and for every $a_j \in R_j$ there exists a $\sigma^\infty_{-j}: \Theta \rightarrow \Delta(A_{-j})$ satisfying
 \[\supp \sigma^\infty_{-j}(\theta) \subseteq R_{-j} \quad \forall \theta \in \Theta\]
 and  
\begin{equation} \label{eq:Inftyq}
\int_{\Theta} u_j(a_j, \sigma^\infty_{-j}(\theta), \theta) d\mu^\infty  - \int_{\Theta} u_j(a'_j, \sigma^\infty_{-j}(\theta), \theta) d\mu^\infty  \geq \delta^\infty \quad \forall a_j'\neq a_j
\end{equation}

Partition the set of types $T_j^{\mathscr{B},q}$ into those types whose first-order beliefs belong to $\mathscr{B}$
\[T^1_j := \left\{ t_j \in T_j^{\mathscr{B},q} \mid t_j^1\in \mathscr{B}\right\} \quad \forall j \in \mathcal{I}\]
 and all remaining types $\overline{T}^1_j := T_j^{\mathscr{B},q} \backslash T^1_j$. By construction, every type in $T_j^{\mathscr{B},q}$ assigns probability at least $q$ to $T_{-j}^1$. I will now show that there exists a family of sets $(V_j[t_j])_{j \in \mathcal{I},t_j \in T_j^{\mathscr{B},q}}$ with the property that for each $k\geq 1$, player $j$, type $t_j \in T_j^{\mathscr{B},q}$, action $a_j \in R_j$, and mixed strategy $\alpha_j \in \Delta(A_j \backslash \{a_j\})$, there exists a measurable $\sigma_{-j}: \Theta \times T_{-j}^{\mathscr{B},q} \rightarrow \Delta(A_{-j})$ with 
\begin{enumerate}
\item[(1)] $\supp \sigma_{-j}(\theta, t_{-j}) \subseteq V_{-j}[t_{-j}] \quad \forall (\theta, t_{-j}) \in \Theta \times T^{\mathscr{B},q}_{-j}$
\item[(2)] $\int_{\Theta \times T^{\mathscr{B},q}_{-j}} [u_j(a_j, \sigma_{-j}(\theta, t_{-j}), \theta) - u_j(\alpha_j, \sigma_{-j}(\theta,t_{-j}),\theta] t_j[d\theta \times dt_{-j}] \geq 0$
\item[(3)] $V_j[t_j]=R_j$ for every player $j$ and type $t_j \in T^1_j$. 
\end{enumerate}
Since $a_i \in R_i$ by design, it follows from Proposition \ref{prop:CDFX} that for any type $t_i \in T_i^{\mathscr{B},q}$, the action $a_i \in S^k_i[t_i]$ for every $k$, and hence $a_i \in S^\infty_i[t_i]$, as desired.

Fix an arbitrary player $j$, $a_j \in R_j$, type $t_j \in T_j^{\mathscr{B},q}$, and $\alpha_j \in \Delta(A_j \backslash \{a_j\})$. Define $a_{-j}: \Theta \rightarrow A_{-j}$ to satisfy
\[a_{-j}(\theta) \in \argmax_{a_{-j}\in R_{-j}} (u_j(a_j,a_{-j},\theta)- u_j(\alpha_j,a_{-j},\theta)) \quad \forall \theta \in \Theta\]
Further define 
\begin{equation} \label{eq:h}
h(\theta) := u_j(a_j, a_{-j}(\theta), \theta) - u_j(\alpha_j, a_{-j}(\theta),\theta) \quad \forall \theta \in \Theta.
\end{equation}
Let $\sigma_{-j}: \Theta \times T_{-j}^{\mathscr{B},q} \rightarrow \Delta(A_{-j})$ be any conjecture with the property that $\sigma_{-j}(\theta, t_{-j})$ is a point mass at $a_{-i}(\theta)$ for every $(\theta, t_{-j}) \in \Theta \times T_{-j}^{\mathscr{B},1}$. The conjectures $\sigma_{-j}(\theta, t_{-j})$ for $(\theta, t_{-j})\notin \Theta \times T_{-j}^{\mathscr{B},1}$ are not explicitly specified. By definition,
\[\supp \sigma_{-j}(\theta,t_{-j}) \subseteq R_{-j} \quad \forall (\theta, t_{-j}) \in \Theta \times T_{-j}^{\mathscr{B},1}.\]
Then

\begin{align*}
\int_{\Theta \times T^1_{-j}}  &  u_j(a_j, \sigma_{-j}(\theta, t_{-j}), \theta) t_j[d\theta \times dt_{-j}] - \int_{\Theta \times T^1_{-j}} u_j(\alpha_j, \sigma_{-j}(\theta, t_{-j}), \theta) t_j[d\theta \times dt_{-j}] \\
& = \left(\int_{\Theta \times T^1_{-j}}    u_j(a_j, \sigma_{-j}(\theta, t_{-j}), \theta) t_j[d\theta \times dt_{-j}] + \int_{\Theta \times \overline{T}^1_{-j}}  u_j(a_j, \sigma_{-j}(\theta, t_{-j}), \theta) t_j[d\theta \times dt_{-j}]\right) -  \\
& \quad \quad \left(\int_{\Theta \times T^1_{-j}}    u_j(\alpha_j, \sigma_{-j}(\theta, t_{-j}), \theta) t_j[d\theta \times dt_{-j}] + \int_{\Theta \times \overline{T}^1_{-j}}  u_j(\alpha_j, \sigma_{-j}(\theta, t_{-j}), \theta) t_j[d\theta \times dt_{-j}]\right)\\
& = \left(\int_{\Theta \times T^1_{-j}}    u_j(a_j, a_{-j}(\theta), \theta) t_j[d\theta \times dt_{-j}] -  \int_{\Theta \times T^1_{-j}}    u_j(\alpha_j, a_{-j}(\theta), \theta) t_j[d\theta \times dt_{-j}] \right) + \\
& \quad \quad \left(\int_{\Theta \times \overline{T}^1_{-j}}  u_j(a_j, \sigma_{-j}(\theta,t_{-j}), \theta) t_j[d\theta \times dt_{-j}] - \int_{\Theta \times \overline{T}^1_{-j}}  u_j(\alpha_j, \sigma_{-j}(\theta, t_{-j}), \theta) t_j[d\theta \times dt_{-j}]\right)\\
& =  \int_{\Theta\times T_{-j}^1} h(\theta) t_j[d\theta\times dt_{-j}] + \\
& \quad \quad \left(\int_{\Theta \times \overline{T}^1_{-j}}  u_j(a_j, \sigma_{-j}(\theta,t_{-j}), \theta) t_j[d\theta \times dt_{-j}] - \int_{\Theta \times \overline{T}^1_{-j}}  u_j(\alpha_j, \sigma_{-j}(\theta, t_{-j}), \theta) t_j[d\theta \times dt_{-j}]\right)\\
& \geq \int_{\Theta\times T_{-j}^1} h(\theta) t_j[d\theta\times dt_{-j}] - M \int_{\Theta \times \overline{T}^1_{-j}} t_j[d\theta \times dt_{-j}] 
\end{align*}
where the final inequality follows from the definitions of $h$ (as given in (\ref{eq:h})) and $M$ (as given in (\ref{eq:M})). 

In the proof of Lemma \ref{lemm:eps}, we showed that the inequality in (\ref{eq:Inftyq}) implies $\int_\Theta h(\theta)t_j^1[d\theta] \geq \delta^\infty - 2M\xi \epsilon$. Since moreover $t_j$ assigns probability at least $p$ to the set $T^1_{-j}$, we can further bound
\[ \int_{\Theta\times T_{-j}^{\mathscr{B},1}} h(\theta) t_j[d\theta\times dt_{-j}] + M \int_{\Theta \times \overline{T}^1_{-j}} t_j[d\theta \times dt_{-j}]  \geq q(\delta^\infty - 2M\xi \epsilon) - (1-q)M\]
Thus, $a_j$ is a best reply for type $t_j$ so long as 
\[q(\delta^\infty - 2M\xi \epsilon) - (1-q)M \geq 0\]
which simplifies to
$
\epsilon \leq \frac{\delta^\infty q - (1-q) M}{2M\xi q}
$
This bound holds across all players $j$ and actions $a_j \in T_j^1$.
\end{proof}
Thus
\begin{align*}
\underline{p}^n(a_i) & \geq  P^n \left(\left\{ \bz_n \, : \, \sup_{\mu \in \mathcal{M}} d_P(\mu(\bz_n),\mu^\infty) \leq \frac{\delta^\infty q - (1-q) M}{2M\xi q} \right\} \right) \\
& \geq 1- \frac{2M\xi p}{\delta^\infty q - (1-q) M}  \mathbb{E} \left( \sup_{\mu \in \mathcal{M}} d_P(\mu(Z^n),\mu^\infty)  \right) 
\end{align*}
using Markov's inequality in the final line.
\end{proof}

\end{document}